\newcommand{\blind}{1}
\newtheorem{thm}{Theorem}[section]
\newtheorem{lem}[thm]{Lemma}
\newtheorem{prop}[thm]{Proposition}
\theoremstyle{definition}
\newtheorem{dfn}[thm]{Definition}
\newtheorem{asp}{Assumption}[section]
\newtheorem{cor}[thm]{Corollary}
\newtheorem{rem}[thm]{Remark}
\newtheorem{eg}{Example}
\newcommand{\field}[1]{\mathbb{#1}} 
\newcommand{\C}{\mathbb{C}}
\newcommand{\I}{\,\mathrm{i}}
\newcommand{\N}{\field{N}}
\newcommand{\T}{\,\mathrm{T}}
\newcommand{\ve}{\,\mathrm{vec}}
\newcommand{\var}{\,\mathrm{var}}
\newcommand{\cov}{\,\mathrm{Cov}}
\newcommand*\dif{\mathop{}\!\mathrm{d}}
\newcommand\mi{\mathrm{i}}
\newcommand\me{\mathrm{e}}
\newcommand{\R}{\field{R}}
\newcommand{\Z}{\field{Z}}
\newcommand{\abs}[1]{\lvert#1\rvert}
\newcommand{\norm}[1]{\lVert#1 \rVert}
\newcommand{\Bnorm}[1]{\Bigl\lVert#1\Bigr  \rVert}
\newcommand{\Babs}[1]{\Bigl \lvert#1\Bigr \rvert} 
\newcommand{\ep}{\epsilon} 
\newcommand{\sumn}[1][i]{\sum_{#1 = 1}^T}
\newcommand{\tsum}[2][i]{\sum_{#1 = -#2}^{#2}}
\providecommand{\abs}[1]{\lvert#1\rvert}
\providecommand{\Babs}[1]{\Bigl \lvert#1\Bigr \rvert} 
\newcommand{\uint}{\int^{1}_{0}}
\newcommand{\freqint}{\int^{\pi}_{-\pi}}
\newcommand{\tr}{{\rm Tr}}
\newcommand{\cum}{{\rm cum}}
\newcommand{\xt}{\bm{X}_{t, T}}
\newcommand{\yt}{\bm{Y}_{t, T}}
\newcommand{\fu}[1][u]{\bm{f}(#1, \lambda)}
\newcommand{\ffu}{\bm{\mathfrak{f}}(u, \lambda)}
\newcommand{\btheta}{\bm{\theta}}
\newcommand{\bzero}{\bm{0}}
\newcommand{\bI}{\bm{I}}
\newcommand{\bx}[1]{\bm{X}_{#1, T}}
\newcommand{\be}{\bm{\ep}}
\newcommand{\bp}{\bm{\phi}}
\newcommand{\pf}{\bm{f}_{\btheta}(\lambda)}
\newcommand{\pfu}{\bm{f}_{\btheta(u)}(\lambda)}
\newcommand{\spf}[1]{\bm{f}_{\btheta}^{#1}(\lambda)}
\newcommand{\pff}{\mathfrak{f}_{\btheta}(\lambda)}
\newcommand{\dlim}{\xrightarrow{d}}
\newcommand{\plim}{\rightarrow_{P}}
\begin{document}

\def\spacingset#1{\renewcommand{\baselinestretch}%
{#1}\small\normalsize} \spacingset{1}


\if1\blind
{
  \title{\bf Statistical Inference for Local Granger Causality}
  \date{August 4, 2021}
  \author{Yan Liu\thanks{
    The author gratefully acknowledge \textit{JSPS Grant-in-Aid for Young Scientists (B) 17K12652
and JSPS Grant-in-Aid for Scientific Research (C) 20K11719}}\hspace{.2cm}\\
    Institute for Mathematical Science, Waseda University\\
    Masanobu Taniguchi\thanks{
    The author gratefully acknowledge \textit{JSPS Grant-in-Aid for Scientific Research (S) 18H05290}}\hspace{.2cm}\\
    Institute for Science and Engineering, Waseda University\\
    and \\
    Hernando Ombao\thanks{
    The author gratefully acknowledge \textit{the KAUST Research Fund}}\hspace{.2cm}\\
    Statistics Program, \\
    King Abdullah University of Science and Technology
    }
  \maketitle
} \fi

\if0\blind
{
  \bigskip
  \bigskip
  \bigskip
  \begin{center}
    {\LARGE\bf Title}
\end{center}
  \medskip
} \fi

\bigskip
\begin{abstract}
Granger causality has been employed to investigate 
causality relations between components of stationary multiple time series.
We generalize this concept by developing statistical inference for 
local Granger causality for multivariate locally stationary processes.
Our proposed local Granger causality approach captures time-evolving 
causality relationships in nonstationary processes.
The proposed local Granger causality is well represented in the frequency 
domain and estimated based on the parametric time-varying spectral density matrix using the 
local Whittle likelihood.
Under regularity conditions, we demonstrate that 
the estimators converge to multivariate normal in distribution.
Additionally,
the test statistic for the local Granger causality is shown to be asymptotically distributed 
as a quadratic form of a multivariate normal distribution.
The finite sample performance is confirmed with several simulation studies for multivariate 
time-varying autoregressive models.
For practical demonstration, the proposed local Granger causality method uncovered new 
functional connectivity relationships between channels in brain signals. Moreover, the 
method was able to identify structural changes in financial data. 
\end{abstract}

\noindent%
{\it Keywords:}  
local Granger causality,
multivariate locally stationary processes,
time-varying spectral density matrix,
local Whittle likelihood, brain signals.

\spacingset{1.5} 
\section{Introduction}
Statistical inference for cause and effect remains at the forefront of many 
studies including biology, medicine, physical systems, environmental science, 
public health, policy and finance. However, there remain challenges on 
inference because causality is notoriously difficult to establish.
Granger causality, proposed in \cite{granger1963} and \cite{granger1969},
is a milestone of causal inference in dynamic models.
In broad terms, Granger causality from a time series $\{Y_t\}$ to 
another series $\{X_t\}$ measures the predictive ability from the series $\{Y_t\}$ to $\{X_t\}$.
If the predictive ability of $(X_s, Y_s)_{s<t}$ on ${X_t}$ is not different from the predictive ability of $(X_s)_{s<t}$ on ${X_t}$, then there is
``no Granger causal relationship'' 
from the series $\{Y_t\}$ to $\{X_t\}$.
Thus, Granger causality analysis is important for determining whether or not a set of variables
contains useful information for improving the prediction of another set of variables.

\cite{geweke1982} and \cite{geweke1984} considered measures of linear dependence and feedback between components of a multivariate time series
in both time and frequency domains.
\cite{hosoya1991} proposed a refinement of the above measures, which also has a well-defined representation in the frequency domain.
For nonstationary vector autoregressive (VAR) models, \cite{ssw1990} considered 
the Wald's statistic for the hypothesis testing problem and elucidated its nonstandard asymptotic distribution. 
Granger causality for nonstationary bivariate cointegrated processes has also been considered in \cite{gl1995}, among others.
A thorough treatment of the Granger causality for multivariate time series is discussed in \cite{l2005}.

In this paper, we propose a local Granger causality measure based on the locally stationary process.
A locally stationary process has a Cram{\'e}r-like representation but its  
transfer function is allowed to change over time. 
Formal models for the time-varying spectra of nonstationary processes 
have been developed since this concept was introduced in \cite{priestley1965}.
A more theoretically rigorous framework for multivariate locally stationary processes have been formulated in \cite{d2000}.
To estimate the time-varying spectral density of a locally stationary process,
\cite{ns1997} developed a wavelet estimator based on the pre-periodogram.
The parameter estimation for an evolutionary spectral is discussed in \cite{dg1998}.
Local inference for locally stationary time series was investigated in \cite{dahlhaus2009}.
Moreover, \cite{dp2009} constructed the estimation theory for the weak convergence of the empirical spectral processes.
To the best of our knowledge, despite the recent progress on models that capture nonstationary behavior, 
local Granger causality has not yet been developed.
To address this limitation, this paper undertakes the task of developing this local 
concept 
because many time series phenomena display Granger causality behavior 
that changes over the course of time
(e.g., electroencephalograms and stock market indices).
Thus, the contribution of this paper is a rigorous framework for statistical inference
for local Granger causality.

We focus on multivariate locally stationary processes
to develop the statistical inference for local Granger causality.
Statistical inference for multivariate stationary processes 
has been discussed in \cite{hannan1970}, \cite{tk2000}, \cite{ss2000} and references therein.
\cite{tpk1996} developed a nonparametric method to test the cross-relationships between multiple time series.
\cite{st2004} discussed the discriminant analysis for multivariate locally stationary processes
based on the likelihood ratio.
\cite{hos2004} proposed a SLEX model to develop a discriminant scheme that can extract local features of time series.
\cite{orsm2001} and \cite{osg2005}
developed models for bivariate and multivariate nonstationary data using the SLEX basis which consists 
of well-localized Fourier-like waveforms.

As noted, the goal of this paper is to develop statistical inference for local Granger causality 
for multivariate locally stationary processes.
In particular, local Granger causality is expressed in the frequency domain
using the foundational ideas on Granger causality
for stationary processes. 
We develop a procedure for parameter estimation based on the local Whittle likelihood
and 
derive the asymptotic distribution of the estimators.
Under regularity conditions, 
the estimates are shown to converge to multivariate normal in distribution.
Parametric Granger causality, however, converges to normal
or 
a quadratic form of normal random variables,
which depends on the gradient of the causality measure.
Several simulation studies were conducted to evaluate the finite sample performance 
for multivariate time-varying autoregressive models.
To illustrate the potential impact of the proposed work, we analyzed the log-returns of the financial data and
multichannel electroencephalogram (EEG) data. Using the proposed method, the local Granger causality analyses 
produced new insightful results in the data analyses.

The remainder of the paper is organized as follows.
In Section \ref{sec:2}, we propose the local Granger causality.
The properties of the local Granger causality are detailed immediately behind the definition.
In Section \ref{sec:3}, we develop statistical inference for local Granger causality
based on the local Whittle estimation for multivariate locally stationary processes.
Numerical results 
on finite sample performance of the estimator and the test statistic for local causality are reported in Section \ref{sec:4}.
In Section \ref{sec:5}, we apply the proposed local Granger causality to EEG data and financial data.
The proofs for the theoretical results in Section \ref{sec:3} are relegated to Section \ref{sec:6} in supplement.

\subsection{Notations}
$O_{m \times M}$ denotes a $m \times M$ zero matrix;
$I_p$ denotes the $p \times p$ identity matrix; 
For any matrix $A$, let $\norm{A}_{\infty}: = \max_{1 \leq i \leq p} \sum_{j = 1}^p \abs{a_{ij}}$.
For a square matrix $A$, $\abs{A}$ denotes its determinant. 
$\dlim$ denotes the convergence in distribution.
Additionally, let $l$ be a function such that
\[
l(j): = 
\begin{cases}
1, &\qquad \abs{j} \leq 1, \\
\abs{j} \log^{1 + \kappa}\abs{j}, &\qquad \abs{j} > 1,
\end{cases} 
\]
for some constant $\kappa > 0$.

\section{Local Granger Causality} \label{sec:2}
In this section, we introduce the 
concept of local Granger causality (LGC) in the framework of locally stationary processes.
Let $\xt = (X_{t, T}^{(1)}, \dots, X_{t, T}^{(p)})^{\T}$ be a sequence of $p$-dimensional multivariate stochastic processes
\begin{equation} \label{eq:2.1}
\bx{t} = \tsum[j]{\infty} A_{t, T}(j) \be_{t - j},
\end{equation}
where the sequences $\{A_{t, T}(j)\}_{j \in \Z}$ satisfy the following conditions:
there exists a positive constant $C_A$ such that
\[
\sup_{t, T} \norm{A_{t, T}(j)}_{\infty} \leq \frac{C_A}{l(j)},
\]
and there exists a sequence of functions $A(\cdot, j): [0, 1] \to \R$ such that
\begin{enumerate}[(i)]
\item $\sup_{u} \norm{A(u, j)}_{\infty} \leq \frac{C_A}{l(j)}$;
\item $\sup_j \Bnorm{A_{t, T}(j) - A\Bigl(\frac{t}{T}, j\Bigr)}_{\infty} \leq \frac{C_A}{l(j)} T^{-1}$;
\item $V\Bigl(\norm{A(\cdot, j)}_{\infty}\Bigr) \leq \frac{C_A}{l(j)}$,
\end{enumerate}
where $V(f)$ is the total variation of the function $f$ on the interval $[0, 1]$, i.e., $V$  is defined as
\[
V(f) = \sup\Bigl\{
\sum_{k = 1}^m \abs{f(x_k) - f(x_{k - 1})}; \,
0 \leq x_0 < \cdots < x_m \leq 1, \, m \in \N
\Bigr\}.
\]
The process \eqref{eq:2.1} is usually referred to as the multivariate locally stationary process.
We impose the following assumptions on the process \eqref{eq:2.1} for the estimation theory later on.

\begin{asp} \label{asp:2.1}
For the process in \eqref{eq:2.1},
let  $\be_t$ be independent and identically distributed with
$E \be_t = \bzero$ and $E \be_t \be_t^{\top} = K$,
where the matrix $K$ exists and all elements are bounded by $C_K$.
Furthermore, all elements in the $r$th moment of $\be_t$ exist and bounded by $C_{\bm{\ep}}^{(r)}$.
There exists a finite constant $C > 0$ such that  $C_{\bm{\ep}}^{(r)} < C$.
\end{asp}


Let $m$ and $M$ be two positive integers such that $p = m + M$.
Suppose $\bm{X}_{t, T} = \bigl({\bm{X}_{t, T}^{(1)}}^{\top}, {\bm{X}_{t, T}^{(2)}}^{\top}\bigr)^{\top}$,
$\bm{X}_{t, T}^{(1)} \in \R^m$, $\bm{X}_{t, T}^{(2)} \in \R^M$, 
has the time-varying spectral density matrix $\fu$ with the partition
\begin{equation} \label{eq:2.11}
\fu = 
\begin{pmatrix}
\fu_{11} & \fu_{12} \\
\fu_{21}& \fu_{22}
\end{pmatrix}
:=
\frac{1}{2\pi} A(u, \lambda) K A(u, -\lambda)^{\top}, \qquad u \in [0, 1],
\end{equation}
where $A(u, \lambda): = \sum_{j = -\infty}^{\infty} A(u, j) \exp(\I j \lambda)$.
Let  $\Sigma(u)$ be the one-step-ahead prediction error covariance matrix
based on the time-varying spectral density matrix $\fu$ with the same partition.
By the Kolmogorov's formula for multiple time series, we have
\[
\det \Sigma(u) = \exp\Biggl(
\frac{1}{2\pi} \freqint
\log \Bigl(
\det 2 \pi \fu
\Bigr)
\dif \lambda
\Biggr), \qquad \text{for any $u \in [0, 1]$},
\]
(see \cite{hannan1970}, p.162).

Let $H(\tau_1, \tau_2) = \overline{\text{sp}}(\xt^{(1)}, 1 \leq t \leq \tau_1; \xt^{(2)}, 1 \leq t \leq \tau_2)$
be the closed linear subspace generated by 
$\{
\xt^{(1)}, 1 \leq t \leq \tau_1;
\xt^{(2)}, 1 \leq t \leq \tau_2
\}$.
Especially, we use $H(\tau_1, 0)$ and $H(0, \tau_2)$ to express the 
closed linear subspace generated by 
$\{
\xt^{(1)}, t \leq \tau_1
\}$,
$\{
\xt^{(2)}, t \leq \tau_2
\}$,
respectively.

Introducing a companion process
\begin{equation} \label{eq:r2.6}
\yt^{(2)} = 
\xt^{(2)} - E \bigl(\xt^{(2)} \mid H(t, t-1)\bigr),
\end{equation}
we propose the local Granger causality measure from $\{\xt^{(2)}\}$ to $\{\xt^{(1)}\}$ as
\begin{equation} \label{eq:2.15}
{\rm GC}^{(2 \to 1)}(u) = 
\frac{1}{2\pi} \freqint
{\rm FGC}(u, \lambda) d\lambda,
\end{equation}
where
\begin{equation*} 
{\rm FGC}(u, \lambda)  = 
\log \frac{ \abs{\fu_{11}}}
{\Babs{\fu_{11} - 2\pi \ffu_{12} \tilde{\Sigma}(u)_{22}^{-1} \ffu_{21} }}.
\end{equation*}
Here, $\ffu$ is the time-varying spectral density matrix of the process $\{\bigl({\bm{X}_{t, T}^{(1)}}^{\top}, {\bm{Y}_{t, T}^{(2)}}^{\top}\bigr)^{\top}\}$
(see Propositions \ref{prop:2.1} and \ref{prop:2.2} below),
and $\tilde{\Sigma}(u)$ is an $(M \times M)$-matrix 
\begin{equation*} 
\tilde{\Sigma}(u)_{22}
=
\Sigma(u)_{22} - \Sigma(u)_{21} \Sigma(u)_{11}^{-1} \Sigma(u)_{12}.
\end{equation*}

The proposal of the local Granger causality \eqref{eq:2.15} is motivated by Hosoya's measure of causality 
(a term first coined in \cite{gl1995}) 
in combination with the nonstationary version of Kolmogorov's formula by \cite{dahlhaus1996}.
The companion process $\{\bm{Y}_{t, T}^{(2)}\}$ in \eqref{eq:r2.6} is introduced in order to remove the possible effect brought by 
the nonorthogonality between residuals of predictions
$E\bigl(\xt^{(1)} \mid H(t - 1, t - 1) \bigr)$ and $E\bigl(\xt^{(2)} \mid H(t - 1, t - 1) \bigr)$.

We now start to explain properties of the proposed local Granger causality.
Denote by $\Sigma_{t, T}$ the one-step-ahead prediction error covariance matrix, i.e.,
\[
\Sigma_{t, T} = \var\bigl[\bm{X}_{t, T} - E\bigl(\xt \mid H(t - 1, t - 1) \bigr)\bigr].
\]

\begin{prop} \label{prop:2.1}
The companion process $\{\bm{Y}_{t, T}^{(2)}\}$ is a locally stationary process with the time-varying spectral density
\begin{equation} \label{eq:r3.2.6}
\ffu_{22} = \frac{1}{2\pi} \tilde{\Sigma}(u)_{22}.
\end{equation}
\end{prop}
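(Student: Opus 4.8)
The plan is to exhibit $\{\yt^{(2)}\}$ as a fixed lag-zero (``locally white'') linear transformation of the one-step prediction innovations of $\{\xt\}$, to read off its time-varying covariance as the Schur complement $\tilde{\Sigma}(u)_{22}$, and to conclude from the elementary fact that a locally stationary process which is uncorrelated across time has a time-varying spectral density that is constant in $\lambda$ and equal to $1/(2\pi)$ times its local covariance. First I would make the projection in \eqref{eq:r2.6} explicit. Set $\bm{\varepsilon}_{t,T} := \xt - E(\xt \mid H(t-1,t-1))$, so that $\var\bm{\varepsilon}_{t,T} = \Sigma_{t,T}$, and partition $\bm{\varepsilon}_{t,T} = (\bm{\varepsilon}_{t,T}^{(1)\top},\bm{\varepsilon}_{t,T}^{(2)\top})^{\top}$ conformably with $\xt$. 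Because $\xt^{(1)} = E(\xt^{(1)}\mid H(t-1,t-1)) + \bm{\varepsilon}_{t,T}^{(1)}$ and $H(t,t-1) = H(t-1,t-1)\vee \overline{\text{sp}}(\xt^{(1)})$, adjoining $\xt^{(1)}$ to $H(t-1,t-1)$ is the same as adjoining $\bm{\varepsilon}_{t,T}^{(1)}$, which is orthogonal to $H(t-1,t-1)$; together with $\bm{\varepsilon}_{t,T}\perp H(t-1,t-1)$ this gives
\[
\yt^{(2)} = \xt^{(2)} - E\bigl(\xt^{(2)}\mid H(t,t-1)\bigr) = \bm{\varepsilon}_{t,T}^{(2)} - \Sigma_{t,T,21}\Sigma_{t,T,11}^{-1}\bm{\varepsilon}_{t,T}^{(1)} = L_{t,T}\,\bm{\varepsilon}_{t,T},
\]
with $L_{t,T} := (-\Sigma_{t,T,21}\Sigma_{t,T,11}^{-1},\ I_M)$. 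A one-line block computation then gives $\var\yt^{(2)} = L_{t,T}\Sigma_{t,T}L_{t,T}^{\top} = \Sigma_{t,T,22} - \Sigma_{t,T,21}\Sigma_{t,T,11}^{-1}\Sigma_{t,T,12} =: \tilde{\Sigma}_{t,T,22}$, and since $\{\bm{\varepsilon}_{t,T}\}$ is uncorrelated across $t$ (as $\bm{\varepsilon}_{s,T}\in H(s,s)\subseteq H(t-1,t-1)$ for $s<t$), so is $\{\yt^{(2)}\}$.

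Next I would pass to the limit and verify local stationarity. By standard locally stationary process theory — the nonstationary Kolmogorov/Wold factorization of $\fu$ of \cite{dahlhaus1996}, compatible with the determinant identity recalled above, together with conditions (i)--(iii) on $A(\cdot,j)$ — the finite-sample innovation filter of $\{\xt\}$ agrees up to $O(T^{-1})$ with the time-varying causal factorization of $\fu$ at $u=t/T$; in particular $\Sigma_{t,T} = \Sigma(t/T) + O(T^{-1})$, the map $u\mapsto\Sigma(u)$ is of bounded total variation, and — under the nondegeneracy that $\fu$, hence $\Sigma(u)$ and its block $\Sigma(u)_{11}$, is positive definite (implicit already in invoking Kolmogorov's formula) — so is $u\mapsto L(u):= (-\Sigma(u)_{21}\Sigma(u)_{11}^{-1},\ I_M)$ and therefore $u\mapsto\tilde{\Sigma}(u)_{22} = L(u)\Sigma(u)L(u)^{\top}$. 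Writing the induced representation $\bm{\varepsilon}_{t,T} = \sum_j C_{t,T}(j)\be_{t-j}$ of the innovations in terms of the i.i.d.\ sequence $\{\be_t\}$ (its coefficients inheriting the decay and smoothness of the $A_{t,T}(j)$), I get $\yt^{(2)} = \sum_j L_{t,T}C_{t,T}(j)\,\be_{t-j}$, again of the form \eqref{eq:2.1}; hence $\{\yt^{(2)}\}$ is locally stationary with local covariance $\tilde{\Sigma}(u)_{22}$. Being uncorrelated across time, all its nonzero-lag local autocovariances vanish, which forces its time-varying spectral density to be constant in $\lambda$; matching the lag-zero term yields $\ffu_{22} = \frac{1}{2\pi}\tilde{\Sigma}(u)_{22}$, i.e.\ \eqref{eq:r3.2.6}.

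The hard part will be making rigorous the claims of the second step: that the exact finite-$T$ innovation covariance $\Sigma_{t,T}$, its Schur complement $\tilde{\Sigma}_{t,T,22}$, and the projection coefficient $\Sigma_{t,T,21}\Sigma_{t,T,11}^{-1}$ coincide to order $T^{-1}$ with the corresponding quantities of the stationary approximation indexed by $u=t/T$, and that they retain the uniform boundedness and bounded-total-variation regularity required by the definition of local stationarity. This is where the nonstationary Kolmogorov/Wold machinery of \cite{dahlhaus1996} and the perturbation bounds afforded by (i)--(iii) do the work; the only genuinely new ingredient is the invertibility of $\Sigma(u)_{11}$, needed so that $L(u)$ is well defined and smooth in $u$, and this follows from the positive-definiteness of the time-varying spectral density of $\{\xt^{(1)}\}$.
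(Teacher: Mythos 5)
Your proposal is correct and follows essentially the same route as the paper: both express $\yt^{(2)}$ as the lag-zero transformation $\bigl(-\Sigma_{t,T,21}\Sigma_{t,T,11}^{-1},\ I_M\bigr)$ applied to the one-step innovation of $\{\xt\}$, identify its variance with the Schur complement $\tilde{\Sigma}(u)_{22}$, and conclude that a temporally uncorrelated locally stationary sequence has the flat spectral density $\frac{1}{2\pi}\tilde{\Sigma}(u)_{22}$. The only difference is one of packaging: where you rederive the projection identity and the local-stationarity/regularity of the transformed process by hand, the paper simply cites Lemma 2.2 of \cite{hosoya1991} and Example 2.3 (i) of \cite{d2000} for those two steps.
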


\begin{proof}
From the definition of $\{\bx{t}\}$ in \eqref{eq:2.1}, we have
\[
\xt - E \bigl(\xt \mid H(t - 1, t - 1)\bigr) = A_{t, T}(0) \bm{\ep}_t.
\]
By the following formula (see Lemma 2.2 in \cite{hosoya1991})
\begin{multline*}
\bm{X}_{t, T}^{(2)} - E\bigl(\bm{X}_{t, T}^{(2)} \mid H(t, t - 1) \bigr)
= \\
\bigl\{\bm{X}_{t, T}^{(2)} - E\bigl(\bm{X}_{t, T}^{(2)} \mid H(t - 1, t - 1) \bigr)\bigr\}
-
\Sigma_{t, T, 21} \Sigma_{t, T, 11}^{-1}
\bigl\{\bm{X}_{t, T}^{(1)} - E\bigl(\bm{X}_{t, T}^{(1)} \mid H(t - 1, t - 1) \bigr)\bigr\},
\end{multline*}
we find that
\begin{equation*}
\yt^{(2)} = \bm{\ep}_t^{(2)} - \Sigma_{t, T, 21} \Sigma_{t, T, 11}^{-1} \bm{\ep}_t^{(1)}\\
=
\begin{pmatrix}
- \Sigma_{t, T, 21} \Sigma_{t, T, 11}^{-1} & \,\,\, \bm{I}_M
\end{pmatrix}
\bm{\ep}_t.
\end{equation*}
In view of Example 2.3 (i) in \cite{d2000}, $\{\bm{Y}_{t, T}^{(2)}\}$ is locally stationary.
A straightforward calculation gives the expression of $\ffu_{22}$ in \eqref{eq:r3.2.6}. 
\end{proof}

Let $\mathcal{H}^{(2)}(\tau)$ be the closed linear subspace generated by $\{\yt^{(2)}, 1 \leq t \leq \tau\}$.
The Hosoya measure is defined as
\begin{equation} \label{eq:2.8}
{\rm HM}^{(2 \to 1)}_{t, T}: =
\log
\frac{
\det \var\bigl[ \xt^{(1)} - 
E\bigl(\xt^{(1)} \mid H(t - 1, 0) \bigr) \bigr]}
{
\det \var\bigl[ \xt^{(1)} - 
E\bigl(\xt^{(1)} \mid \sigma\{H(t - 1, 0) \cup \mathcal{H}^{(2)}(t - 1)\}
\bigr)
\bigr]}.
\end{equation}
For any fixed $u \in [0, 1]$, the time-varying spectral matrix $\fu$ has a factorization 
\begin{equation} \label{eq:r3.3.6}
\bm{f}(u, \lambda) 
= 
\frac{1}{2 \pi} \bm{\Lambda}(u, \me^{-\mi\lambda}) \bm{\Lambda}(u, \me^{\mi\lambda})^*,
\quad
z \in \mathcal{D},
\end{equation}
(see \cite{rozanov1967}).

\begin{prop} \label{prop:2.2}
Suppose all eigenvalues of $A(u, \lambda) A(u, -\lambda)^{\top}$ are bounded from below by some constant $C > 0$
uniformly in $u$ and $\lambda$,
and all components of $A(u, \lambda)$ are differentiable in $u$ and $\lambda$ with bounded derivatives 
$(\partial/\partial u)(\partial/\partial \lambda) A(u, \lambda)_{ab}$ for $a, b \in \{1, 2, \dots, p\}$.
It holds that
\[
\abs{{\rm GC}^{(2 \to 1)}(t/T) - {\rm HM}^{(2 \to 1)}_{t, T}} = o_t(1) + O_T(1),
\]
where the $o_t(1)$ term is uniform in $T$ and the $o_T(1)$ term is uniform in $t$.
\end{prop}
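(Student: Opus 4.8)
The plan is to rewrite the time-domain quantity ${\rm HM}^{(2\to1)}_{t,T}$ as a difference of log-determinants of prediction error covariance matrices of two auxiliary locally stationary processes, to convert each of those log-determinants into a frequency integral of a log-determinant of the corresponding time-varying spectral density via the nonstationary Kolmogorov formula of \cite{dahlhaus1996}, and finally to collapse the resulting integrand to ${\rm FGC}(t/T,\lambda)$ by elementary block-matrix algebra.

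First I would recast \eqref{eq:2.8}. Put $\zt:=\bigl({\xt^{(1)}}^{\top},{\yt^{(2)}}^{\top}\bigr)^{\top}$. By \eqref{eq:r2.6} each $\bm{Y}_{s,T}^{(2)}$ equals $\bm{X}_{s,T}^{(2)}$ minus an element of $H(s,s-1)$, so an induction on $s$ shows that $\{\bm{X}_{s,T}^{(1)},\bm{Y}_{s,T}^{(2)}:1\le s\le t-1\}$ and $\{\bm{X}_{s,T}^{(1)},\bm{X}_{s,T}^{(2)}:1\le s\le t-1\}$ span the same closed linear subspace; in particular the conditioning subspace in the denominator of \eqref{eq:2.8} is the closed linear span of $\{\bm{Z}_{s,T}:1\le s\le t-1\}$. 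Hence that denominator equals $\det$ of the $(1,1)$-block of $\Sigma^{\bm{Z}}_{t,T}$, the error covariance for linearly predicting $\zt$ from $\{\bm{Z}_{s,T}:1\le s\le t-1\}$; and since $\yt^{(2)}$ is orthogonal to $H(t,t-1)$, hence to both the past of $\{\zt\}$ and $\xt^{(1)}$, the off-diagonal block of $\Sigma^{\bm{Z}}_{t,T}$ vanishes and its $(2,2)$-block is $\var(\yt^{(2)})=\tilde{\Sigma}_{t,T,22}:=\Sigma_{t,T,22}-\Sigma_{t,T,21}\Sigma_{t,T,11}^{-1}\Sigma_{t,T,12}$ (the computation in the proof of Proposition \ref{prop:2.1}). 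Writing $\Sigma^{(1)}_{t,T}$ for the numerator matrix of \eqref{eq:2.8}, i.e.\ the error covariance for predicting $\xt^{(1)}$ from $H(t-1,0)$, I obtain
\[
{\rm HM}^{(2\to1)}_{t,T}=\log\det\Sigma^{(1)}_{t,T}-\log\det\Sigma^{\bm{Z}}_{t,T}+\log\det\tilde{\Sigma}_{t,T,22}.
\]

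Next I would apply the nonstationary Kolmogorov formula of \cite{dahlhaus1996} to $\{\xt^{(1)}\}$ and to $\{\zt\}$, which are locally stationary of the form \eqref{eq:2.1} with time-varying spectral densities $\fu_{11}$ and $\ffu$ (for $\{\zt\}$ this uses Proposition \ref{prop:2.1} and the Lipschitz dependence on $t/T$ of $\Sigma_{t,T,21}\Sigma_{t,T,11}^{-1}$); the eigenvalue lower bound and smoothness hypotheses of Proposition \ref{prop:2.2} are exactly what guarantees these densities are uniformly nonsingular and regular enough for that formula, which then gives, at $u=t/T$,
\[
\log\det\Sigma^{(1)}_{t,T}=\frac{1}{2\pi}\freqint\log\det\bigl(2\pi\fu_{11}\bigr)\dif\lambda+o_t(1)+o_T(1),
\]
and the analogue with $(\Sigma^{(1)}_{t,T},\fu_{11})$ replaced by $(\Sigma^{\bm{Z}}_{t,T},\ffu)$; here the $o_t(1)$ term, from using the truncated past $1,\dots,t-1$ in place of the infinite past, is uniform in $T$ by $\sup_{t,T}\norm{A_{t,T}(j)}_\infty\le C_A/l(j)$, while the $o_T(1)$ term, from the nonstationarity, is uniform in $t$ by conditions (ii)--(iii). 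For the remaining summand, condition (ii) at $j=0$ and the eigenvalue bound give $\log\det\tilde{\Sigma}_{t,T,22}=\log\det\tilde{\Sigma}(t/T)_{22}+O(T^{-1})$, and since $\ffu_{22}=\tfrac{1}{2\pi}\tilde{\Sigma}(u)_{22}$ does not depend on $\lambda$ by \eqref{eq:r3.2.6}, this equals $\frac{1}{2\pi}\freqint\log\det(2\pi\ffu_{22})\dif\lambda+O(T^{-1})$. Adding the three identities, a short computation using $\ffu_{11}=\fu_{11}$, \eqref{eq:r3.2.6}, and the block-determinant identity $\det\ffu=\det\ffu_{22}\cdot\det(\ffu_{11}-\ffu_{12}\ffu_{22}^{-1}\ffu_{21})$ shows that the combined integrand equals ${\rm FGC}(u,\lambda)$ at $u=t/T$, whence ${\rm HM}^{(2\to1)}_{t,T}={\rm GC}^{(2\to1)}(t/T)+o_t(1)+o_T(1)$.

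The main obstacle is invoking the nonstationary Kolmogorov formula with exactly this double error structure. The delicate point is that $H(t-1,0)$ spans times $1,\dots,t-1$, which need not occupy a vanishingly small subinterval of $[0,1]$, so the frozen-in-time stationary approximation $\fu_{11}$ at $u=t/T$ is not uniformly accurate over the whole past. One resolves this by truncating the prediction problem at a fixed lag $L$, bounding the contribution of lags beyond $L$ uniformly in $(t,T)$ through the $l(j)$-decay of the coefficients, approximating the near-past (lags $\le L$) covariances by their frozen counterparts at $u=t/T$ with an $O(L/T)$ error via conditions (ii)--(iii), and then letting $L\to\infty$; this has to be carried out for both auxiliary processes, with all log-determinants kept under control by the uniform eigenvalue lower bound, which prevents the prediction error covariances from becoming singular.
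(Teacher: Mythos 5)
Your proposal is correct and follows essentially the same route as the paper: both rest on Proposition \ref{prop:2.1}, Hosoya's identification of the cross-spectrum of $\bigl({\bm{X}_{t,T}^{(1)}}^{\top},{\bm{Y}_{t,T}^{(2)}}^{\top}\bigr)^{\top}$, and the nonstationary Kolmogorov formula of Dahlhaus (1996) to convert both prediction-error determinants in \eqref{eq:2.8} into frequency integrals that combine to ${\rm FGC}(t/T,\lambda)$. The only (immaterial) difference is bookkeeping: the paper applies the Kolmogorov formula directly to the residual process, whose time-varying spectral density is the Schur complement $\ffu_{11}-\ffu_{12}\ffu_{22}^{-1}\ffu_{21}$, whereas you apply it to the joint companion process and then peel off the $\ffu_{22}$ factor via the block-determinant identity; your closing discussion of the uniform truncation argument behind the $o_t(1)+o_T(1)$ error is a sensible elaboration of what the paper delegates to an ``inspection'' of Dahlhaus's Theorem 3.2.
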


\begin{proof}
From Proposition \ref{prop:2.1}, we see that 
the process $\{\bigl({\bm{X}_{t, T}^{(1)}}^{\top}, {\bm{Y}_{t, T}^{(2)}}^{\top}\bigr)^{\top}\}$ is locally stationary.
In view of Lemma 2.3 in \cite{hosoya1991}, we see that the process has the time-varying spectral density matrix $\ffu$ with
$\ffu_{11} = \fu_{11}$ and 
\begin{equation} \label{eq:r4.2.9}
\ffu_{21} = \bm{\frak{f}}(u, -\lambda)_{12}^{\top} = \Bigl(
-\Sigma(u)_{21} \Sigma(u)_{11}^{-1} \quad I_M
\Bigr) \bm{\Lambda}(u, 0) \bm{\Lambda}(u, \me^{\mi \lambda})^{-1}
\begin{pmatrix}
\fu_{11}  \\
\fu_{12}
\end{pmatrix}.
\end{equation}
A direct computation shows that the process 
$\bigl\{\xt^{(1)} - 
E\bigl(\xt^{(1)} \mid \sigma\{H(t - 1, 0) \cup \mathcal{H}^{(2)}(t - 1)\}
\bigr)\bigr\}$
is still locally stationary and has the time-varying spectral density 
\[
\ffu_{11} - \ffu_{12} \ffu_{22}^{-1} \ffu_{21}.
\]
Inspection of Theorem 3.2 in \cite{dahlhaus1996} for the nonstationary version of Kolmogorov's formula reveals that
\begin{multline} \label{eq:r3.2.9}
\det \var\bigl[ \xt^{(1)} - 
E\bigl(\xt^{(1)} \mid H(t - 1, 0) \bigr) \bigr]
=\\
\exp\Biggl(
\frac{1}{2\pi} \freqint
\log \Bigl(
\det 2 \pi \bm{f}(t/T, \lambda)_{11}
\Bigr)
\dif \lambda
\Biggr) + o_t(1) + o_T(1),
\end{multline}
and
\begin{multline} \label{eq:r3.2.10}
\det \var\bigl[ \xt^{(1)} - 
E\bigl(\xt^{(1)} \mid \sigma\{H(t - 1, 0) \cup \mathcal{H}^{(2)}(t - 1)\}
\bigr)
\bigr]\\
=
\exp\Biggl(
\frac{1}{2\pi} \freqint
\log \Bigl(
\det 2 \pi \bigl(\bm{\frak{f}}(t/T, \lambda)_{11} - \bm{\frak{f}}(t/T, \lambda)_{12} 
\bm{\frak{f}}(t/T, \lambda)_{22}^{-1} \bm{\frak{f}}(t/T, \lambda)_{21}\bigr)
\Bigr)
\dif \lambda 
\Biggr) \\+ o_t(1) + o_T(1).
\end{multline}
Combining \eqref{eq:r3.2.9} and \eqref{eq:r3.2.10} yields the desired result. 
\end{proof}

The local Granger causality measure can be regarded as the limit of that constructed by the Wigner-Ville spectrum.
To be specific, let $\bm{f}_{t, T}(\lambda)$ be the Wigner-Ville spectrum of the process $\{\bm{X}_{t, T}\}$, i.e.,
\[
\bm{f}_{t, T}(\lambda) = 
\frac{1}{2\pi}
\sum_{s = - \infty}^{\infty}
\cov\bigl(
\bm{X}_{[t - s/2], T}, \bm{X}_{[t + s/2], T} 
\Bigr)\exp(-\mi \lambda s),
\]
(see \cite{mf1985}).
The measure of the Wigner-Ville spectrum now is
\[
{\rm GC}^{(2 \to 1)}_{t, T}
=
\frac{1}{2\pi} \freqint
{\rm FGC}_{t, T}(\lambda) \dif \lambda,
\]
where
\begin{equation*} 
{\rm FGC}_{t, T}(\lambda)  = 
\log \frac{ \abs{\bm{f}_{t, T}(\lambda)_{11}}}
{\Babs{\bm{f}_{t, T}(\lambda)_{11} - 2\pi \bm{\frak{f}}_{t, T}(\lambda)_{12} 
\tilde{\Sigma}_{t, T, 22}^{-1} 
\bm{\frak{f}}_{t, T}(\lambda)_{21} }}
\end{equation*}
and $\tilde{\Sigma}(u)$ is an $(M \times M)$-matrix 
\begin{equation*} 
\tilde{\Sigma}_{t, T, 22}
=
\Sigma_{t, T, 22} - \Sigma_{t, T, 21} \Sigma_{t, T, 11}^{-1} \Sigma_{t, T, 12}.
\end{equation*}

\begin{prop}
Suppose $\fu$ is uniformly Lipschiz continuous with respect to $u$ and $\lambda$.
For any sequence $t/T \to u$, 
We have 
\[
\Babs{
{\rm GC}^{(2 \to 1)}(u) - {\rm GC}^{(2 \to 1)}_{t, T}
}
=
o(1).
\]
\end{prop}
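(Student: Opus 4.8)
The plan is to reduce the claim to a single quantitative statement: that the Wigner--Ville spectrum converges, uniformly in $\lambda$, to the time-varying spectral density along any sequence $t/T\to u$. Both ${\rm GC}^{(2\to1)}(u)$ and ${\rm GC}^{(2\to1)}_{t,T}$ are $\tfrac{1}{2\pi}\freqint(\cdot)\dif\lambda$ of integrands of the \emph{same} algebraic form, the first built from the triple $\bigl(\bm f(u,\lambda),\bm{\mathfrak{f}}(u,\lambda),\tilde\Sigma(u)_{22}\bigr)$ and the second from $\bigl(\bm f_{t,T}(\lambda),\bm{\mathfrak{f}}_{t,T}(\lambda),\tilde\Sigma_{t,T,22}\bigr)$. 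So it suffices to prove
\[
\sup_{\lambda}\bigl\|\bm f_{t,T}(\lambda)-\bm f(u,\lambda)\bigr\|=o(1),\qquad
\sup_\lambda\bigl\|\bm{\mathfrak{f}}_{t,T}(\lambda)-\bm{\mathfrak{f}}(u,\lambda)\bigr\|=o(1),\qquad
\bigl\|\tilde\Sigma_{t,T,22}-\tilde\Sigma(u)_{22}\bigr\|=o(1)
\]
as $t/T\to u$, and then to check that the ${\rm FGC}$ functional is locally Lipschitz on the range of these arguments.

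First I would treat $\bm f_{t,T}$. Substituting the moving-average representation \eqref{eq:2.1} and $E\bm{\epsilon}_t\bm{\epsilon}_t^\top=K$ into the definition of $\bm f_{t,T}(\lambda)$ and comparing with $\bm f(u,\lambda)=\tfrac{1}{2\pi}A(u,\lambda)KA(u,-\lambda)^\top$, the difference splits into three pieces: (a) replacing each $A_{r,T}(j)$ by $A(r/T,j)$, which is $O(T^{-1})$ after summation against $\sum_j 1/l(j)<\infty$, by condition (ii); (b) replacing the time arguments $[t\pm s/2]/T$ by $u$, which is $o(1)$ by the total-variation bound (iii) together with $t/T\to u$ — the index shift $s/2$ in the Wigner--Ville kernel is harmless because the decay $\norm{A(u,j)}_\infty\le C_A/l(j)$ confines the sum to bounded $|s|$ up to an $o(1)$ tail; and (c) the truncation tail, again $o(1)$ by $\sum_j 1/l(j)<\infty$. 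The Lipschitz hypothesis on $\bm f(u,\lambda)$ is what upgrades the $o(1)$ in (b) to a bound uniform in $\lambda$. For $\Sigma_{t,T}$, recall from the proof of Proposition \ref{prop:2.1} that $\Sigma_{t,T}=A_{t,T}(0)KA_{t,T}(0)^\top$; since $\Sigma(u)$ is, via the factorization \eqref{eq:r3.3.6} and Kolmogorov's formula, the one-step prediction error determined by $\bm f(u,\cdot)$, the same argument (condition (ii) for the $O(T^{-1})$ part, continuity in $u$ of the canonical factorization — a consequence of the Lipschitz hypothesis — for the $o(1)$ part) gives $\Sigma_{t,T}\to\Sigma(u)$, hence $\tilde\Sigma_{t,T,22}\to\tilde\Sigma(u)_{22}$; and $\bm{\mathfrak{f}}_{t,T}(\lambda)\to\bm{\mathfrak{f}}(u,\lambda)$ uniformly in $\lambda$ follows by plugging these limits into the representation \eqref{eq:r4.2.9} and the companion-process identity of Proposition \ref{prop:2.1}.

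For the second ingredient, note that the determinants appearing in ${\rm FGC}(u,\lambda)$ are, by the well-posedness of the measure, bounded away from $0$ and $\infty$ uniformly in $(u,\lambda)$ (e.g.\ under the eigenvalue lower bound used in Proposition \ref{prop:2.2}); by the first step the same holds for their Wigner--Ville counterparts once $t/T$ is close enough to $u$. On such a compact set of positive-definite matrices, $X\mapsto X^{-1}$, the Schur complement, and $X\mapsto\log|X|$ are Lipschitz, so $\sup_\lambda\bigl|{\rm FGC}_{t,T}(\lambda)-{\rm FGC}(u,\lambda)\bigr|=o(1)$; integrating over the finite interval $[-\pi,\pi]$ then yields $\bigl|{\rm GC}^{(2\to1)}(u)-{\rm GC}^{(2\to1)}_{t,T}\bigr|=o(1)$.

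The main obstacle is the first step, and within it the uniform-in-$\lambda$ bookkeeping: one must arrange the split so that the $O(T^{-1})$ error from (ii), the modulus-of-continuity error from (iii) with $t/T\to u$, and the index shift $s/2$ are all summed against the $1/l(j)$ weights simultaneously — this is precisely where the choice $l(j)=|j|\log^{1+\kappa}|j|$ does the work. A secondary, milder point is the continuity in $u$ of the spectral factorization $\bm f(u,\cdot)\mapsto\bigl(\Sigma(u),\bm\Lambda(u,\cdot)\bigr)$, needed for the $\Sigma_{t,T}$ and $\bm{\mathfrak{f}}_{t,T}$ limits; it follows from the standard stability of the Wiener--Masani factorization under uniform perturbations, which the Lipschitz hypothesis supplies.
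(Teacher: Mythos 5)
Your route is genuinely different from the paper's. The paper does not attempt any uniform-in-$\lambda$ control of the Wigner--Ville spectrum; it invokes Theorem 2.2 of Dahlhaus (1996) to get only the $L^2$ statement $\freqint \abs{f^{\bm\alpha}_{t,T}(\lambda)-f^{\bm\alpha}(u,\lambda)}^2\dif\lambda=o(1)$ for the scalar compressions $f^{\bm\alpha}=\bm\alpha^*(\cdot)_{11}\bm\alpha$, upgrades this to $L^1$ by Cauchy--Schwarz, and then handles the nonlinearity of $\log\abs{\cdot}$ by a Taylor expansion around $\fu_{11}$, writing $\fu_{11}^{-1}=\bm B^*\bm B$ so that the trace term becomes a finite sum of such scalar quadratic forms and the $L^1$ bound can be integrated directly. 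In other words, the paper trades your ``uniform convergence $+$ Lipschitz functional'' scheme for ``$L^2$ convergence $+$ linearization of the log-determinant,'' which is exactly calibrated to the strength of the external result it cites.

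The genuine gap in your proposal is in the first step, item (b): you claim that replacing the time arguments $[t\pm s/2]/T$ by $u$ costs $o(1)$ ``by the total-variation bound (iii) together with $t/T\to u$.'' Bounded variation of $u\mapsto\norm{A(u,j)}_{\infty}$ does not give continuity of $A(\cdot,j)$ at $u$ (BV functions may jump at $u$), and the Lipschitz hypothesis of the proposition is on $\fu=\frac{1}{2\pi}A(u,\lambda)KA(u,-\lambda)^{\top}$, not on $A$ itself; continuity of the one-time products $\sum_j A(u,j)KA(u,j+s)^{\top}$ in $u$ does not control the two-time products $\sum_j A(v_1,j)KA(v_2,j+s)^{\top}$ with $v_1\ne v_2$ that actually appear in the Wigner--Ville covariances. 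So the asserted uniform-in-$\lambda$ convergence $\sup_\lambda\norm{\bm f_{t,T}(\lambda)-\fu}=o(1)$ is not established by the conditions you cite, and it is precisely to avoid proving such a statement that the paper falls back on the $L^2$ result of Dahlhaus. (A secondary point: your Lipschitz-functional step needs the denominator matrix $\fu_{11}-2\pi\ffu_{12}\tilde\Sigma(u)_{22}^{-1}\ffu_{21}$ bounded away from singularity, which you import from the eigenvalue hypothesis of Proposition \ref{prop:2.2} rather than from the hypotheses of this proposition; the paper's Taylor expansion has the same implicit need for $\fu_{11}^{-1}$, so this is a shared, minor issue rather than a defect specific to your argument.) If you replace your step one by the cited $L^2$ convergence and then adopt the paper's linearization of $\log\abs{\cdot}$, the rest of your outline goes through.
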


\begin{proof}
We only show that 
\[
\frac{1}{2\pi} \freqint 
\bigl(
\log \abs{\fu_{11}} - \log \abs{\bm{f}_{t, T}(\lambda)_{11}}
\bigr)
\dif \lambda
=
o(1),
\]
to see the difference in the numerator. The denominator can be proved similarly.

Let us consider the scalar process $\bm{\alpha}^{*} \bm{X}_{t, T}^{(1)}$ 
with the Wigner-Ville spectrum 
$f^{\bm{\alpha}}_{t, T}(\lambda) :=\bm{\alpha}^{*} \bm{f}_{t, T}(\lambda)_{11} \bm{\alpha}$ for any $\bm{\alpha} \in \C^m$.
Comparing it with $f^{\bm{\alpha}}(u, \lambda) = \bm{\alpha}^{*} \bm{f}(u, \lambda)_{11} \bm{\alpha}$, 
by Theorem 2.2 in \cite{dahlhaus1996}, we see that
\[
\freqint \abs{f_{t, T}^{\bm{\alpha}}(\lambda) - f^{\bm{\alpha}}(u, \lambda)}^2 \dif \lambda = o(1),
\]
which implies that
\begin{equation} \label{eq:r3.2.11}
\freqint \abs{f_{t, T}^{\bm{\alpha}}(\lambda) - f^{\bm{\alpha}}(u, \lambda)} \dif \lambda = o(1),
\end{equation}
since by the Cauchy-Schwarz inequality, we have
\[
\freqint \abs{f_{t, T}^{\bm{\alpha}}(\lambda) - f^{\bm{\alpha}}(u, \lambda)} \dif \lambda
\leq 
\sqrt{2 \pi} \Bigl(
\freqint \abs{f_{t, T}^{\bm{\alpha}}(\lambda) - f^{\bm{\alpha}}(u, \lambda)}^2 \dif \lambda
\Bigr)^{1/2}.
\]
By Taylor's expansion, we have
\begin{multline} \label{eq:r3.2.12}
\log\abs{\bm{f}_{t, T}(\lambda)_{11}}
=
\log \abs{\fu_{11}}
+
\tr\Bigl[
\fu_{11}^{-1} \bigl(
\bm{f}_{t, T}(\lambda)_{11} - \fu_{11}
\bigr)
\Bigr] \\
+
o\Bigl(
\tr\Bigl[
\fu_{11}^{-1} \bigl(
\bm{f}_{t, T}(\lambda)_{11} - \fu_{11}
\bigr)
\Bigr]
\Bigr).
\end{multline}
Remembering that $\bm{f}(u, \lambda) 
= 
\frac{1}{2 \pi} \bm{\Lambda}(u, \me^{-\mi\lambda}) \bm{\Lambda}(u, \me^{\mi\lambda})^*$ from \eqref{eq:r3.3.6}, 
we see that there exists an $m \times m$ Hermitian matrix $\bm{B}$ such that $\bm{f}(u, \lambda)^{-1}_{11} = \bm{B}^* \bm{B}$, 
and thus
\begin{equation} \label{eq:r3.2.13}
\tr\Bigl[
\fu_{11}^{-1} \bigl(
\bm{f}_{t, T}(\lambda)_{11} - \fu_{11}
\bigr)
\Bigr] 
=
\tr\Bigl[ \bm{B}
\bigl(
\bm{f}_{t, T}(\lambda)_{11} - \fu_{11}
\bigr) \bm{B}^*
\Bigr],
\end{equation}
which is a sum of quadratic forms $f_{t, T}^{\bm{\alpha}}(\lambda) - f^{\bm{\alpha}}(u, \lambda)$.
Applying \eqref{eq:r3.2.11} to \eqref{eq:r3.2.13} yields 
\[
\freqint \tr\Bigl[
\fu_{11}^{-1} \bigl(
\bm{f}_{t, T}(\lambda)_{11} - \fu_{11}
\bigr)
\Bigr] 
\dif \lambda
=o(1),
\]
and by observing \eqref{eq:r3.2.12}, we conclude that
\[
\frac{1}{2\pi} \freqint 
\log \abs{\fu_{11}} - \log \abs{\bm{f}_{t, T}(\lambda)_{11}}
\dif \lambda
=
o(1). 
\] 
\end{proof}

\begin{rem}
The construction of a linear predictor in practice for the locally stationary process may be of interest to some readers.
It can be shown that the predictor for the locally stationary process and that for the stationary approximation
are asymptotically equivalent under adequate conditions.
In contrast, we focus on the nonstationary version of Kolmogorov's formula found in \cite{dahlhaus1996}.
We elucidated that our local causality measure, as the limit of the measure constructed by the Wigner-Ville spectrum,
is a unique measure for multivariate locally stationary processes.
\end{rem}

\if0
To see the causal effect from $\bm{X}_{t, T}^{(2)}$ to 
an univariate process $\bm{X}_{t, T}^{(1)}$ (i.e., $m = 1$),
\cite{granger1963} and \cite{granger1969} introduced his famous causality measure
based on the difference of the variances of the following prediction errors:
\begin{equation} \label{eq:2.4}
{\rm GC}^{(2 \to 1)} = 
\var\bigl[ \xt^{(1)} - 
E\bigl(\xt^{(1)} \mid H(t - 1, \cdot)\bigr)\bigr]
-
\var\bigl[  \xt^{(1)} - 
E(\xt^{(1)} \mid H(t - 1, t - 1)
)
\bigr)
\bigr].
\end{equation}
Note that the so-called measure \eqref{eq:2.4} is nonnegative,
but it does not satisfy the mathematical conditions for being a metric. 

\if0
Granger causality in \eqref{eq:2.4}
has been extended to the setting where $\bm{X}_{t, T}^{(1)}$ is a multivariate process 
($m>1$). 
The notion of non-local Granger causality of 
a nonstationary process is characterized in the time domain as follows.

\begin{dfn}[\cite{l2005}, Proposition 11.3]
Let $\{\bm{X}_{t, T}\}$ be the tvARMA process as in Example \ref{eg:1}.
The series $\{\bm{X}_{t, T}^{(2)}\}$ is not Granger-causal for 
the series $\{\bm{X}_{t, T}^{(1)}\}$ if and only if
\[
B_{t ,T}^{\sharp}(L)_{12} = 
A_{t, T}^{\sharp}(L)_{12} A_{t, T}^{\sharp}(L)^{-1}_{22} B_{t, T}^{\sharp}(L)_{22},
\]
or equivalently, 
\begin{equation} \label{eq:2.5}
B_{t ,T}^{\sharp}(L)_{12} -
A_{t, T}^{\sharp}(L)_{12} A_{t, T}^{\sharp}(L)^{-1}_{22} B_{t, T}^{\sharp}(L)_{22}
=
O_{m \times M}.
\end{equation}
\end{dfn}

\begin{eg}
Consider now the hypothesis $H_0$
for non local Granger causality for a fixed $t$ $(1 \leq t \leq T)$.
The matrices $A_i^{\sharp}$ $(i = 0, \dots, r)$ and $B_j^{\sharp}$ $(j = 0, \dots, s)$ in \eqref{eq:2.2}
are usually unknown.
Let $\btheta = \ve{[A_{t, T}^{\sharp}; B_{t, T}^{\sharp}]}$, where 
\[
A_{t, T}^{\sharp} = \Biggl[
A_1^{\sharp}\Bigl(\frac{t}{T}\Bigr) \quad
A_2^{\sharp}\Bigl(\frac{t}{T}\Bigr) \quad
\cdots \quad
A_r^{\sharp}\Bigl(\frac{t}{T}\Bigr)
\Biggr],
\]
and
\[
B_{t, T}^{\sharp} = \Biggl[
B_1^{\sharp}\Bigl(\frac{t}{T}\Bigr) \quad
B_2^{\sharp}\Bigl(\frac{t}{T}\Bigr) \quad
\cdots \quad 
B_s^{\sharp}\Bigl(\frac{t}{T}\Bigr)
\Biggr].
\]
If we apply the vec operator to both sides of the equation \eqref{eq:2.5},
the hypothesis $H_0$ for non local Granger causality can be formulated as 
\begin{equation} \label{eq:2.6}
H_0: R_{t, T} \, \btheta = \bzero
\end{equation}
for some matrix $R_{t, T}$.
The linear hypothesis like \eqref{eq:2.6} for time series models 
has been thoroughly discussed in \cite{tk2000}.
\end{eg}
\fi

\cite{geweke1982} provided a natural extension of the measure \eqref{eq:2.4} as
\begin{equation} \label{eq:2.7}
{\rm GM}^{2\to 1}
:= \log
\frac{\det \var\bigl[ \xt^{(1)} - 
E\bigl(\xt^{(1)} \mid H(t - 1, \cdot)\bigr)\bigr]}
{
\det \var\bigl[  \xt^{(1)} - 
E(\xt^{(1)} \mid H(t - 1, t - 1)
)
\bigr)
\bigr]},
\end{equation}
which we shall call the Geweke measure.
The drawback of the Geweke measure is that 
it does not take into account the information on the one-step-ahead prediction error $\Sigma(u)$.

A natural refinement was proposed in \cite{hosoya1991},
which considered the effect from $\{\xt^{(2)} - E \bigl(\xt^{(2)} \mid H(t, t-1) \bigr)\}$
instead of $\{\xt^{(2)}\}$.
To illustrate this idea, define
\begin{equation} \label{eq:r2.6}
\yt^{(2)} = 
\xt^{(2)} - E \bigl(\xt^{(2)} \mid H(t, t-1)\bigr),
\end{equation}
and $\mathcal{H}^{(2)}(\tau)$ 
to be the linear closed subspace generated by $\{\yt^{(2)}, t \leq \tau\}$.
The Granger causality from $\bm{X}_{t, T}^{(2)}$ to $\bm{X}_{t, T}^{(1)}$ in \cite{hosoya1991} 
is rephrased as
\begin{equation} \label{eq:2.8}
{\rm HM}^{(2 \to 1)}: =
\log
\frac{
\det \var\bigl[ \xt^{(1)} - 
E\bigl(\xt^{(1)} \mid H(t - 1, \cdot) \bigr) \bigr]}
{
\det \var\bigl[ \xt^{(1)} - 
E\bigl(\xt^{(1)} \mid \sigma\{H(t - 1, \cdot) \cup \mathcal{H}^{(2)}(t - 1)\}
\bigr)
\bigr]},
\end{equation}
which we shall call the Hosoya measure. 
The representation in the frequency domain of the Hosoya measure is elucidated in (1.1) in \cite{hosoya1991}.

The Hosoya measure accounts for the correlation 
between the one-step-ahead-prediction errors of both processes $\bm{X}_{t, T}^{(1)}$ and $\bm{X}_{t, T}^{(2)}$. 
Moreover, both the Hosoya and Geweke measures are identical under the following conditions:
\begin{enumerate}[(i)]
\item if the joint process $\{\xt \}$ is an autoregressive process; or
\item if $\bm{X}_{t, T}^{(2)}$ does not cause $\bm{X}_{t, T}^{(1)}$ in the Granger sense.
\end{enumerate}

\if0
Roughly, for the tvARMA$(r, s)$ process \eqref{eq:2.2},
\begin{equation}
H(t - 1, t - 1) = 
\sigma\Bigl(\bm{X}_{t-1, T}, \bm{X}_{t-2, T}, \dots\Bigr); \label{eq:2.9} \quad \text{and}
\end{equation}
\begin{equation}  \label{eq:2.10}
\sigma\{H(t - 1, \cdot) \cup \mathcal{H}^{(2)}(t - 1)\} =
\sigma\Bigl(\bm{X}_{t-1, T}^{(1)}, \bm{X}_{t-2, T}^{(1)}, \dots, \bm{\ep}^{(2)}_{t - 1, T}, \bm{\ep}^{(2)}_{t - 2, T}, \dots\Bigr). 
\end{equation}


To compare the prediction errors of \eqref{eq:2.9} and \eqref{eq:2.10}, 
the process \eqref{eq:2.2} can be transformed into the causal expression in the evaluation of \eqref{eq:2.9};
the process \eqref{eq:2.2} can be transformed into the inverted expression in the evaluation of \eqref{eq:2.10}.
\fi

Especially  in the case $s = 0$, both 
the Geweke and Hosoya measures are equivalent.
However, if $s > 0$, then the subspaces generated by $H(t - 1, \cdot)$ and $H(\cdot, t-1)$
may not, in general, be orthogonal.
In other words, the one-step-ahead prediction error terms may be correlated to each other.
This is illustrated from Figure \ref{fig:2.1}.
We remark that the Hosoya measure is a natural refinement of the Geweke measure
in the sense that the former accounts for the correlation between prediction errors.

\begin{figure}[H]
 \subfigure 
 {\includegraphics[clip, width=0.25\columnwidth]{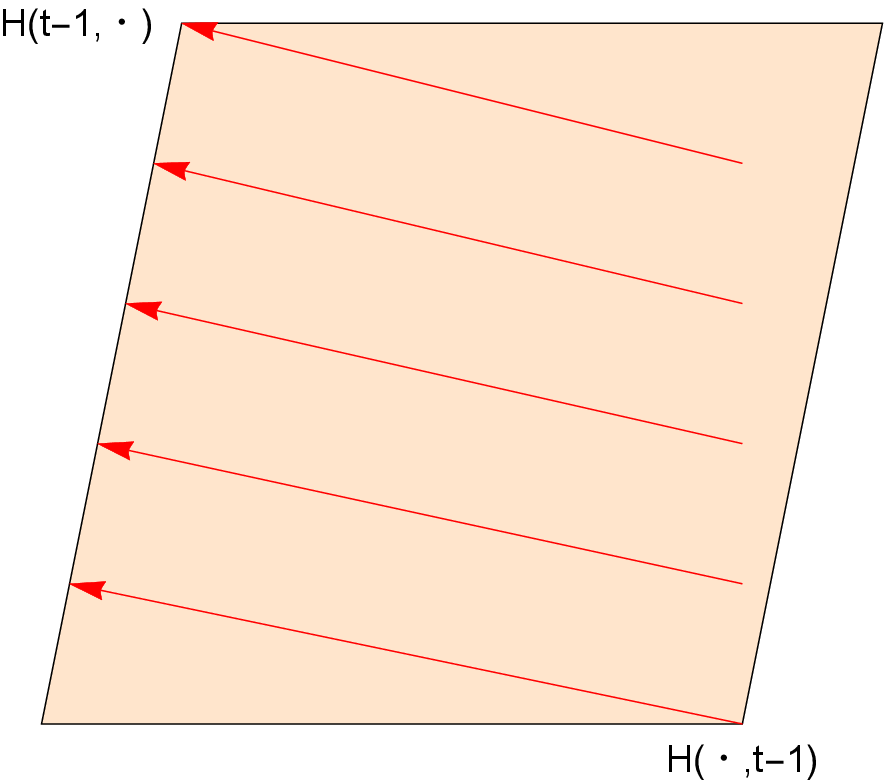}} \qquad \qquad \qquad 
 \subfigure 
 {\includegraphics[clip, width=0.25\columnwidth]{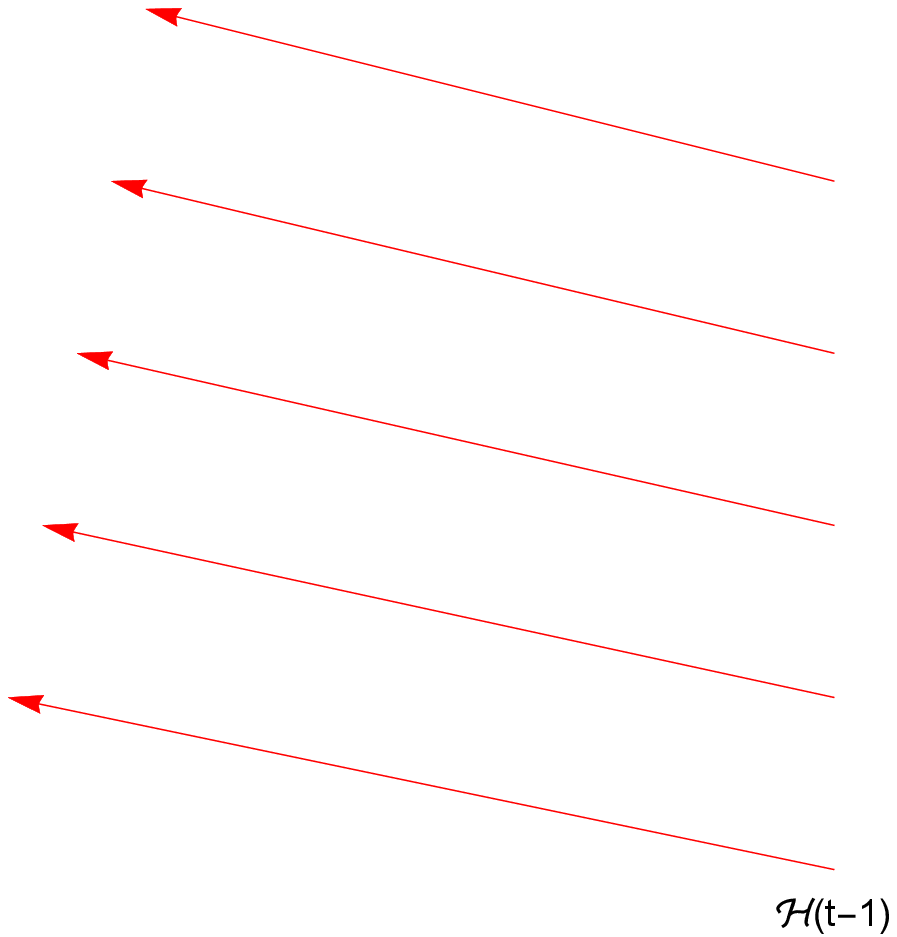}} 
 \caption{(Left) Subspaces generated by $H(t - 1, \cdot)$ and $H(\cdot, t-1)$.
 Red arrows show the subspace perpendicular to $H(t - 1, \cdot)$.
 (Right) The subspace $\mathcal{H}^{(2)}(t - 1)$ taken in the Hosoya measure instead of $H(\cdot, t-1)$.
 } \label{fig:2.1}
\end{figure}
The terms in the Geweke and Hosoya measures are shown in Figure \ref{fig:2.2}.
Note that the prediction errors
$\bigl[ \xt^{(1)} - E\bigl(\xt^{(1)} \mid H(t - 1, \cdot) \bigr) \bigr]$
are shown in blue; the prediction error
$\bigl[ \xt^{(1)} - E\bigl(\xt^{(1)} \mid H(t - 1, t - 1) \bigr) \bigr]$ (resp.\
$\bigl[ \xt^{(1)} - 
E\bigl(\xt^{(1)} \mid \sigma\{H(t - 1, \cdot) \cup \mathcal{H}^{(2)}(t - 1)\}
\bigr)
\bigr]$)
is shown in red.
They coincide when $\bm{X}_{t, T}^{(2)}$ does not cause $\bm{X}_{t, T}^{(1)}$ in the Granger sense (Figure \ref{fig:2.3}).
\begin{figure}[H]
 \subfigure 
 {\includegraphics[clip, width=0.28\columnwidth]{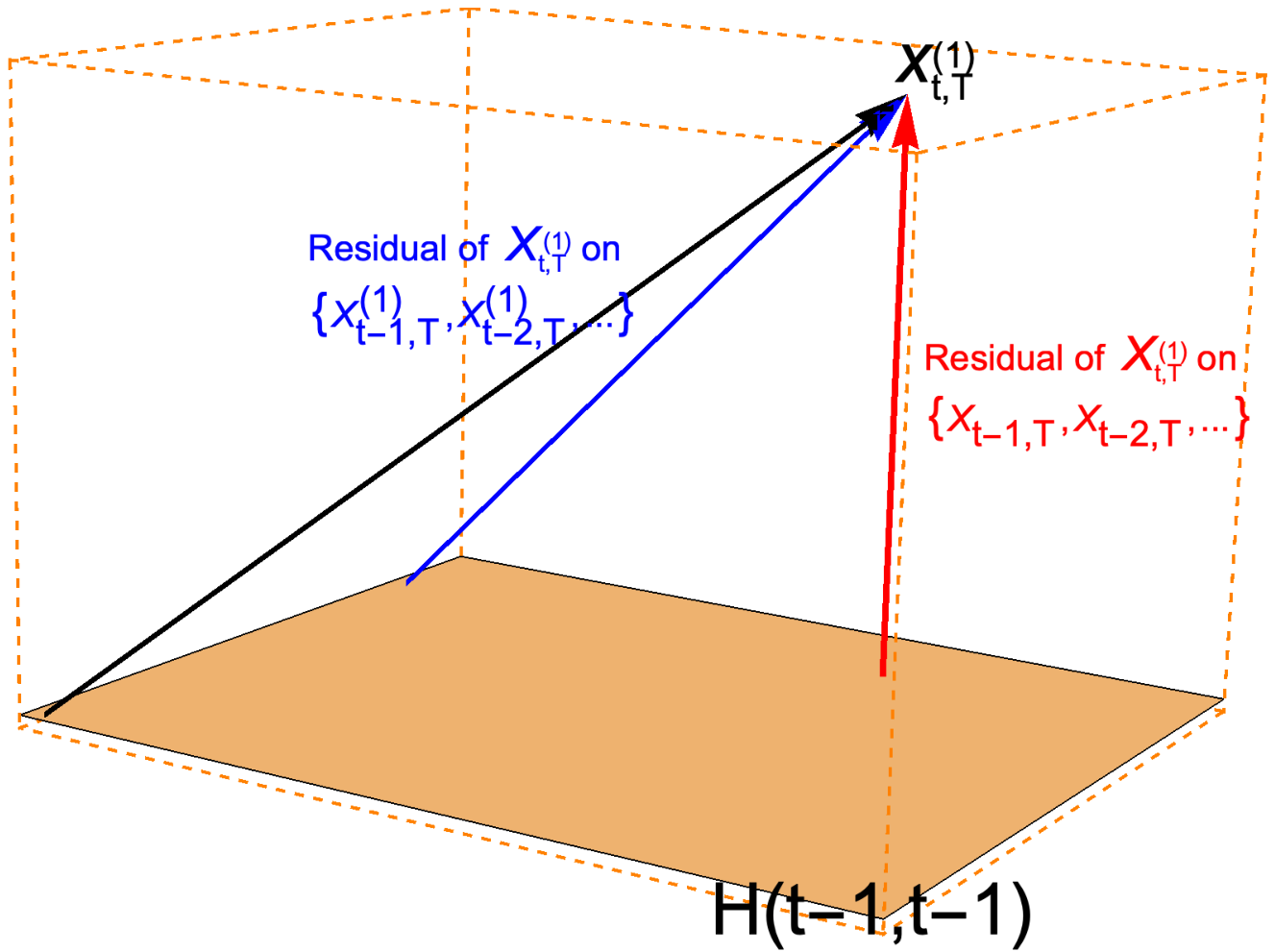}}  \qquad  \qquad 
 \subfigure 
 {\includegraphics[clip, width=0.3\columnwidth]{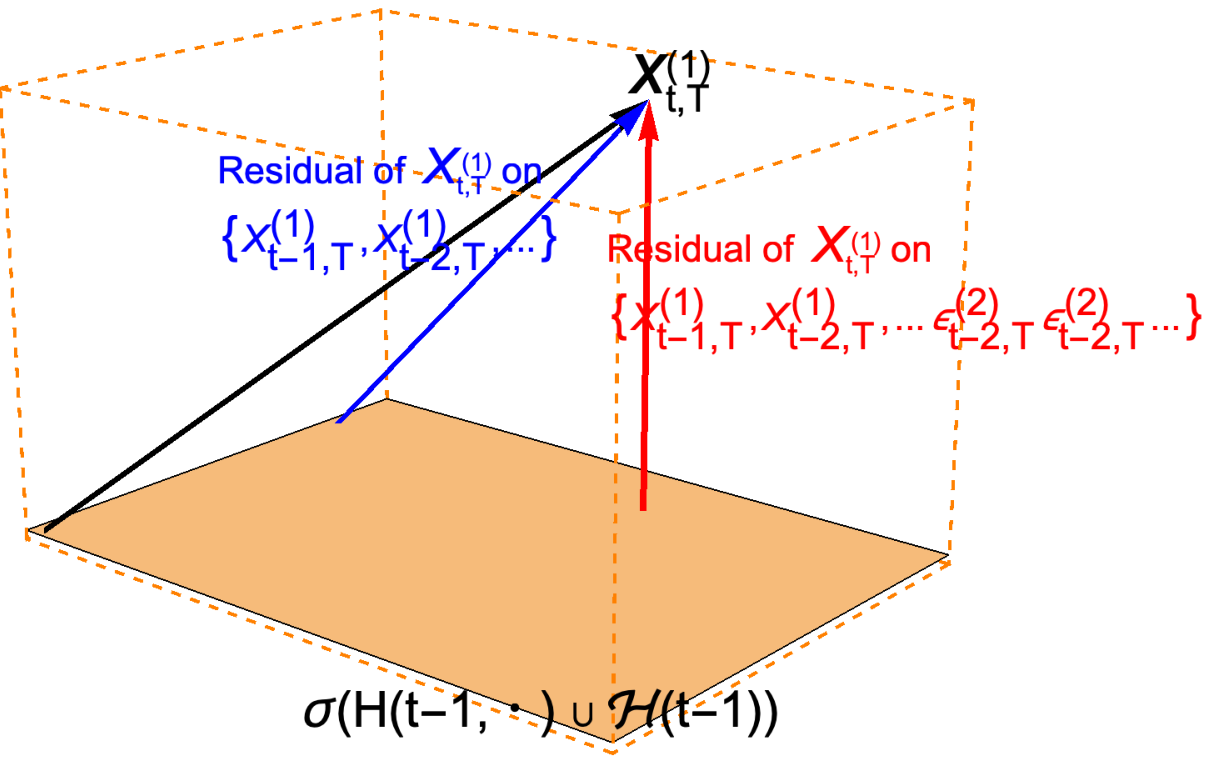}}  
 \vspace{-0.5cm}
 \caption{Geweke measure (left) and Hosoya measure (right).} \label{fig:2.2}
\end{figure}
\begin{figure}[H]
 \includegraphics[clip, width=0.22\columnwidth]{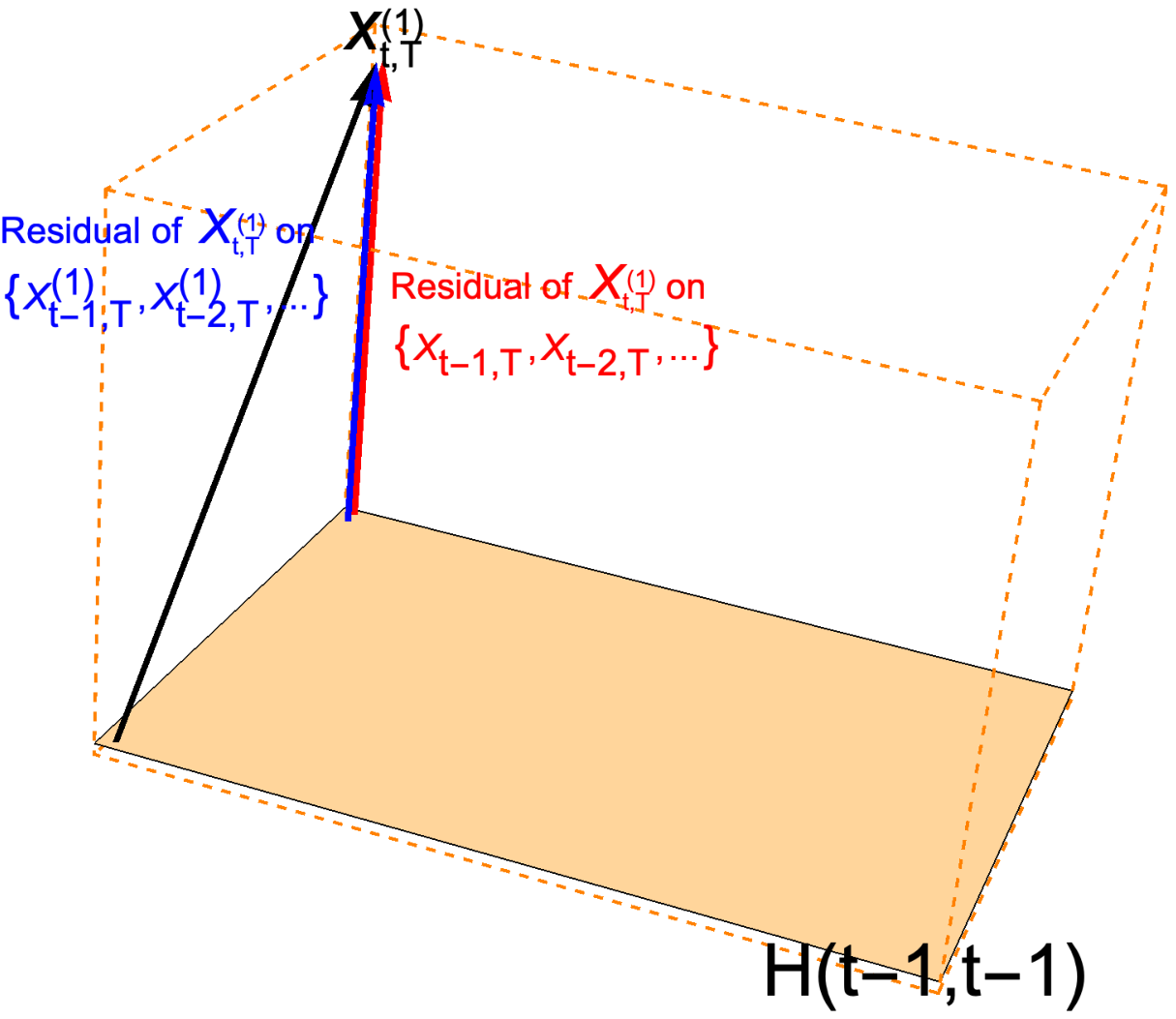} 
\vspace{-1cm}
\caption{The case when $\bm{X}_{t, T}^{(2)}$ does not cause $\bm{X}_{t, T}^{(1)}$ in the Granger sense.} \label{fig:2.3}
\end{figure}

The Hosoya measure has a natural formulation in the frequency domain.
For multivariate locally stationary processes $\{\xt\}$ in \eqref{eq:2.1},
the time-varying spectral density matrix $\fu$
is defined as
\begin{equation} \label{eq:2.11}
\fu = \frac{1}{2\pi} A(u, \lambda) K A(u, -\lambda)^{\top},
\end{equation}
where $A(u, \lambda): = \sum_{j = -\infty}^{\infty} A(u, j) \exp(\I j \lambda)$.
Decomposing the time-varying spectral density matrix $\fu$ of the process $\{\xt\}$ 
leads to the following block matrix
\begin{equation} \label{eq:2.12}
\fu
=
\begin{pmatrix}
\fu_{11} & \fu_{12} \\
\fu_{21}& \fu_{22}
\end{pmatrix},
\end{equation}
such that the processes $\{{\bm{X}_{t, T}^{(1)}}\}$ and $\{{\bm{X}_{t, T}^{(2)}}\}$
are locally stationary with the time-varying spectral densities
$\fu_{11}$ and $\fu_{22}$, respectively, and the cross-spectral density $\fu_{12} = \bm{f}(u, -\lambda)_{21}^{\T}$.
In addition, the one-step-ahead prediction error $\Sigma(u)$ has the following decomposition
\begin{equation} \label{eq:2.13}
\Sigma(u)
=
\begin{pmatrix}
\Sigma(u)_{11} & \Sigma(u)_{12} \\
\Sigma(u)_{21}& \Sigma(u)_{22}
\end{pmatrix}.
\end{equation}

We are now in position to present
the expression of the Hosoya measure \eqref{eq:2.8} in the frequency domain.
The expression contains two different spectral densities $\fu$ and $\ffu$,
where $\fu$ is the time-varying spectral density of the process $\{\xt\}$
and
$\ffu$ is the time-varying spectral density of the process 
$\{({\xt^{(1)}}^{\T}, {\yt^{(2)}}^{\T})^{\T}\}$, where $\yt^{(2)}$ is defined in \eqref{eq:r2.6}.
Similarly, let $\ffu$ be a $p \times p$ matrix defined to be
\begin{equation} \label{eq:2.14}
\ffu
=
\begin{pmatrix}
\ffu_{11} & \ffu_{12} \\
\ffu_{21}& \ffu_{22}
\end{pmatrix},
\end{equation}
such that both processes $\{{\bm{X}_{t, T}^{(1)}}\}$ and $\{\yt^{(2)}\}$
are locally stationary with the time-varying spectral densities
$\ffu_{11}$ and $\ffu_{22}$, respectively, 
and the cross-spectral density $\ffu_{12} = \bm{\frak{f}}(u, -\lambda)_{21}^{\T}$.
Now, we propose the local Granger causality measure
\begin{equation} \label{eq:2.15}
{\rm GC}^{(2 \to 1)}(u) = 
\frac{1}{2\pi} \freqint
{\rm FGC}(u, \lambda) d\lambda,
\end{equation}
where
\begin{equation*} 
{\rm FGC}(u, \lambda)  = 
\log \frac{ \abs{\fu_{11}}}
{\Babs{\fu_{11} - 2\pi \ffu_{12} \tilde{\Sigma}(u)_{22}^{-1} \ffu_{21} }}
\end{equation*}
and $\tilde{\Sigma}(u)$ is an $(M \times M)$-matrix 
\begin{equation} \label{eq:2.16}
\tilde{\Sigma}(u)_{22}
=
\Sigma(u)_{22} - \Sigma(u)_{21} \Sigma(u)_{11}^{-1} \Sigma(u)_{12}.
\end{equation}
Here, ${\rm GC}^{(2 \to 1)}$ is the representation of the Hosoya measure 
in the frequency domain for locally stationary processes, based on (1.1) in \cite{hosoya1991}.
It is straightforward to see that 
\eqref{eq:2.16} is the orthogonalization term 
for the one-step-ahead prediction error.

Testing for the following hypothesis $H_0^{(2 \to 1)}$ 
\begin{equation} 
H_0^{(2 \to 1)}: {\rm GC}^{(2 \to 1)}(u) = 0
\end{equation}
is of interest in this paper.
\fi

\section{Statistical Inference for Local Granger Causality} \label{sec:3}
In this section, we develop the foundations for  
statistical inference for local Granger causality. 
The  proofs of the theoretical results are relegated to Section \ref{sec:6} in supplement. 


\subsection{Local Whittle estimation}
Let $\{\xt\}$ be the multivariate locally stationary process defined by \eqref{eq:2.1}
with the time-varying spectral density $\fu$ defined by \eqref{eq:2.11}.
The starting point is local estimation
by fitting a parametric spectral density model $\pf$,
$\bm{\theta} \in \Theta \subset \R^d$, to $\fu$.

\if0
\begin{rem}
We take the parametric approach to this problem
by fitting the family of models
$\{\bm{f}_{\btheta}(\lambda); \, \btheta \in \Theta\}$.
Another prevailing approach could be the nonparametric inference for $\fu$, 
or for the local Granger causality measure ${\rm GC}^{2 \to 1}(u)$.
However, with the following observation, we only take the former approach.
\begin{enumerate}[(a)]
\item In the literature, only the estimation for the determinant of the prediction error matrix 
has been elucidated by the nonparametric estimation (See \cite{mp1996}).
\item The denominator in \eqref{eq:2.15} contains 
a combination of $\ffu$ and the one-step-ahead prediction error,
which complicates conducting statistical inference in a nonparametric manner in terms of the pre-periodograms.
\end{enumerate}
\end{rem}
\fi

Consider the observation stretch $(\bm{X}_{1, T}, \dots, \bm{X}_{T, T})$ and define 
$\bI_T(u, \lambda)$ to be the pre-periodogram matrix
\begin{equation} \label{eq:r3.17}
\bI_T(u, \lambda)
=
\frac{1}{2\pi} \sum_{l: 1 \leq [u T + 1/2 \pm l/2] \leq T}
\bx{[u T + 1/2 + l/2]}
\bx{[u T + 1/2 - l/2]}^{\top}
\exp(- \I \lambda l).
\end{equation}
Note that the pre-periodogram $\bI_T(u, \lambda)$ was first introduced in \cite{ns1997}.

We define the spectral divergence $\mathcal{L}(\btheta, u)$ between the parametric spectral density
and the time-varying spectral density as
\begin{equation} \label{eq:r15}
\mathcal{L}(\btheta, u)
=
\freqint 
\log \det \pf +
\tr \Bigl(
\fu
\spf{-1}
\Bigr)
\dif \lambda.
\end{equation}
For any fixed $u \in [0, 1]$, define $\bm{\theta}_0(u)$ as
\begin{equation} \label{eq:3.1}
\bm{\theta}_0(u): = \arg\min_{\btheta \in \Theta} \mathcal{L}(\btheta, u).
\end{equation}
Let $u_k: = k/T$. The sample analogue $\mathcal{L}_T$ of the spectral divergence is defined as 
\begin{equation} \label{eq:r17}
\mathcal{L}_T(\btheta, u)
=
\frac{1}{T} \sum_{k=1}^T
\frac{1}{b_T}
K\Bigl(
\frac{u - u_k}{b_T}
\Bigr) \\
\freqint 
\log \det \pf +
\tr \Bigl(\bI_{T}(u_k, \lambda) 
\spf{-1}
\Bigr)
\dif \lambda,
\end{equation}
and the local Whittle estimator of $\hat{\bm{\theta}}_T(u)$ is defined as
\begin{equation} \label{eq:3.1}
\hat{\btheta}_T(u): = \arg\min_{\btheta \in \Theta} \mathcal{L}_T(\btheta, u).
\end{equation}

We impose the following assumptions on the class of time-varying spectral densities
and the kernel function $K$ in \eqref{eq:r17}
to investigate the asymptotic properties of the local Whittle estimator \eqref{eq:3.1}.
\vspace{-0.2cm}
\begin{asp} \label{asp:3.1} \mbox{}
\begin{enumerate}[{\rm (i)}]
\item
The time-varying spectral density matrix $\fu$ is continuously differentiable 
with respect to $u$ for $u \in (0, 1)$.
\item
$K: \R \to \R$ is a nonnegative, bounded symmetric continuous function of bounded variation 
with a compact support $[-1, 1]$ satisfying $\int K(x) \dif x = 1$.
Let 
\[
K_b(x) := \frac{1}{b} K \Bigl(\frac{x}{b}\Bigr),
\]
where $b := b_T \to 0$, as $T \to \infty$.
\end{enumerate}
\end{asp}
We now specify the regularity conditions for the parametric model $\pf$
and the local parameter $\btheta(u)$.
For the brevity, let $\btheta := \btheta(u)$ when $u$ does not matter.
\begin{asp} \label{asp:3.2} \mbox{}
\begin{enumerate}[{\rm (i)}]
\item
For any fixed $u \in [0, 1]$, $\btheta(u) \in \Theta$,
where $\Theta$ is a compact subset of $\R^d$.
\item
For any fixed $u \in [0, 1]$, 
$\bm{f}_{\btheta^{(1)}(u)} \not = \bm{f}_{\btheta^{(2)}(u)}$
on a set of positive Lebesgue measure, if $\btheta^{(1)}(u) \not = \btheta^{(2)}(u)$.
\item
The parametric spectral density matrix $\pf$ is bounded away from 0 for each component,
and is continuously differentiable with respect to $\lambda$ for 
$\lambda \in (-\pi, \pi)$.
\item
For any $\btheta \in \Theta$,
$\bm{f}_{\btheta}$ is positive definite and
it is twice continuously differentiable with respect to $\btheta$.
\item For any fixed $u \in [0, 1]$, 
\begin{enumerate}[{\rm ({v-}a)}]
\item $\btheta_0(u) \in \Theta$ is the unique minimizer of $\mathcal{L}(\btheta, u)$ and 
lies in the interior of $\Theta$.
\item the matrices
\[
M_I^u
=
\freqint
\Biggl[
\frac{\partial^2}{\partial \btheta \partial \btheta^{\top}}
\tr\Bigl\{
\spf{-1} \bI_T(u, \lambda)
\Bigr\}
+
\frac{\partial^2}{\partial \btheta \partial \btheta^{\top}}
\log \det
\pf
\Biggr] \dif \lambda
\] 
and
\[
M_f^u
=
\freqint
\Biggl[
\frac{\partial^2}{\partial \btheta \partial \btheta^{\top}}
\tr\Bigl\{
\spf{-1} \fu
\Bigr\}
+
\frac{\partial^2}{\partial \btheta \partial \btheta^{\top}}
\log \det
\pf
\Biggr] \dif \lambda
\] 
are both positive definite.
\end{enumerate}
\end{enumerate} 
\end{asp}

First, let us consider the asymptotics for the sample analog of the spectral divergence $\mathcal{L}_T(\btheta, u)$.
\begin{thm} \label{thm:3.2}
Suppose Assumptions \ref{asp:2.1}, \ref{asp:3.1} and \ref{asp:3.2} hold.
For any $u \in (0, 1)$, if $b_T^{-1} = o\bigl(T(\log T)^{-6}\bigr)$ and $b_T = o(T^{-1/5})$, then we have
\begin{equation*}
\sqrt{T b_T}
\bigl(
\mathcal{L}_T(\btheta, u) - 
\mathcal{L}(\btheta, u)
\bigr)
\dlim
\mathcal{N}(0, \mathbb{V}^{\mathcal{L}}(u)).
\end{equation*}
as $T \to \infty$,
where
\begin{multline} \label{eq:3.2}
\mathbb{V}^{\mathcal{L}}(u) =
4\pi \int_{-1}^1 K(v)^2 \dif v 
\Biggl(
\freqint 
\tr \Bigl(
\fu[u] \spf{-1} \fu[u] \spf{-1}
\Bigr)\dif \lambda\\
+
\frac{1}{2}
\sum_{r, t, v ,w = 1}^p
\freqint \freqint 
\Bigl(
\bm{f}_{\btheta}^{rt}(\lambda_1)
\bm{f}_{\btheta}^{vw}(\lambda_2)
\tilde{\gamma}_{rtvw} (u; -\lambda_1, \lambda_2, -\lambda_2)
\Bigr) \dif \lambda_1 \dif \lambda_2
\Biggr),
\end{multline}
where $\tilde{\gamma}$ is the fourth-order spectral density of the process.
\end{thm}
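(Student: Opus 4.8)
The plan is to reduce $\mathcal{L}_T(\btheta,u)$ to its single stochastic ingredient, the pre-periodogram term, to rewrite that ingredient as a centred quadratic form in the innovations $\be_t$, and then to invoke a central limit theorem for quadratic forms of independent vectors, in the spirit of the local spectral estimation theory of \cite{dahlhaus2009} and \cite{dp2009}; note that $\btheta$ is held fixed here, so no empirical-process argument uniform over $\Theta$ is needed at this stage. Write $K_{b_T}(x)=b_T^{-1}K(x/b_T)$. First I would peel off the deterministic and bias contributions. Since $\freqint\log\det\pf\,\dif\lambda$ carries no randomness, its contribution to $\mathcal{L}_T(\btheta,u)-\mathcal{L}(\btheta,u)$ equals $\bigl(T^{-1}\sum_{k=1}^{T}K_{b_T}(u-u_k)-1\bigr)\freqint\log\det\pf\,\dif\lambda$; because $K$ is compactly supported of bounded variation and $u\in(0,1)$, the Riemann-sum error here is $O((Tb_T)^{-1})$, hence $o((Tb_T)^{-1/2})$ once $Tb_T\to\infty$, which holds by $b_T^{-1}=o(T(\log T)^{-6})$. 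Replacing $\bI_T(u_k,\lambda)$ by $E\bI_T(u_k,\lambda)$ and then by $\bm{f}(u_k,\lambda)$, and smoothing $\bm{f}(u_k,\lambda)$ against $K_{b_T}$, contributes a total bias of order $b_T^2+T^{-1}\log T$ — here Assumption~\ref{asp:3.1}(i), the symmetry of $K$, and the summability conditions on $\{A_{t,T}(j)\}$ enter — so that $\sqrt{Tb_T}$ times it is $o(1)$ by $b_T=o(T^{-1/5})$. It therefore suffices to prove the CLT for the centred stochastic part
\[
S_T:=\sqrt{Tb_T}\,\frac{1}{T}\sum_{k=1}^{T}K_{b_T}(u-u_k)\freqint\tr\bigl((\bI_T(u_k,\lambda)-E\bI_T(u_k,\lambda))\spf{-1}\bigr)\dif\lambda .
\]

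Next I would turn $S_T$ into a quadratic form in the data. Writing $c(l):=(2\pi)^{-1}\freqint\spf{-1}e^{-\I\lambda l}\dif\lambda$ for the Fourier coefficients of $\spf{-1}$ and using the identity $\freqint\tr(\bI_T(u_k,\lambda)\spf{-1})\dif\lambda=\sum_{l}\bx{[u_kT+1/2-l/2]}^{\top}c(l)\,\bx{[u_kT+1/2+l/2]}$, one gets $S_T=\sqrt{Tb_T}\sum_{s,t=1}^{T}\bigl(\bx{s}^{\top}W_{s,t}\bx{t}-E\,\bx{s}^{\top}W_{s,t}\bx{t}\bigr)$, where the weight matrix $W_{s,t}$ equals, up to floor-function corrections, $T^{-1}K_{b_T}\bigl(u-\tfrac{s+t}{2T}\bigr)c(t-s)$. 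Substituting the moving-average representation $\bx{t}=\sum_{j}A_{t,T}(j)\be_{t-j}$ and replacing $A_{t,T}(j)$ by $A(t/T,j)$ — the induced error being absorbed via condition (ii) on $\{A_{t,T}(j)\}$ together with $\norm{A(u,j)}_{\infty}\le C_A/l(j)$ — recasts $S_T$, up to $o_P(1)$, as a centred quadratic form
\[
S_T=\sqrt{Tb_T}\sum_{a,b}\bigl(\be_a^{\top}V_{a,b}^{(T)}\be_b-\delta_{ab}\tr(V_{a,a}^{(T)}K)\bigr),\qquad V_{a,b}^{(T)}=\sum_{s,t}A(s/T,s-a)^{\top}W_{s,t}A(t/T,t-b),
\]
in the i.i.d.\ mean-zero vectors $\be_t$.

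The remaining work is the CLT itself and the identification of the variance. Since all moments of $\be_t$ are finite (Assumption~\ref{asp:2.1}) and the arrays $V_{a,b}^{(T)}$ are uniformly small and suitably summable, a central limit theorem for generalized quadratic forms of independent vectors applies; I would verify it either through a Lindeberg-type negligibility condition or, equivalently, by checking that every cumulant of $S_T$ of order $\ge 3$ tends to $0$. The limiting variance comes from $\var(S_T)=Tb_T\sum_{a,b,c,d}\cov(\be_a^{\top}V_{a,b}^{(T)}\be_b,\be_c^{\top}V_{c,d}^{(T)}\be_d)$, which for independent innovations decomposes into a ``Gaussian'' contribution arising from the index pairings $\{a,b\}=\{c,d\}$ with $a\ne b$ and a fourth-cumulant contribution arising from $a=b=c=d$. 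Passing from the multi-index sums to spectral integrals via $\bm{f}(u,\lambda)=(2\pi)^{-1}A(u,\lambda)KA(u,-\lambda)^{\top}$, the kernel identity $T^{-1}\sum_k K_{b_T}(u-u_k)^2\to b_T^{-1}\int_{-1}^{1}K(v)^2\dif v$, and the definition of the fourth-order spectral density $\tilde\gamma$, produces exactly $\mathbb{V}^{\mathcal{L}}(u)$ in \eqref{eq:3.2}: the pairing term yields $4\pi\int_{-1}^{1}K(v)^2\dif v\freqint\tr(\fu[u]\spf{-1}\fu[u]\spf{-1})\dif\lambda$ and the cumulant term yields the remaining double integral.

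I expect the main obstacle to be the second and third steps taken together: the pre-periodogram ranges over an unbounded set of lags, so one must control the joint decay of $c(l)$ (coming from the differentiability of $\pf$ in $\lambda$) and of the coefficients $A(u,j)$ to guarantee both that the replacement $A_{t,T}(j)\rightsquigarrow A(t/T,j)$ and the floor-function artifacts in $W_{s,t}$ contribute only $o_P(1)$ after the $\sqrt{Tb_T}$ scaling, and that the multi-index sums genuinely converge to the claimed spectral integrals rather than to edge-effect remainders. The two bandwidth conditions are exactly calibrated for this: $b_T^{-1}=o(T(\log T)^{-6})$ supplies a divergent effective sample size $Tb_T$ and absorbs the logarithmic factors produced by the pre-periodogram tails in the deviation bounds, while $b_T=o(T^{-1/5})$ annihilates the $O(b_T^2)$ smoothing bias.
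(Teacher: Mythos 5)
Your proposal is correct, and its skeleton coincides with the paper's: the same three-way decomposition into (a) the Riemann-sum error on the deterministic term $\freqint\log\det\pf\,\dif\lambda$, (b) the kernel-smoothing bias from replacing $\fu[u_k]$ by $\fu$, killed by the symmetry of $K$ and $b_T=o(T^{-1/5})$, and (c) the centred stochastic term driven by $\bI_T(u_k,\lambda)-\fu[u_k]$; the variance identification into a pairing term plus a fourth-cumulant term is also exactly the paper's computation. Where you diverge is in how the CLT for the stochastic term is executed. You propose rewriting it as a generalized quadratic form in the innovations $\be_t$ and invoking a de Jong--type CLT (or equivalently killing all cumulants of order at least three of that quadratic form directly). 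The paper instead packages this step as a standalone empirical spectral process result (Corollary \ref{cor:b11}, built on Theorem \ref{thm:b8} and Lemma \ref{lem:b5}): it works with cumulants of the pre-periodogram functional itself, first approximating them by cumulants of the stationary approximation via Lemmas \ref{lem:a3} and \ref{lem:a4}, and -- crucially -- it handles the obstacle you correctly flag (the Fourier coefficients of $\spf{-1}$ decay only like $1/\abs{l}$, so $\sum_s\sup_u\abs{\hat\psi(u,-s)}$ diverges logarithmically) by convolving the test function with a Gaussian mollifier $G_{\beta_T}$, which caps that sum at $O(\log T)$ and is precisely what the $b_T^{-1}=o(T(\log T)^{-6})$ condition absorbs. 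Your route is somewhat more elementary and self-contained for a fixed $\btheta$; the paper's buys a reusable lemma that is then applied again (with the delta method) in Theorems \ref{thm:3.3}, \ref{thm:3.7} and \ref{thm:3.8}. If you pursue your version, the one step you should not leave implicit is the quantitative control of the harmonic-series tails: a bare appeal to ``suitably summable arrays $V_{a,b}^{(T)}$'' will not go through without either the mollifier device or an equivalent truncation-in-lag argument, since Assumption \ref{asp:3.2}(iii) only gives $c(l)=O(1/\abs{l})$.
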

\begin{rem}
Inspection of the proof of Theorem \ref{thm:3.2} reveals  that the main order of bias is $O(T^{1/2} b_T^{5/2})$
and the asymptotic variance is of order $O(T^{-1} b_T^{-1})$.
The optimal order of the bandwidth $b_T$ can be determined by equating squared bias and variance.
Thus, we obtain the optimal order $b_T = O(T^{-1/3})$ and the mean square error is $O(T^{-2/3})$.
This optimal order is similar to the one derived in \cite{kunsch1989} in the context of 
statistical inference for stationary time series.
\end{rem}
Let $M_{f,0}^u$ be
\begin{equation*}
M_{f,0}^u
=
\freqint
\Biggl[
\frac{\partial^2}{\partial \btheta \partial \btheta^{\top}}
\tr\Bigl\{
\spf{-1} \fu
\Bigr\}
+
\frac{\partial^2}{\partial \btheta \partial \btheta^{\top}}
\log \det
\pf
\Biggr]_{\btheta= \btheta_0(u)} \dif \lambda.
\end{equation*}
Now we establish the asymptotic normality of the local estimator $\hat{\btheta}_T(u)$.

\begin{thm} \label{thm:3.3}
Suppose Assumptions \ref{asp:2.1}, \ref{asp:3.1} and \ref{asp:3.2} hold.
For any $u \in (0, 1)$, if $b_T^{-1} = o\bigl(T(\log T)^{-6}\bigr)$ and $b_T = o(T^{-1/5})$, then we have
\begin{equation} \label{eq:3.3}
\sqrt{T b_T}
\bigl(
\hat{\btheta}_T(u) - 
\btheta_0(u)
\bigr)
\dlim
\mathcal{N}(\bm{0}, \mathbb{V}(u)),
\end{equation}
as $T \to \infty$, where $\mathbb{V}(u):= (M_{f,0}^u)^{-1} \mathbb{V}^{\btheta}(u)  (M_{f,0}^u)^{-1}$ and
\begin{multline} \label{eq:3.4}
\mathbb{V}^{\btheta}(u)_{ab}
=
4\pi \int_{-1}^1 K(v)^2 \dif v 
\Biggl(
\freqint 
\tr \Bigl[
\fu[u] \biggl\{\frac{\partial}{\partial \theta_a}\spf{-1}\biggr\} \fu[u] 
\biggl\{\frac{\partial}{\partial \theta_b} \spf{-1}\biggr\}
\Bigr]\dif \lambda\\
+
\frac{1}{2}
\sum_{r, t, v ,w = 1}^p
\freqint \freqint 
\Bigl(
\frac{\partial}{\partial \theta_a}\bm{f}_{\btheta}^{rt}(\lambda_1) \cdot
\frac{\partial}{\partial \theta_b}\bm{f}_{\btheta}^{vw}(\lambda_2)
\tilde{\gamma}_{rtvw} (u; -\lambda_1, \lambda_2, -\lambda_2)
\Bigr) \dif \lambda_1 \dif \lambda_2
\Biggr),
\end{multline}
$(a, b = 1, \dots, d)$,
where $\tilde{\gamma}$ is the fourth-order spectral density of the process.
\end{thm}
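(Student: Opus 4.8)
The plan is to derive the asymptotic normality of $\hat{\btheta}_T(u)$ from the standard M-estimation argument, using Theorem \ref{thm:3.2} (and more precisely the CLT for the score, which is established by essentially the same techniques) as the engine. First I would establish consistency: since $\mathcal{L}_T(\btheta, u) \to \mathcal{L}(\btheta, u)$ uniformly on the compact set $\Theta$ (which follows by combining the pointwise convergence implicit in Theorem \ref{thm:3.2} with a stochastic equicontinuity argument using Assumption \ref{asp:3.2}(iv) and the bounded-variation kernel), and $\btheta_0(u)$ is the unique minimizer lying in the interior of $\Theta$ by Assumption \ref{asp:3.2}(v-a), a standard argmin-continuous-mapping argument gives $\hat{\btheta}_T(u) \plim \btheta_0(u)$.

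Next I would Taylor-expand the first-order condition. Because $\hat{\btheta}_T(u)$ is eventually interior, $\nabla_{\btheta} \mathcal{L}_T(\hat{\btheta}_T(u), u) = \bzero$, and a mean-value expansion around $\btheta_0(u)$ gives
\[
\bzero = \nabla_{\btheta} \mathcal{L}_T(\btheta_0(u), u) + \Bigl(\nabla^2_{\btheta} \mathcal{L}_T(\bar{\btheta}_T, u)\Bigr)\bigl(\hat{\btheta}_T(u) - \btheta_0(u)\bigr),
\]
for some $\bar{\btheta}_T$ between $\hat{\btheta}_T(u)$ and $\btheta_0(u)$. Since $\nabla_{\btheta}\mathcal{L}(\btheta_0(u), u) = \bzero$ by definition of $\btheta_0(u)$, the centering is automatic, so it remains to show (a) $\sqrt{T b_T}\, \nabla_{\btheta}\mathcal{L}_T(\btheta_0(u), u) \dlim \mathcal{N}(\bzero, \mathbb{V}^{\btheta}(u))$, and (b) $\nabla^2_{\btheta}\mathcal{L}_T(\bar{\btheta}_T, u) \plim M_{f,0}^u$. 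For (b) I would use consistency of $\hat{\btheta}_T(u)$ (hence of $\bar{\btheta}_T$), uniform convergence of $\nabla^2_{\btheta}\mathcal{L}_T$ to $\nabla^2_{\btheta}\mathcal{L}$ on a neighborhood of $\btheta_0(u)$ (again via Assumption \ref{asp:3.2}(iv) and the kernel smoothing), continuity of $M_f^u$ in $\btheta$, and the positive-definiteness in Assumption \ref{asp:3.2}(v-b) to guarantee invertibility for large $T$. Solving then yields $\sqrt{T b_T}(\hat{\btheta}_T(u) - \btheta_0(u)) = -(M_{f,0}^u)^{-1}\sqrt{T b_T}\,\nabla_{\btheta}\mathcal{L}_T(\btheta_0(u), u) + o_P(1)$, and Slutsky's theorem delivers \eqref{eq:3.3} with sandwich variance $(M_{f,0}^u)^{-1}\mathbb{V}^{\btheta}(u)(M_{f,0}^u)^{-1}$.

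The main obstacle — and the place where the real work lies — is step (a), the CLT for the score $\sqrt{T b_T}\,\nabla_{\btheta}\mathcal{L}_T(\btheta_0(u), u)$, which is a vector-valued analog of Theorem \ref{thm:3.2} with $\log\det\pf + \tr(\bI_T \spf{-1})$ replaced by its $\btheta$-derivative evaluated at $\btheta_0(u)$. This requires expanding the kernel-weighted sum of pre-periodogram integrals, replacing $\bI_T(u_k, \lambda)$ by its expectation (the bias term, controlled by the bandwidth conditions $b_T = o(T^{-1/5})$ so that $T^{1/2}b_T^{5/2} = o(1)$, using Assumption \ref{asp:3.1}(i) for the smoothness of $\fu$ in $u$), and then handling the fluctuation part. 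The fluctuation is a quadratic form in the locally stationary process, so its limiting variance decomposes into a "Gaussian-like" trace term $\int \tr[\fu\{\partial_{\theta_a}\spf{-1}\}\fu\{\partial_{\theta_b}\spf{-1}\}]\dif\lambda$ plus a fourth-cumulant term involving $\tilde{\gamma}_{rtvw}$, exactly as in \eqref{eq:3.4}; establishing asymptotic normality of this quadratic form needs a cumulant/Brillinger-type argument showing all cumulants of order $\geq 3$ of $\sqrt{T b_T}\,\nabla_{\btheta}\mathcal{L}_T$ vanish, which in turn uses Assumption \ref{asp:2.1} (all moments of $\be_t$ finite and uniformly bounded) together with the summability condition $\sup_{t,T}\norm{A_{t,T}(j)}_\infty \leq C_A/l(j)$ to control the mixing-type decay. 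The condition $b_T^{-1} = o(T(\log T)^{-6})$ is what makes the variance of the smoothed pre-periodogram of the right order $O(T^{-1}b_T^{-1})$ while keeping the higher-cumulant contributions negligible. Since Theorem \ref{thm:3.2} already carries out this program for the scalar functional $\mathcal{L}_T$ itself, I would borrow that machinery essentially verbatim, applied componentwise to $\partial_{\theta_a}\mathcal{L}_T$ and then lifted to a joint CLT via the Cramér–Wold device.
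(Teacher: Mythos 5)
Your outline is sound and arrives at the correct sandwich form, but it follows the classical Taylor/first-order-condition route, whereas the paper argues via convexity. Concretely, the paper notes that $\mathcal{L}_T(\btheta,u)$ is a nonnegative linear combination of the per-time-point Whittle contrasts and invokes Assumption \ref{asp:3.2}(v-a)--(v-b) (unique interior minimizer, positive-definite $M_I^u$ and $M_f^u$, hence convexity of the criterion in $\btheta$), so that the argmin-continuous-mapping argument for convex objectives delivers \eqref{eq:3.3} directly from the weak convergence of the criterion; pointwise convergence of a convex function already implies the needed uniformity, so the paper never has to establish stochastic equicontinuity for consistency nor uniform convergence of $\nabla^2_{\btheta}\mathcal{L}_T$ in a neighborhood of $\btheta_0(u)$ with a mean-value point $\bar{\btheta}_T$ --- precisely the two steps your plan flags as requiring extra work. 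What the two routes share is the real engine: the CLT for the (derivative of the) kernel-weighted pre-periodogram functional, which the paper obtains by applying Corollary \ref{cor:b11} with $\bm{\phi}(u,\lambda)=K(u)\,\partial_{\theta_a}\spf{-1}$ (lifted to a joint limit exactly as you propose, via Cram\'er--Wold), and your identification of the bias order $O(T^{1/2}b_T^{5/2})$ under $b_T=o(T^{-1/5})$ and of the Gaussian-plus-fourth-cumulant variance decomposition matches the paper's Lemma \ref{lem:b5} and Theorem \ref{thm:b8}. Your approach buys generality (it does not need global convexity of the contrast in $\btheta$, only smoothness and local identification), at the price of having to prove the uniform-convergence steps you sketch; the paper's approach is shorter but leans on the convexity claim, which it extracts from Assumption \ref{asp:3.2}(v-b).
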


\subsection{Inference for causality measures} \label{subsec:3.2}
In this subsection,
we develop the statistical inference for the local Granger causality measure \eqref{eq:2.15}
based on the parametric model $\{ \bm{f}_{\bm{\theta}(u)} \mid \bm{\theta}(u) \in \Theta \}$.
Denote by $\Sigma_{\bm{\theta}(u)}$ the parametric one-step-ahead prediction error matrix,
and
by $\bm{\frak{f}}_{\bm{\theta}(u)}$ the parametric model for the companion process \eqref{eq:r2.6}.
Note that the matrix $\bm{\frak{f}}_{\bm{\theta}(u)}$ is uniquely determined by 
the model $\bm{f}_{\bm{\theta}(u)}$ and the matrix $\Sigma_{\bm{\theta}(u)}$ (see, e.g., \eqref{eq:r3.2.6} and \eqref{eq:r4.2.9}).

Suppose $\bm{f}_{\bm{\theta}(u)}$,
$\Sigma_{\bm{\theta}(u)}$
and
$\bm{\frak{f}}_{\bm{\theta}(u)}$
have the same partition as \eqref{eq:2.11}.
To make the statistical inference feasible, we impose the following assumption
on the parametric models.

\begin{asp} \label{asp:3.3}
For any $\btheta \in \Theta$,
\[
\freqint \log \abs{\pf} \dif \lambda > -\infty.
\]
\end{asp}

Assumption \ref{asp:3.3} is usually referred to as the maximal rank condition.
Under Assumption \ref{asp:3.3}, $\pfu$ is non-degenerate a.e.\ and $\Sigma_{\bm{\theta}(u)}$
is positive definite for any fixed $u \in [0, 1]$.
Now the parametric local Granger causality for \eqref{eq:2.15} is
\begin{equation} \label{eq:3.11}
{\rm GC}^{(2 \to 1)}(u; \,\btheta) = \frac{1}{2\pi} \freqint
{\rm FGC}\bigl(\lambda; \btheta(u)\bigr) \dif \lambda,
\end{equation}
where
\begin{equation*}
{\rm FGC}(\lambda; \,\bm{\theta}) =
\log
\frac{
\abs{\bm{f}_{\btheta}(\lambda)_{11} }
}{
{\Babs{\bm{f}_{\btheta}(\lambda)_{11} 
- 2\pi \bm{\frak{f}}_{\btheta}(\lambda)_{12} 
\tilde{\Sigma}_{\btheta, 22}^{-1} 
\bm{\frak{f}}_{\btheta}(\lambda)_{21} }}
}.
\end{equation*}
The main results are described in the following.
\begin{thm} \label{thm:3.7}
Suppose Assumptions \ref{asp:2.1}, \ref{asp:3.1}, \ref{asp:3.2} and \ref{asp:3.3} hold.
If we have, for some $i = 1, \dots, d$,
\begin{equation}\label{eq:3.20}
\frac{\partial}{\partial \theta_i} {\rm FGC}^{(2 \to 1)}_{\bm{\theta}}(\lambda) \Bigr\vert_{\btheta = \btheta_0(u)}
\not = 0,
\qquad \text{for some $\lambda \in (-\pi, \pi]$,}
\end{equation}
uniformly in $u \in [0, 1]$, and if $b_T^{-1} = o\bigl(T(\log T)^{-6}\bigr)$ and $b_T = o(T^{-1/5})$, 
then we have
\[
\sqrt{T b_T}
\Bigl(
{\rm GC}^{(2 \to 1)}(u;\,{\hat{\bm{\theta}}_T}) - {\rm GC}^{(2 \to 1)}(u; {\bm{\theta}_0})
\Bigr) 
\dlim
\mathcal{N}(0, \mathbb{V}^{\text{GC}}(u)),
\]
where
\begin{equation*}
\mathbb{V}^{\text{GC}}(u)
=
\left(\nabla {\rm GC}^{(2 \to 1)}(u; {\bm{\theta}_0})  \right)^{\top}
\mathbb{V}(u) 
\left( \nabla {\rm GC}^{(2 \to 1)}(u; {\bm{\theta}_0}) \right),
\end{equation*}
and
\[
\nabla {\rm GC}^{(2 \to 1)}(u; {\bm{\theta}_0})
= \left( \frac{\partial}{\partial \theta_1} {\rm GC}^{(2 \to 1)}(u; {\bm{\theta}_0}), \dots, 
\frac{\partial}{\partial \theta_d} {\rm GC}^{(2 \to 1)}(u; {\bm{\theta}_0}) \right)^{\top}
\]
with
\begin{equation*} 
\frac{\partial}{\partial \theta_i} {\rm GC}^{(2 \to 1)}(u; {\bm{\theta}})
= \frac{1}{2\pi} \int_{-\pi}^{\pi} \frac{\partial}{\partial \theta_i} {\rm FGC}^{(2 \to 1)}_{\bm{\theta}(u)}(\lambda) \dif \lambda. 
\end{equation*}
\end{thm}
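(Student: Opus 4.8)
The plan is to deduce Theorem \ref{thm:3.7} from the asymptotic normality of the local Whittle estimator in Theorem \ref{thm:3.3} via the first‑order delta method. Write $g_u(\btheta) := {\rm GC}^{(2\to1)}(u;\btheta) = \frac{1}{2\pi}\freqint {\rm FGC}(\lambda;\btheta)\,\dif\lambda$. The two ingredients needed are: (a) that $g_u$ is continuously differentiable in a neighbourhood of $\btheta_0(u)$, with $\nabla g_u(\btheta) = \frac{1}{2\pi}\freqint \nabla_\btheta{\rm FGC}(\lambda;\btheta)\,\dif\lambda$; and (b) that $\hat\btheta_T(u)\plim\btheta_0(u)$ and $\sqrt{Tb_T}(\hat\btheta_T(u)-\btheta_0(u))\dlim\mathcal{N}(\bm{0},\mathbb{V}(u))$, which is exactly Theorem \ref{thm:3.3} (consistency being implied by the stated rate since $Tb_T\to\infty$ under the bandwidth conditions).

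For (a) I would build the $\btheta$‑smoothness of ${\rm FGC}(\lambda;\cdot)$ from its constituents. By Assumption \ref{asp:3.2}(iv), $\bm{f}_\btheta(\lambda)$ is positive definite and twice continuously differentiable in $\btheta$, so $\bm{f}_\btheta(\lambda)_{11}$ and its inverse are $C^1$ in $\btheta$ with derivatives bounded uniformly in $\lambda$ on a compact neighbourhood of $\btheta_0(u)$. Under Assumption \ref{asp:3.3} the spectrum is a.e.\ nondegenerate, the canonical factorization \eqref{eq:r3.3.6} is well defined, and the spectral factor $\bm{\Lambda}_\btheta(u,\cdot)$ — hence $\Sigma_\btheta$ — inherits the $C^1$ regularity of $\bm{f}_\btheta$; consequently the Schur complement $\tilde\Sigma_{\btheta,22}$ and, through \eqref{eq:r3.2.6}–\eqref{eq:r4.2.9}, the companion spectral density $\bm{\frak{f}}_\btheta(\lambda)$ and its blocks are $C^1$ in $\btheta$ with $\lambda$‑uniform derivative bounds. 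Finally, by Proposition \ref{prop:2.2} (equivalently, by Kolmogorov's formula applied to the conditional process), the denominator $\bigl\lvert \bm{f}_\btheta(\lambda)_{11} - 2\pi\bm{\frak{f}}_\btheta(\lambda)_{12}\tilde\Sigma_{\btheta,22}^{-1}\bm{\frak{f}}_\btheta(\lambda)_{21}\bigr\rvert$ is the determinant of a prediction‑error spectral density bounded away from $0$, so ${\rm FGC}(\lambda;\btheta) = \log\lvert\bm{f}_\btheta(\lambda)_{11}\rvert - \log\lvert\cdots\rvert$ is $C^1$ in $\btheta$ with $\sup_\lambda\lVert\nabla_\btheta{\rm FGC}(\lambda;\btheta)\rVert$ finite uniformly on a neighbourhood of $\btheta_0(u)$. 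Dominated convergence then licenses differentiating under the integral sign, giving the stated formula for $\nabla g_u$ together with its continuity.

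For (b) and the delta‑method step: since $g_u$ is scalar‑valued, the mean‑value form of Taylor's theorem gives $g_u(\hat\btheta_T(u)) - g_u(\btheta_0(u)) = \nabla g_u(\bar\btheta_T)^\top(\hat\btheta_T(u) - \btheta_0(u))$ for some $\bar\btheta_T$ on the segment joining $\hat\btheta_T(u)$ and $\btheta_0(u)$. Multiplying by $\sqrt{Tb_T}$, using $\hat\btheta_T(u)\plim\btheta_0(u)$ (so $\bar\btheta_T\plim\btheta_0(u)$ and, by continuity of $\nabla g_u$, $\nabla g_u(\bar\btheta_T)\plim\nabla g_u(\btheta_0(u))$), and combining Theorem \ref{thm:3.3} with Slutsky's theorem yields
\[
\sqrt{Tb_T}\bigl(g_u(\hat\btheta_T(u)) - g_u(\btheta_0(u))\bigr)\dlim \mathcal{N}\bigl(0,\ \nabla g_u(\btheta_0(u))^\top\mathbb{V}(u)\,\nabla g_u(\btheta_0(u))\bigr),
\]
which is the assertion with $\mathbb{V}^{\text{GC}}(u)$ as stated. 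Condition \eqref{eq:3.20} enters precisely here, through the derivative formula for $\nabla g_u$, to secure $\nabla g_u(\btheta_0(u))\neq\bm{0}$ — hence $\mathbb{V}^{\text{GC}}(u)>0$ and a genuinely non‑degenerate Gaussian limit; when the gradient vanishes one falls back on a second‑order expansion and the quadratic‑form limit alluded to in the introduction.

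The main obstacle is step (a): rigorously establishing that the canonical spectral factor $\bm{\Lambda}_\btheta(u,\cdot)$, and therefore $\Sigma_\btheta$ and $\bm{\frak{f}}_\btheta$, depend in a $C^1$ fashion on $\btheta$ with derivatives bounded uniformly in $\lambda$, and that the denominator stays uniformly bounded away from zero (where Assumption \ref{asp:3.3} is essential). Once this regularity and the uniform bounds are in hand, differentiation under the integral and the delta‑method conclusion are routine.
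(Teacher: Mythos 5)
Your proposal is correct and follows essentially the same route as the paper: both reduce the theorem to Theorem \ref{thm:3.3} via the delta method, and both establish the required continuous differentiability of the numerator and denominator of ${\rm FGC}$ in $\btheta$ by tracing it through the spectral factorization \eqref{eq:r3.3.6}, the resulting smoothness of $\Sigma_{\btheta(u)}$ and $\tilde\Sigma_{\btheta(u),22}$, and the Hosoya-type identities \eqref{eq:r3.2.6} and \eqref{eq:r4.2.9} for $\bm{\frak{f}}_{\btheta(u)}$. Your additional attention to the $\lambda$-uniform derivative bounds and the denominator being bounded away from zero is a point the paper handles only implicitly via Assumption \ref{asp:3.3}, but it does not change the argument.
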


There are situations when condition \eqref{eq:3.20} may not be satisfied.
That is, for some $u \in (0, 1)$, 
\begin{equation} \label{eq:3.23}
\frac{\partial}{\partial \btheta} {\rm FGC}^{(2 \to 1)}_{\bm{\theta}}(\lambda) \Bigr\vert_{\btheta = \btheta_0(u)}
= \bzero,
\qquad \text{a.e. $\lambda \in (-\pi, \pi]$.}
\end{equation}
In this case, we centralize ${\rm GC}^{(2 \to 1)}(u;\, {\hat{\bm{\theta}}_T})$ as ${\rm CGC}(u;\,{\hat{\bm{\theta}}_T} )$, i.e.,
\begin{equation} \label{eq:3.24}
{\rm CGC}(u;\,{\hat{\bm{\theta}}_T} ) := 
{\rm GC}^{(2 \to 1)}(u;\,
{\hat{\bm{\theta}}_T}) - 
{\rm GC}^{(2 \to 1)}(u; {\bm{\theta}_0}).
\end{equation}
Then we have the following result.
\begin{thm} \label{thm:3.8}
Suppose that Assumptions \ref{asp:2.1}, \ref{asp:3.1}, \ref{asp:3.2} and \ref{asp:3.3} hold.
In addition, assume \eqref{eq:3.23} with
\begin{equation*}
\mathcal{H}(u, \lambda): = \frac{\partial^2}{\partial \btheta \partial \btheta^{\top}} 
{\rm FGC}^{(2 \to 1)}_{\bm{\theta}}(\lambda) \Bigr\vert_{\btheta = \btheta_0(u)}
\not = O_{d \times d},
\qquad \text{for some $\lambda \in (-\pi, \pi]$},
\end{equation*}
for $u \in (0, 1)$.
Let 
\begin{equation} \label{eq:r4.3.25}
\mathcal{H}(u) = \frac{1}{2\pi} \freqint \mathcal{H}(u, \lambda) \dif \lambda.
\end{equation}
Then if $b_T^{-1} = o\bigl(T(\log T)^{-6}\bigr)$ and $b_T = o(T^{-1/5})$, the following result holds
\begin{equation} \label{eq:3.25}
2 T b_T {\rm CGC}^{(2 \to 1)}(u;\,{\hat{\bm{\theta}}_T}) 
\dlim
\mathcal{N}\bigl(0, \mathbb{V}(u)\bigr)^{\top} \mathcal{H}(u) \mathcal{N}\bigl(0, \mathbb{V}(u)\bigr),
\end{equation}
where the normal distribution $\mathcal{N}\bigl(0, \mathbb{V}(u)\bigr)$ is defined in Theorem \ref{thm:3.3}.
In particular, if $\mathbb{V}(u)^{-1/2}\mathcal{H}(u) \mathbb{V}(u)^{-1/2}$ is an idempotent matrix, then 
the right hand side of \eqref{eq:3.25}
has a chi-squared distribution $\chi^2_{\nu}$ 
with the degrees of freedom 
\[
\nu = \tr(\mathbb{V}(u)^{-1/2}\mathcal{H}(u)\mathbb{V}(u)^{-1/2}).
\]
\end{thm}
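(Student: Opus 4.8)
The plan is to combine the asymptotic normality of $\hat{\btheta}_T(u)$ from Theorem~\ref{thm:3.3} with a second-order Taylor expansion of the functional $\btheta \mapsto {\rm GC}^{(2\to 1)}(u;\btheta)$ around $\btheta_0(u)$. First I would write, by Taylor's theorem,
\begin{equation*}
{\rm CGC}^{(2\to 1)}(u;\hat{\btheta}_T)
= \nabla {\rm GC}^{(2\to 1)}(u;\btheta_0)^{\top}(\hat{\btheta}_T(u)-\btheta_0(u))
+ \tfrac{1}{2}(\hat{\btheta}_T(u)-\btheta_0(u))^{\top}\mathcal{H}(u)(\hat{\btheta}_T(u)-\btheta_0(u)) + R_T,
\end{equation*}
where $\mathcal{H}(u)$ is the Hessian in \eqref{eq:r4.3.25} (obtained by differentiating under the integral sign, which is justified by Assumption~\ref{asp:3.2}(iv) and the smoothness of ${\rm FGC}$ in $\btheta$), and $R_T$ collects the cubic remainder. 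Under the degeneracy hypothesis \eqref{eq:3.23}, the linear term vanishes identically, so the leading behaviour is entirely quadratic.

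Next I would rescale by $2Tb_T$. Writing $Z_T := \sqrt{Tb_T}(\hat{\btheta}_T(u)-\btheta_0(u))$, Theorem~\ref{thm:3.3} gives $Z_T \dlim \mathcal{N}(\bm{0},\mathbb{V}(u))$, and the quadratic term becomes exactly $Z_T^{\top}\mathcal{H}(u)Z_T$. By the continuous mapping theorem applied to the (continuous, since $\mathcal{H}(u)$ is a fixed matrix) map $z \mapsto z^{\top}\mathcal{H}(u)z$, this converges in distribution to $\mathcal{N}(\bm{0},\mathbb{V}(u))^{\top}\mathcal{H}(u)\mathcal{N}(\bm{0},\mathbb{V}(u))$, which is \eqref{eq:3.25}. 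The remainder contributes $2Tb_T\cdot O_P(\|\hat{\btheta}_T(u)-\btheta_0(u)\|^3) = O_P((Tb_T)^{-1/2}) = o_P(1)$, using that $\hat{\btheta}_T(u)-\btheta_0(u) = O_P((Tb_T)^{-1/2})$ from Theorem~\ref{thm:3.3}; this is absorbed by Slutsky's theorem. For the final chi-squared statement, I would diagonalize: if $P := \mathbb{V}(u)^{-1/2}\mathcal{H}(u)\mathbb{V}(u)^{-1/2}$ is idempotent, then writing $\mathcal{N}(\bm{0},\mathbb{V}(u)) = \mathbb{V}(u)^{1/2}\bm{\xi}$ with $\bm{\xi}\sim\mathcal{N}(\bm{0},I_d)$, the quadratic form equals $\bm{\xi}^{\top}P\bm{\xi}$, which is distributed as $\chi^2_{\nu}$ with $\nu = \tr(P)$ by the standard spectral decomposition of quadratic forms in Gaussian vectors (eigenvalues of $P$ are $0$ or $1$, so $P = Q^{\top}\mathrm{diag}(I_\nu, 0)Q$ for orthogonal $Q$, and $\nu = \mathrm{rank}(P) = \tr(P)$).

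The main obstacle I anticipate is controlling the remainder $R_T$ \emph{uniformly}, i.e.\ verifying that the cubic term is genuinely $o_P((Tb_T)^{-1})$ rather than merely $O_P$ of that order: this requires either a bound on the third derivatives of ${\rm FGC}(\lambda;\btheta)$ on a neighbourhood of $\btheta_0(u)$ (available from the assumed smoothness of $\bm{f}_{\btheta}$, together with Assumption~\ref{asp:3.3} ensuring the denominator in ${\rm FGC}$ stays bounded away from zero near $\btheta_0(u)$), combined with the tightness of $\sqrt{Tb_T}(\hat{\btheta}_T(u)-\btheta_0(u))$ to confine $\hat{\btheta}_T(u)$ to such a neighbourhood with probability tending to one. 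A secondary, more bookkeeping-type point is to confirm that differentiation under the frequency integral in \eqref{eq:r4.3.25} is legitimate up to second order, which follows from the continuity and boundedness of $\partial^2 {\rm FGC}^{(2\to 1)}_{\btheta}(\lambda)/\partial\btheta\partial\btheta^{\top}$ on the compact set $[-\pi,\pi]\times(\text{neighbourhood of }\btheta_0(u))$; the dominated convergence theorem then handles the interchange.
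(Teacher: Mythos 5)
Your proposal follows essentially the same route as the paper's proof: a second-order Taylor expansion of ${\rm GC}^{(2\to 1)}(u;\cdot)$ about $\btheta_0(u)$ (the paper invokes Theorem 6.8 of Magnus--Neudecker for this), the vanishing of the linear term under \eqref{eq:3.23}, rescaling by $Tb_T$, and the continuous mapping theorem applied to the limit from Theorem \ref{thm:3.3}. Your additional remarks on controlling the cubic remainder via tightness of $\sqrt{Tb_T}(\hat{\btheta}_T(u)-\btheta_0(u))$ and on the spectral decomposition behind the chi-squared claim are correct elaborations of steps the paper leaves implicit.
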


\begin{eg}[Time-varying vector autoregression model] \label{eg:4} 
Suppose the multivariate locally Gaussian stationary process \eqref{eq:2.1} has the time-varying spectral density
\begin{equation*}
\fu = \frac{1}{2\pi} 
\bigl(I_2 + A(u) \exp(\I \lambda)\bigr)^{-1}
\begin{pmatrix}
\sigma_{11} & \sigma_{12} \\
\sigma_{12} & \sigma_{22}
\end{pmatrix}
\bigl(I_2 + A(u)^{\top} \exp(-\I \lambda)\bigr)^{-1},
\end{equation*}
where $A(u) = 
\begin{pmatrix}
\alpha_{11}(u) & \alpha_{12}(u) \\
\alpha_{21}(u) & \alpha_{22}(u)
\end{pmatrix}$
and
$\alpha_{12}(u) \equiv 0$.

We adopt the following parametric spectral density $\pf$ for model fitting.
\begin{equation*} 
\pf = \frac{1}{2\pi} 
\Biggl(I_2 + 
\begin{pmatrix}
a_{11} & a_{12} \\
a_{21} & a_{22}
\end{pmatrix} \exp(\I \lambda)\Biggr)^{-1}
\begin{pmatrix}
s_{11} & s_{12} \\
s_{12} & s_{22}
\end{pmatrix}
\Biggl(I_2 + 
\begin{pmatrix}
a_{11} & a_{12} \\
a_{21} & a_{22}
\end{pmatrix}^{\top} \exp(-\I \lambda)\Biggr)^{-1},
\end{equation*}
where $\btheta = (a_{11}, a_{12}, a_{21}, a_{22}, s_{11}, s_{12}, s_{22})^{\top}$.
From the definition \eqref{eq:3.11}, we have
\begin{equation*}
{\rm GC}^{(2 \to 1)}(u; \, \btheta) = 
\frac{1}{2 \pi} \int_{-\pi}^{\pi} 
-\log 
\Babs{
1 - 2 \pi 
\pff_{12}
\left(\tilde{\Sigma}_{\btheta, 22} \right)^{-1} 
\pff_{21} f_{\btheta}(\lambda)_{11}^{-1}}
\dif \lambda.
\end{equation*}
Thus, $ {\rm FGC}^{(2 \to 1)}_{\bm{\theta}}(\lambda)$ is
\[
{\rm FGC}^{(2 \to 1)}_{\bm{\theta}}(\lambda)
=
-\log 
\Babs{
1 - 2 \pi 
\pff_{12}
\left(\tilde{\Sigma}_{\btheta, 22} \right)^{-1} 
\pff_{21} f_{\btheta}(\lambda)_{11}^{-1}}.
\]
A straightforward computation leads to
\[
\frac{\partial}{\partial \btheta} 
{\rm FGC}^{(2 \to 1)}_{\bm{\theta}}(\lambda) \Bigr\vert_{a_{12} = 0}
= \bzero,
\qquad \text{for any $\lambda \in (-\pi, \pi]$.}
\]
In addition, it holds that
\[
\frac{\partial^2}{\partial a_{12}^2} 
{\rm FGC}^{(2 \to 1)}_{\bm{\theta}}(\lambda) \Bigr\vert_{\btheta = \btheta_0(u)}
= 
\frac{2 (\sigma_{11} \sigma_{22} - \sigma_{12}^2)^2}{\sigma_{11}^4} 
\frac{\abs{1 - \alpha_{11}(u) \exp(\I \lambda)}^2}{\abs{1 - \alpha_{22}(u) \exp(\I \lambda)}^2} > 0;
\]
and for all $\lambda \in (-\pi, \pi]$,
\[
\frac{\partial^2}{\partial \theta_i \partial \theta_j} 
{\rm FGC}^{(2 \to 1)}_{\bm{\theta}}(\lambda) \Bigr\vert_{\btheta = \btheta_0(u)}
= 
0, \quad
\text{for $\theta_i \not = a_{12}$ or $\theta_j \not = a_{12}$}.
\]

Let $\hat{\btheta}_T = \bigl(\hat{\alpha}_{11}(u), \hat{\alpha}_{12}(u), \hat{\alpha}_{21}(u), \hat{\alpha}_{22}(u), 
\hat{\sigma}_{11}, \hat{\sigma}_{12}, \hat{\sigma}_{22}\bigr)^{\top}$
be the local Whittle estimator defined in \eqref{eq:3.1}.
Applying Theorem \ref{thm:3.8}, we obtain
\begin{equation} \label{eq:asymp_dist}
T b_T \, \frac{\sigma_{11}^4 \Bigl(\int_{-1}^1 K(v)^2 \dif v\Bigr)^{-1}}
{\bigl(1 + \alpha_{11}(u)^2 - 2 \alpha_{11}(u) \alpha_{22}(u)\bigr)
\bigl(\sigma_{11} \sigma_{22} - \sigma_{12}^2\bigr)^2} {\rm CGC}(u;\,{\hat{\bm{\theta}}_T} )  
\dlim \chi^2_1,
\end{equation}
since 
\begin{multline*}
\mathcal{H}(u)_{22} =
\frac{1}{2\pi} \freqint
\frac{\partial^2}{\partial a_{12}^2} 
{\rm FGC}^{(2 \to 1)}_{\bm{\theta}}(\lambda) \Bigr\vert_{\btheta = \btheta_0(u)}
\dif \lambda\\
=
\frac{2\bigl(1 + \alpha_{11}(u)^2 - 2 \alpha_{11}(u) \alpha_{22}(u)\bigr)
\bigl(\sigma_{11}\sigma_{22} - \sigma_{12}^2\bigr)^2}
{\bigl(1-\alpha_{22}(u)^2\bigr) \sigma_{11}^4}
\end{multline*}
and
\[
\mathbb{V}(u)_{22} = \bigl(1 - \alpha_{22}(u)^2\bigr)
\int_{-1}^1 K(v)^2 \dif v.
\]
\end{eg}

\subsection{Hypothesis testing for causality measures}
We now address the hypothesis testing problem for the local measures ${\rm GC}^{(2 \to 1)}$.
Suppose that we want to test for local causality at a particular rescaled time $u \in [0, 1]$. 
Define the local hypothesis $H_0^{(2 \to 1)}$ to be
\begin{equation} \label{eq:3.27}
H_0^{(2 \to 1)}: {\rm GC}^{(2 \to 1)}(u) = c.
\end{equation}
We consider two cases of the null hypothesis \eqref{eq:3.27}: (i) 
$c = 0$, and (ii) $c > 0$. 

Let us first consider the case (i) $c = 0$.
%
For any fixed $u \in [0, 1]$, with the shorthand $\btheta = \btheta(u)$, we have
\[
{\rm GC}^{(2 \to 1)}(u; \, \btheta)= 
\frac{1}{2 \pi} \int_{-\pi}^{\pi} 
\log \frac{\abs{\pf_{11}}}
{\Babs{
\pf_{11} -
2 \pi 
\pff_{12}
\left(\tilde{\Sigma}_{\btheta, 22} \right)^{-1} 
\pff_{21} }}
\dif \lambda, 
\]
and thus, ${\rm GC}^{(2 \to 1)}(u; \, \btheta)= 0$ if and only if
\begin{equation} \label{eq:3.28}
\abs{\pf_{11}}
=
\Babs{
\pf_{11} -
2 \pi 
\pff_{12}
\left(\tilde{\Sigma}_{\btheta, 22} \right)^{-1} 
\pff_{21} }.
\end{equation}
Here, $\pf_{11} $ and $2 \pi 
\pff_{12}
\left(\tilde{\Sigma}_{\btheta, 22} \right)^{-1} 
\pff_{21}$
are both Hermitian. Thus, the equality \eqref{eq:3.28} holds if 
\[
2 \pi 
\pff_{12}
\left(\tilde{\Sigma}_{\btheta, 22} \right)^{-1} 
\pff_{21}
=
O_{m \times m}.
\]
It follows that $\pff_{12} = O_{m \times M}$, 
since $\left(\tilde{\Sigma}_{\btheta, 22} \right)^{-1} $ is positive definite.
Since $\pff_{12} = O_{m \times M}$, 
it is straightforward to see that
\[
\frac{\partial}{\partial \btheta} 
{\rm FGC}^{(2 \to 1)}_{\bm{\theta}}(\lambda)
= \bzero,
\]
which is the case we considered in Theorem \ref{thm:3.8}.

Accordingly, for the local hypothesis $H_0^{(2 \to 1)}: {\rm GC}^{(2 \to 1)}(u) = 0$, we take 
\[
S^{\dagger}(u): = 
2 T \, b_T \,
{\rm GC}^{(2 \to 1)}(u;\,{\hat{\bm{\theta}}_T})
\]
as the test statistic. Then we have the following result.
\begin{thm}
Suppose Assumptions \ref{asp:2.1}, \ref{asp:3.1}, \ref{asp:3.2} and \ref{asp:3.3} hold.
Under the null hypothesis $H_0^{(2 \to 1)}: {\rm GC}^{(2 \to 1)}(u) = 0$, 
if $b_T^{-1} = o\bigl(T(\log T)^{-6}\bigr)$ and $b_T = o(T^{-1/5})$, it holds that
\begin{equation*}
S^{\dagger}(u)
\dlim
\mathcal{N}\bigl(0, \mathbb{V}(u)\bigr)^{\top} \mathcal{H}(u) \mathcal{N}\bigl(0, \mathbb{V}(u)\bigr).
\end{equation*}
where $\mathcal{N}\bigl(0, \mathbb{V}(u)\bigr)$ is defined in Theorem \ref{thm:3.3}.
\end{thm}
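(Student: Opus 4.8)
The plan is to show that, under $H_0^{(2\to1)}$, we are in the degenerate regime of Theorem~\ref{thm:3.8}, and then to invoke that theorem. The conceptually central step is the reduction argument foreshadowed just before the statement. Under the null the parametric measure vanishes at the Whittle pseudo-true value, ${\rm GC}^{(2\to1)}(u;\btheta_0)=0$ (this equals the true measure when the parametric family is correctly specified; otherwise it is the form in which the null is to be read), and from the displayed representation of ${\rm GC}^{(2\to1)}(u;\btheta)$ this is equivalent to
\[
\bigl|\bm{f}_{\btheta_0}(\lambda)_{11}\bigr| = \Bigl|\bm{f}_{\btheta_0}(\lambda)_{11}-2\pi\,\bm{\mathfrak{f}}_{\btheta_0}(\lambda)_{12}\,\tilde{\Sigma}_{\btheta_0,22}^{-1}\,\bm{\mathfrak{f}}_{\btheta_0}(\lambda)_{21}\Bigr|\quad\text{for a.e.\ }\lambda.
\]
Since $\bm{f}_{\btheta_0}(\lambda)_{11}$ is Hermitian positive definite, the subtracted matrix is Hermitian nonnegative definite, and $\tilde{\Sigma}_{\btheta_0,22}^{-1}$ is positive definite by Assumption~\ref{asp:3.3}, this determinant identity forces $\bm{\mathfrak{f}}_{\btheta_0}(\lambda)_{12}=O_{m\times M}$ a.e.; differentiating ${\rm FGC}^{(2\to1)}_{\btheta}(\lambda)$ in $\btheta$ and using the explicit dependence of $\bm{\mathfrak{f}}_{\btheta}$ on $\bm{f}_{\btheta}$ and $\Sigma_{\btheta}$ (cf.\ \eqref{eq:r3.2.6}, \eqref{eq:r4.2.9}) shows that every summand in the derivative carries a factor $\bm{\mathfrak{f}}_{\btheta_0}(\lambda)_{12}$ or $\bm{\mathfrak{f}}_{\btheta_0}(\lambda)_{21}$, so \eqref{eq:3.23} holds. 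Consequently, since ${\rm GC}^{(2\to1)}(u;\btheta_0)=0$, the test statistic is precisely the centralized quantity of Theorem~\ref{thm:3.8}:
\[
S^{\dagger}(u)=2Tb_T\,{\rm GC}^{(2\to1)}(u;\hat{\btheta}_T)=2Tb_T\bigl({\rm GC}^{(2\to1)}(u;\hat{\btheta}_T)-{\rm GC}^{(2\to1)}(u;\btheta_0)\bigr)=2Tb_T\,{\rm CGC}^{(2\to1)}(u;\hat{\btheta}_T),
\]
and Theorem~\ref{thm:3.8}, with $\mathcal{H}(u)$ as in \eqref{eq:r4.3.25}, gives at once the claimed limit $\mathcal{N}\bigl(0,\mathbb{V}(u)\bigr)^{\top}\mathcal{H}(u)\,\mathcal{N}\bigl(0,\mathbb{V}(u)\bigr)$.

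For completeness I would recall the mechanism of Theorem~\ref{thm:3.8}, as that is where the analytic work sits. By Theorem~\ref{thm:3.3}, $\hat{\btheta}_T(u)\plim\btheta_0(u)$ and $\sqrt{Tb_T}\,(\hat{\btheta}_T(u)-\btheta_0(u))\dlim\mathcal{N}(0,\mathbb{V}(u))$. On a shrinking neighbourhood of $\btheta_0(u)$ a second-order Taylor expansion of $\btheta\mapsto{\rm GC}^{(2\to1)}(u;\btheta)$ — legitimate because $\bm{f}_{\btheta}(\lambda)$ and $\Sigma_{\btheta}$ are twice continuously differentiable and stay nonsingular there by Assumptions~\ref{asp:3.2} and \ref{asp:3.3} — yields
\[
{\rm GC}^{(2\to1)}(u;\hat{\btheta}_T)={\rm GC}^{(2\to1)}(u;\btheta_0)+\tfrac12\bigl(\hat{\btheta}_T-\btheta_0\bigr)^{\top}\mathcal{H}(u)\bigl(\hat{\btheta}_T-\btheta_0\bigr)+R_T,
\]
where the linear term vanishes by \eqref{eq:3.23} and $R_T=o_P\bigl(\norm{\hat{\btheta}_T-\btheta_0}^2\bigr)=o_P\bigl((Tb_T)^{-1}\bigr)$. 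Multiplying by $2Tb_T$ and writing $Z_T:=\sqrt{Tb_T}\,(\hat{\btheta}_T-\btheta_0)$ gives $S^{\dagger}(u)=Z_T^{\top}\mathcal{H}(u)Z_T+o_P(1)$; the continuous mapping theorem together with Slutsky's lemma then delivers the quadratic-form limit, which collapses to $\chi^2_\nu$ with $\nu=\tr(\mathbb{V}(u)^{-1/2}\mathcal{H}(u)\mathbb{V}(u)^{-1/2})$ exactly when that matrix is idempotent.

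The step I expect to be the main obstacle is the uniform control of the Taylor remainder $R_T$: one must verify that the third-order $\btheta$-derivatives of ${\rm FGC}^{(2\to1)}_{\btheta}(\lambda)$ are integrable in $\lambda$ and locally bounded in $\btheta$ near $\btheta_0(u)$, which rests on Assumption~\ref{asp:3.3} keeping $\det\bm{f}_{\btheta}(\lambda)_{11}$ and $\det\tilde{\Sigma}_{\btheta,22}$ bounded away from zero, together with the smoothness in Assumption~\ref{asp:3.2}. A subtler point, used implicitly above, is that $\bm{\mathfrak{f}}_{\btheta_0}(\lambda)_{12}=O_{m\times M}$ has to be established at the Whittle pseudo-true value $\btheta_0(u)$ rather than merely at the data-generating truth; under correct specification the two coincide, and otherwise one simply reads the null as ${\rm GC}^{(2\to1)}(u;\btheta_0)=0$.
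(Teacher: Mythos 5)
Your proposal is correct and follows essentially the same route as the paper: the text preceding the theorem establishes that under the null the determinant identity forces $\bm{\frak{f}}_{\btheta_0}(\lambda)_{12}=O_{m\times M}$, hence the gradient condition \eqref{eq:3.23}, so that $S^{\dagger}(u)=2Tb_T\,{\rm CGC}^{(2\to 1)}(u;\hat{\btheta}_T)$ and the conclusion is a direct application of Theorem \ref{thm:3.8} (whose second-order Taylor mechanism you recall exactly as in the paper's appendix). Your added remarks --- that the determinant equality need only hold a.e.\ and genuinely forces the Hermitian nonnegative-definite term to vanish, and that the reduction must be carried out at the pseudo-true value $\btheta_0(u)$ --- are slightly more careful than the paper's own wording but do not change the argument.
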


\begin{rem} \label{rem:3.10}
The matrix $\mathcal{H}(u)$ in \eqref{eq:r4.3.25} is unknown in general, 
but it is determinable from the parameterization of $\pf$.
In practice, the matrix $\mathcal{H}(u)$ should be replaced with its plug-in version 
\[
\mathcal{H}(u; \, \hat{\btheta}_T)
:=
\frac{1}{2\pi} \freqint 
\frac{\partial^2}{\partial \btheta \partial \btheta^{\top}} 
{\rm FGC}^{(2 \to 1)}_{\bm{\theta}}(\lambda) \Bigr\vert_{\btheta = \hat{\btheta}_T(u)}
\dif \lambda
\]
to construct an asymptotic $(1- \alpha)$ confidence interval for ${\rm GC}^{(2 \to 1)}(u)$.
Instead, in some situation,
for instance, as a continuation of Example \ref{eg:4}, if we take
\[
\tilde{S}^{\dagger}(u): = 
T \, b_T \,
\frac{\hat{\sigma}_{11}^4 \Bigl(\int_{-1}^1 K(v)^2 \dif v\Bigr)^{-1}}
{\bigl(1 + \hat{\alpha}_{11}(u)^2 - 2 \hat{\alpha}_{11}(u) \hat{\alpha}_{22}(u)\bigr)
\bigl(\hat{\sigma}_{11} \hat{\sigma}_{22} - \hat{\sigma}_{12}^2\bigr)^2} 
{\rm CGC}^{(2 \to 1)}(u;\,{\hat{\bm{\theta}}_T})
\]
as a sample version of the left hand side in \eqref{eq:asymp_dist},
then the confidence interval is $[0, \chi^2_{1, 1 - \alpha}]$, where 
$\chi^2_{1, 1 - \alpha}$ denotes the $(1- \alpha)$ quantile of the chi-squared distribution 
with 1 degree of freedom.
It is straightforward to see that the test by $\tilde{S}^{\dagger}(u)$ is consistent.
If the null hypothesis is rejected, 
then the conclusion is that at level $\alpha$ there is sufficient evidence to suggest that 
there exists local Granger causality from one series to another at rescaled time $u \in [0, 1]$.
\end{rem}

We now move on to the second case (ii) $c > 0$. 
The Wald type test statistic $S^*(u)$ is
\begin{multline} \label{eq:3.29}
S^*(u) =  Tb_T\Bigl({\rm GC}^{(2 \to 1)}(u;\,{\hat{\bm{\theta}}_T}) - c \Bigr)\\
\qquad \qquad \quad
\left[ 
\left(\nabla {\rm GC}^{(2 \to 1)}(u;\,{\hat{\bm{\theta}}_T})  \right)
\mathbb{V}(u) 
\left( \nabla {\rm GC}^{(2 \to 1)}(u;\,{\hat{\bm{\theta}}_T}) \right)^{\top}
\right]^{-1}
\Bigl({\rm GC}^{(2 \to 1)}(u;\,{\hat{\bm{\theta}}_T}) - c \Bigr).
\end{multline}
The following result is a direct consequence of Theorem \ref{thm:3.7}.

\begin{thm}
Suppose Assumptions \ref{asp:2.1}, \ref{asp:3.1}, \ref{asp:3.2} and \ref{asp:3.3} hold.
Under the null hypothesis $H_0^{(2 \to 1)}: {\rm GC}^{(2 \to 1)}(u) = c  > 0$, 
if $b_T^{-1} = o\bigl(T(\log T)^{-6}\bigr)$ and $b_T = o(T^{-1/5})$, we have
\begin{equation*}
S^*(u) \dlim \chi^2_d,
\end{equation*}
where $\chi^2_d$ is a chi-squared distribution with the degrees of freedom $d$.
\end{thm}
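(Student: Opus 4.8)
The plan is to read this off Theorem~\ref{thm:3.7} by a standard Wald-type studentization, the only non-routine points being to confirm that we are in the regular regime in which Theorem~\ref{thm:3.7} applies, and to establish plug-in consistency of the variance factor in the denominator of~\eqref{eq:3.29}. First, under $H_0^{(2\to1)}$ we have ${\rm GC}^{(2\to1)}(u;\btheta_0)=c>0$, and since
\[
{\rm GC}^{(2\to1)}(u;\btheta_0)=\frac{1}{2\pi}\freqint \log\frac{\abs{\bm f_{\btheta_0(u)}(\lambda)_{11}}}{\Babs{\bm f_{\btheta_0(u)}(\lambda)_{11}-2\pi\,\bm{\mathfrak f}_{\btheta_0(u)}(\lambda)_{12}\,\tilde\Sigma_{\btheta_0(u),22}^{-1}\,\bm{\mathfrak f}_{\btheta_0(u)}(\lambda)_{21}}}\dif\lambda ,
\]
a strictly positive value precludes $\bm{\mathfrak f}_{\btheta_0(u)}(\cdot)_{12}\equiv O_{m\times M}$ (which would make numerator and denominator coincide and force the measure to vanish), so in particular we are not in the degenerate situation of Theorem~\ref{thm:3.8}. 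We take this $c>0$ case to be the regular one, i.e.\ condition~\eqref{eq:3.20} together with $\nabla{\rm GC}^{(2\to1)}(u;\btheta_0)\neq\bm 0$ holds (the natural non-degeneracy requirement of Theorem~\ref{thm:3.7}; otherwise the limit would be a general quadratic form as in Theorem~\ref{thm:3.8} rather than a chi-square). With this in hand, Theorem~\ref{thm:3.7} applied at $\btheta_0(u)$ with ${\rm GC}^{(2\to1)}(u;\btheta_0)=c$ gives
\[
\sqrt{Tb_T}\bigl({\rm GC}^{(2\to1)}(u;\hat\btheta_T)-c\bigr)\dlim\mathcal N\bigl(0,\mathbb V^{\text{GC}}(u)\bigr),\qquad \mathbb V^{\text{GC}}(u)=\bigl(\nabla{\rm GC}^{(2\to1)}(u;\btheta_0)\bigr)^{\top}\mathbb V(u)\bigl(\nabla{\rm GC}^{(2\to1)}(u;\btheta_0)\bigr)>0,
\]
where positivity is inherited from the positive definiteness of $\mathbb V(u)$ (established in the proof of Theorem~\ref{thm:3.3} from that of $M_{f,0}^u$ and $\mathbb V^{\btheta}(u)$) together with $\nabla{\rm GC}^{(2\to1)}(u;\btheta_0)\neq\bm0$.

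Next I would verify plug-in consistency of the bracketed quantity in~\eqref{eq:3.29}. Theorem~\ref{thm:3.3} already yields $\hat\btheta_T(u)-\btheta_0(u)=O_P\bigl((Tb_T)^{-1/2}\bigr)=o_P(1)$, so $\hat\btheta_T(u)\plim\btheta_0(u)$. It then remains to note that $\btheta\mapsto\nabla{\rm GC}^{(2\to1)}(u;\btheta)$ is continuous on a neighbourhood of $\btheta_0(u)$: by Assumption~\ref{asp:3.2}(iii),(iv) the entries of $\bm f_{\btheta}(\lambda)$ and their $\btheta$-derivatives are continuous (indeed $C^1$), and by Assumption~\ref{asp:3.3} the matrices $\Sigma_{\btheta}$, $\tilde\Sigma_{\btheta,22}$, $\bm f_{\btheta}(\lambda)_{11}$ and $\bm{\mathfrak f}_{\btheta}(\lambda)$ depend continuously on $\btheta$ and stay non-degenerate near $\btheta_0(u)$, so every matrix inverse entering ${\rm FGC}(\lambda;\btheta)$ is well defined and jointly continuous in $(\btheta,\lambda)$; dominated convergence transfers this to $\frac{1}{2\pi}\freqint \partial_{\theta_i}{\rm FGC}^{(2\to1)}_{\btheta(u)}(\lambda)\dif\lambda$. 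The continuous mapping theorem now gives $\nabla{\rm GC}^{(2\to1)}(u;\hat\btheta_T)\plim\nabla{\rm GC}^{(2\to1)}(u;\btheta_0)$, hence $\bigl(\nabla{\rm GC}^{(2\to1)}(u;\hat\btheta_T)\bigr)\mathbb V(u)\bigl(\nabla{\rm GC}^{(2\to1)}(u;\hat\btheta_T)\bigr)^{\top}\plim\mathbb V^{\text{GC}}(u)$.

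Finally I would assemble the pieces. Writing $\xi_T:=\sqrt{Tb_T}\bigl({\rm GC}^{(2\to1)}(u;\hat\btheta_T)-c\bigr)$ and $\widehat W_T$ for the bracketed factor in~\eqref{eq:3.29}, one has $S^*(u)=\xi_T^{\top}\widehat W_T^{-1}\xi_T$ with $\xi_T\dlim\mathcal N\bigl(0,\mathbb V^{\text{GC}}(u)\bigr)$ and $\widehat W_T\plim\mathbb V^{\text{GC}}(u)>0$; Slutsky's theorem and the continuous mapping theorem then give $S^*(u)\dlim Z^{\top}\bigl(\mathbb V^{\text{GC}}(u)\bigr)^{-1}Z$ with $Z\sim\mathcal N\bigl(0,\mathbb V^{\text{GC}}(u)\bigr)$, i.e.\ $S^*(u)\dlim\chi^2_d$ (the quadratic form of an asymptotically normal statistic studentized by a consistent estimate of its asymptotic variance, so the degrees of freedom equal the number of restrictions tested). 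I expect no deep obstacle --- this is essentially a corollary of Theorem~\ref{thm:3.7} --- but the step needing the most care is the plug-in continuity of ${\rm FGC}(\lambda;\btheta)$ and its gradient, where the maximal-rank Assumption~\ref{asp:3.3} (keeping $\Sigma_{\btheta,22}$, $\tilde\Sigma_{\btheta,22}$, $\bm f_{\btheta}(\lambda)_{11}$ invertible near $\btheta_0(u)$) is what makes the argument go through; conceptually, the one place where the hypothesis value $c>0$ genuinely enters is in guaranteeing that we are in the regular regime of Theorem~\ref{thm:3.7} rather than the degenerate one of Theorem~\ref{thm:3.8}.
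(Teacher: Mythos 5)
Your overall strategy is exactly what the paper intends: the paper offers no written proof of this theorem, stating only that it is ``a direct consequence of Theorem \ref{thm:3.7}'', and your filling-in (apply Theorem \ref{thm:3.7} under $H_0$, establish plug-in consistency of $\nabla {\rm GC}^{(2\to1)}(u;\hat{\btheta}_T)$ via the continuous differentiability shown in the proof of Theorem \ref{thm:3.7} and Assumption \ref{asp:3.3}, then Slutsky and the continuous mapping theorem) is the natural and correct way to do it.

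There is, however, a genuine gap at the very last step. The quantity ${\rm GC}^{(2\to1)}(u;\hat{\btheta}_T)-c$ is a \emph{scalar}, and the bracketed factor in \eqref{eq:3.29} is the scalar $\mathbb{V}^{\rm GC}(u)$ (up to plug-in), so your own chain of reasoning delivers $S^*(u)\dlim Z^2/\mathbb{V}^{\rm GC}(u)$ with $Z\sim\mathcal N(0,\mathbb{V}^{\rm GC}(u))$, i.e.\ a $\chi^2_1$ limit. Your closing parenthetical --- that the degrees of freedom equal the number of restrictions tested --- in fact confirms this, since $H_0^{(2\to1)}:{\rm GC}^{(2\to1)}(u)=c$ is a single scalar restriction; the jump to $\chi^2_d$ with $d=\dim\Theta$ does not follow from anything you wrote and contradicts your own justification unless $d=1$. (This discrepancy originates in the theorem statement itself, but as a proof of the stated conclusion your argument does not close.) A second, more minor point: you assert that $c>0$ places us in the regular regime of Theorem \ref{thm:3.7}. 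What $c>0$ actually rules out is $\bm{\mathfrak f}_{\btheta_0(u)}(\cdot)_{12}\equiv O_{m\times M}$, which is the specific mechanism the paper uses (in the $c=0$ discussion) to force $\nabla{\rm FGC}^{(2\to1)}_{\btheta}\equiv\bzero$; it does not by itself prove that the integrated gradient $\nabla{\rm GC}^{(2\to1)}(u;\btheta_0)$ is nonzero, so condition \eqref{eq:3.20} (and the nonvanishing of $\mathbb{V}^{\rm GC}(u)$) is being assumed rather than derived. The paper makes the same implicit assumption, so this is not a divergence from the intended argument, but it should be stated as a hypothesis rather than presented as a consequence of $c>0$.
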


\begin{rem}
The covariance matrix $\mathbb{V}(u)$ in \eqref{eq:3.29} is usually unknown,
and thus, we have to construct a consistent estimator $\hat{\mathbb{V}}(u)$ instead.
This can be done by following \cite{keenan1987} or \cite{taniguchi1982}.
\end{rem} 

\section{Numerical Simulations} \label{sec:4}
In this section, we investigate the finite sample performance 
of our proposed local Whittle estimation and the hypothesis testing for non-causality
of multiple time series.

\subsection{Finite sample performance of local Whittle estimation}
Let $\{\xt\}$ be a multivariate locally stationary process defined by
\begin{equation} \label{eq:4.1}
\xt = A\Bigl(\frac{t}{T}\Bigr) \bm{X}_{t-1, T} + \bm{\epsilon}_t,
\qquad
\bm{\ep}_t \sim \mathcal{N}(\bm{0}, I_2).
\end{equation}
We consider two examples for the time-varying coefficient matrix $A(\cdot)$.
Let $A^{({\rm i})}(u)$ and $A^{({\rm ii})}(u)$ be 
\[
A^{({\rm i})}(u) = 
\begin{pmatrix}
1/2 & a_{12}(u) \\
0 & 1/2\\
\end{pmatrix}, \quad
A^{({\rm ii})}(u)
=
\begin{pmatrix}
7/10 & a_{12}(u) \\
0 & 3/10\\
\end{pmatrix},
\]
where $a_{12}(u)$ is defined as
\begin{equation} \label{eq:n4.32}
a_{12}(u)
=
\begin{cases}
0, & \quad \text{$0 \leq u \leq 1/\pi$},\\
\pi (u - 1/\pi)/2, & \quad \text{$1/\pi \leq u \leq 2/\pi$}, \\
1/2, & \quad  \text{$2/\pi \leq u \leq 1$}.
\end{cases}
\end{equation}
Note that $a_{12}$ is a piecewise continuous function.
Additionally, $a_{12}$ is not differentiable at $1/\pi$ and $2/\pi$,
but $a_{12}(t/T)$ does not take values on those points since they are irrational.
Thus, $a_{12}$ could be replaced by a differentiable function instead.

To study the finite sample performance of the local Whittle estimation,
define $\bm{f}_{\theta}^{({\rm i})}$ and $\bm{f}_{\theta}^{({\rm ii})}$ to be 
\[
\bm{f}_{\theta}^{(\bullet)}(\lambda) = 
\frac{1}{2 \pi} A_{\theta}^{(\bullet)}(\lambda)^{-1} \Bigl(A_{\theta}^{(\bullet)}(-\lambda)^{\T}\Bigr)^{-1},
\qquad
\bullet = \text{(i) or (ii)},
\]
where $A_{\theta}^{({\rm i})}(u, \lambda)$ and $A_{\theta}^{({\rm ii})}(u, \lambda)$ are
\[
A_{\theta}^{({\rm i})}(u, \lambda)
=
I_2 - 
\begin{pmatrix}
1/2 & \theta \\
0 & 1/2\\
\end{pmatrix} \exp(-\I \lambda), \quad
A_{\theta}^{({\rm ii})}(u, \lambda)
=
I_2 - 
\begin{pmatrix}
7/10 & \theta \\
0 & 3/10\\
\end{pmatrix} \exp(-\I \lambda).
\]
The local Whittle likelihood estimator $\hat{\theta}_T(u)$ is defined as in \eqref{eq:3.1},
where the Epanechnikov kernel is used with the bandwidth $b_T = 4 T^{1/3}$.
The procedure was examined over 100 simulations for $T = 50$ and $T = 100$, respectively.

Define $\hat{\theta}_T^{(j)}(u)$ to be the estimate from the $j$th dataset where 
$j = 1, \dots, 100$.
The numerical results 
of the mean, 5th and 95th percentile of the estimates $\{\hat{\theta}_T^{(j)}(u)\}$
are shown in Figure \ref{fig:4.1}.
Both results for Examples (i) and (ii) are similar.
In this finite sample performance, 
we see that the parameter estimation loses some accuracy when $u$ is close to edges of $[0, 1]$.
On the other hand, the estimate $\hat{\theta}_T(u)$ works well around the center of the interval. 
The estimates capture the general form of the true function well.
We can expect that the performance could be better if the change in the function $a_{12}(u)$ is smaller.
Here, we remark that the estimation is asymptotically unbiased from the theoretical results.
Compared with the case when $T = 50$,
the confidence interval has narrower widths with the same coverage probability when $T = 100$.
These observations justify our theoretical results.

\begin{figure}
\begin{center}
 \subfigure 
 {\includegraphics[clip, width=0.3\columnwidth]{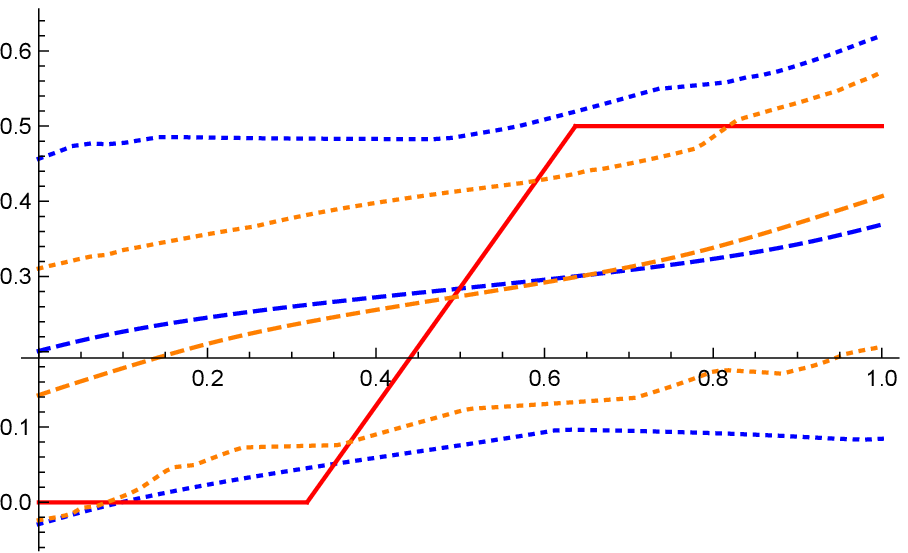}}  \qquad  \qquad 
 \subfigure 
 {\includegraphics[clip, width=0.3\columnwidth]{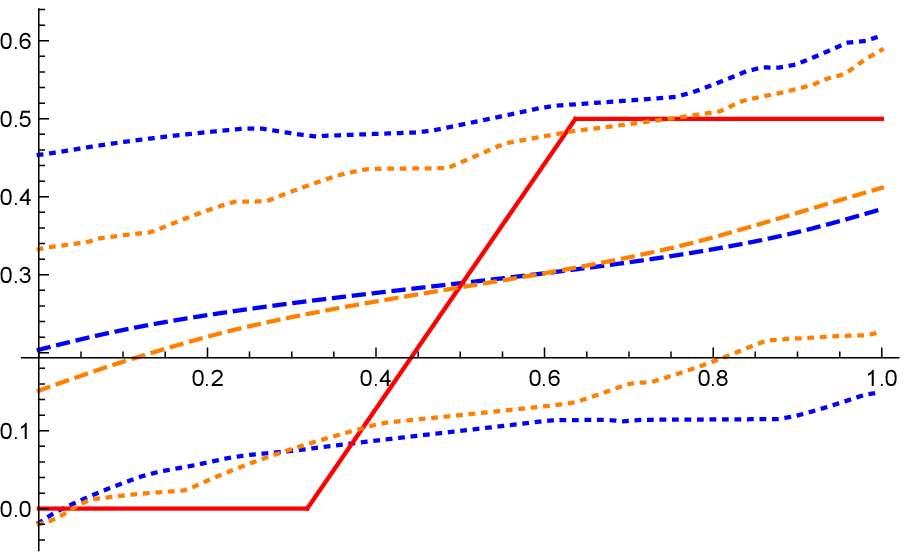}} 
 \caption{
 Example (i) (left) and Example (ii) (right). The function $a_{12}$ is shown in red;
 The mean, 5th and 95th percentile of the estimates when $T = 50$ are shown in blue;
 The mean, 5th and 95th percentile of the estimates when $T = 100$ are shown in orange.}
 \label{fig:4.1}
\end{center}
\end{figure}

\subsection{Testing for non-causality}
Let us consider the model \eqref{eq:4.1} again.
We replace the function $a_{12}(u)$ in \eqref{eq:n4.32} by
\begin{equation} \label{eq:4.2}
a_{12}(u)
=
\begin{cases}
0, & \quad \text{$0 \leq u \leq \frac{5}{2\pi}$},\\
\frac{\pi}{2} (u - \frac{5}{2\pi}), & \quad \text{$\frac{5}{2\pi} \leq u \leq 1$}.
\end{cases}
\end{equation}
In this simulation study, we tested the local hypothesis
\begin{equation} \label{eq:4.3}
H_0^{(2 \to 1)}: GC^{(2 \to 1)}(u) = 0.
\end{equation}
for the following values of rescaled time $u = 0.1, 0.3, 0.5, 0.7, 0.9$.
The null hypothesis is rejected whenever
\begin{equation} \label{eq:4.4}
\tilde{S}^{\dagger}(u) > \chi_{1, 1 - \alpha}^2,
\end{equation}
which the details for have been considered in Remark \ref{rem:3.10}.

The rejection probabilities of the test \eqref{eq:4.4}
based on 1000 simulations are reported in Table 1. 
The rejection probabilities when $u = 0.1$ is very close to the nominal significance level $\alpha$.
However, the rejection probabilities when $u = 0.3, 0.5, 0.7$
seem conservative compared with the nominal significance level.
This stems from the different asymptotic distributions between $\alpha_{12} = 0$
and $\alpha_{12} \not = 0$ (see Theorems \ref{thm:3.7} and \ref{thm:3.8}).
Table 1 
confirms that the local power for testing the null hypothesis \eqref{eq:4.3} increase with $u$,
because of the feature of the function $a_{12}$ in the model \eqref{eq:4.1}.
In summary, the testing for non-local Granger causality
can be seen as a methodology to find a predictive sign 
to detect the structural change of a dynamic model.

\begin{table} 
\label{tbl:n4.1} 
\caption{Rejection probabilities of the test \eqref{eq:4.4} for the model \eqref{eq:4.1}
with $a_{12}$ in \eqref{eq:4.2}.}
\begin{center}
\begin{tabular}{cccccc} \hline
$\alpha$ & $u = 0.1$ & $u = 0.3$ & $u = 0.5$ & $u = 0.7$ & $u = 0.9$ \\ \hline
1\% & 0.007 & 0.000 & 0.002 & 0.002 & 0.019 \\  
5\% & 0.043 & 0.020 & 0.015 & 0.030 & 0.065 \\ 
10\% & 0.096 & 0.053 & 0.037 & 0.081 & 0.123 \\
15\% & 0.152 & 0.085 & 0.071 & 0.111 & 0.186 \\ \hline
\end{tabular}
\end{center}
\end{table}%

\section{Data Analysis} \label{sec:5}
In this section, we apply local Granger causality to two real datasets -- EEG data and financial data.

\subsection{EEG data}
We provide a brief description of the data.
The EEG signals are sampled at the rate of 100 Hertz. The recordings are taken from channels 
the central channels (C3, C4, Cz), parietal channels (P3, P4) and the temporal channels (T3, T4, T5)
which correspond roughly to the central, parietal and temporal brain cortical regions (See Figure \ref{fig:5.1}). 
The original dataset has 32680 time points for each channel (i.e., the period of the observation is 
326.8 seconds). This dataset was previously analyzed in \cite{osg2005} and \cite{so2019}. However, 
none of these two papers addressed the very important issue of causality. This is the first paper that 
examined local Granger causality features in this data.

\begin{figure}[H]
\begin{center}
{\includegraphics[clip, width=0.3\columnwidth]{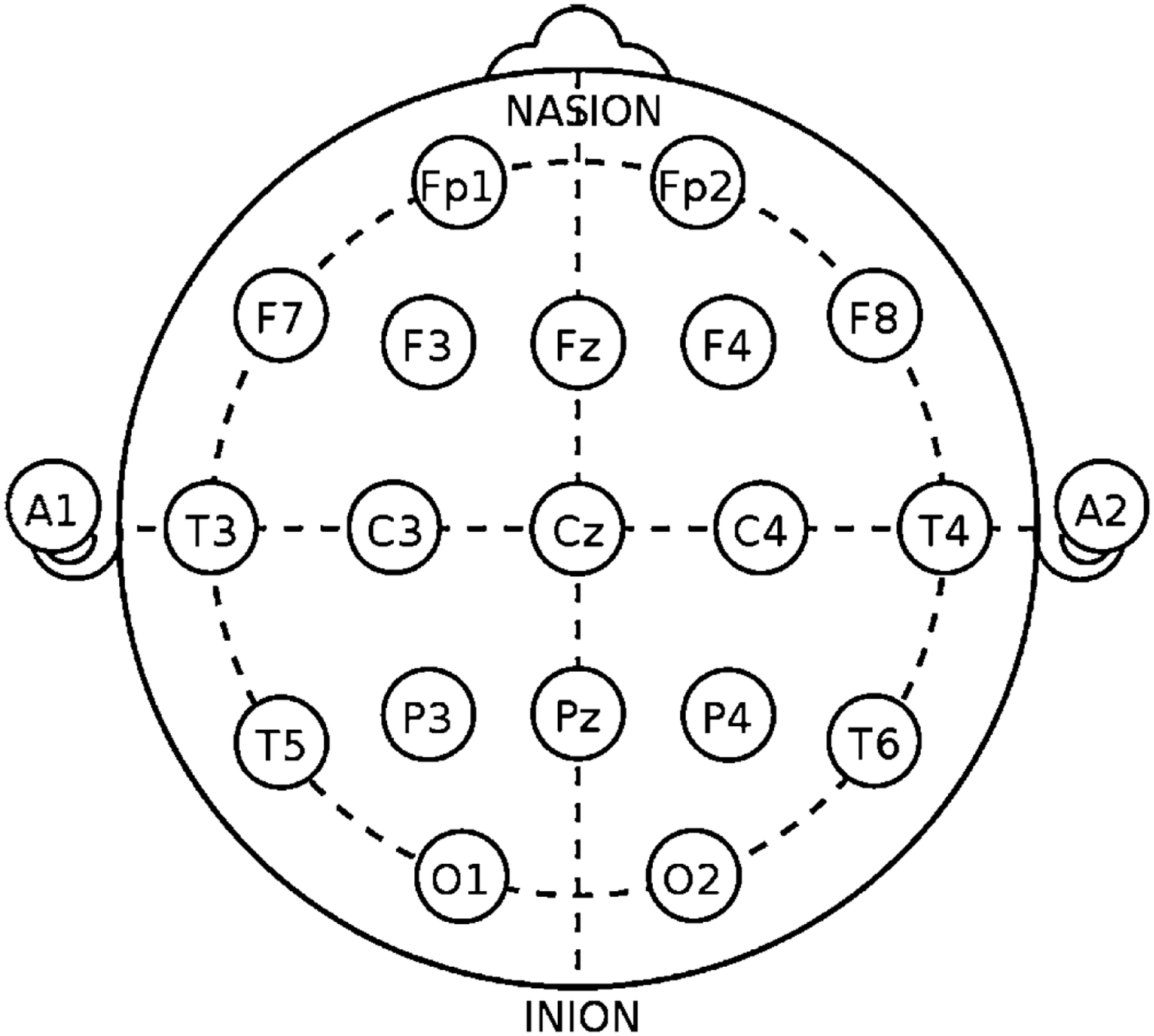}}  
 \caption{EEG channels.} 
 \label{fig:5.1}
\end{center}
\end{figure}

Local Granger causality was estimated and tested at 
every rescaled time point $u_k = 2.1 k/326.8$ and $u_k = 4.2 k/326.8$,
due to the computational cost of the local Whittle estimation.
This is equivalent to estimating and testing every $2.1$ and $4.2$ seconds respectively.
We refer to these partial data as one at regular intervals of 2.1 seconds and 4.2 seconds.
In Figure \ref{fig:5.6}, 
we show the logarithm of
local Granger causality between two specific channels P3 (left parietal) and T3 (left temporal). 
These two channels are of primary interest because the patient suffered from left temporal lobe epilepsy - though the precise location is quite close to the parietal lobe. 
Thus the seizure focus is the left temporal lobe and any abnormalities in the EEG are captured in 
the T3 and P3 channels. 
The 95\% confidence intervals are shown below: 
the dashed one is computed from the data at regular intervals of 4.2 seconds;
the dotted one is computed from the data at regular intervals of 2.1 seconds.

\begin{figure}[H]
\center
 \subfigure 
 {\includegraphics[clip, width=0.35\columnwidth]{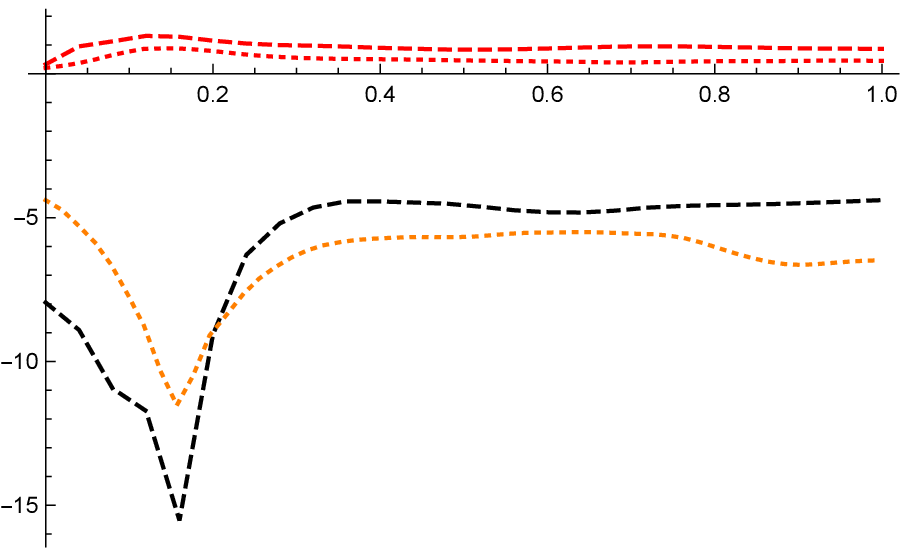}}  \qquad  \qquad 
 \subfigure 
 {\includegraphics[clip, width=0.4\columnwidth]{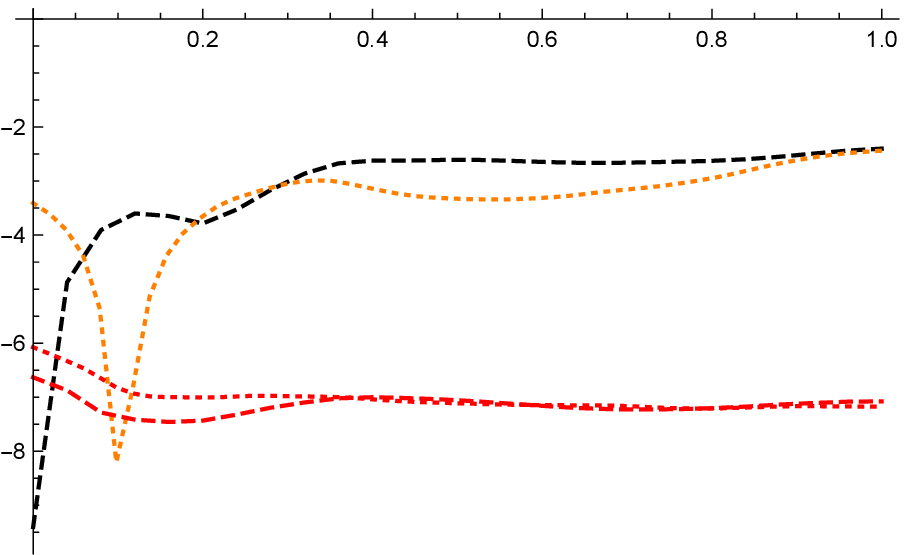}}  
 \caption{The logarithm of local Granger causality (LLGC) from the channel T3 to P3 (left) and that from the channel P3 to T3 (right) during the rescaled time $u \in [0, 1]$.
 The LLGC for data at regular intervals of 4.2 seconds is shown in black and 
 that for those of 2.1 seconds are shown in orange. 
 The dashed red line shows the 95\% confidence levels computed from the data 
 at regular intervals of 4.2 seconds,
 while the dotted red line shows the 95\% confidence levels computed from the data 
 at regular intervals of 2.1 seconds. } 
 \label{fig:5.6}
\end{figure}

The partial data at regular intervals of 2.1 seconds and 4.2 seconds share a very similar move of 
the logarithm of local Granger causality and the 95\% confidence intervals are also very similar.
In general, the higher temporal resolution the sampling is,
theoretically more accurate the estimates are.
In our simulation results,
the estimates from the data at regular intervals of 2.1 seconds are as good as that of 4.2 seconds.
The analysis suggests that T3 does not cause P3 in the Granger sense,
but P3 causes T3 in the Granger sense at latest after the rescaled time $u = 0.15$. 
This is a quite interesting finding because previous analyses have focused on the T3 channel because of the 
distinctly large amplitudes immediately post-seizure onset. However, the novel finding here is that 
the direction of Granger causality actually flows from P3 to T3. This suggests that, despite the 
relatively lower amplitude changes in P3, it still explains the future large amplitude fluctuations in 
T3. 

\begin{figure}[H]
\begin{flushright}
{\includegraphics[width= 0.9\columnwidth]{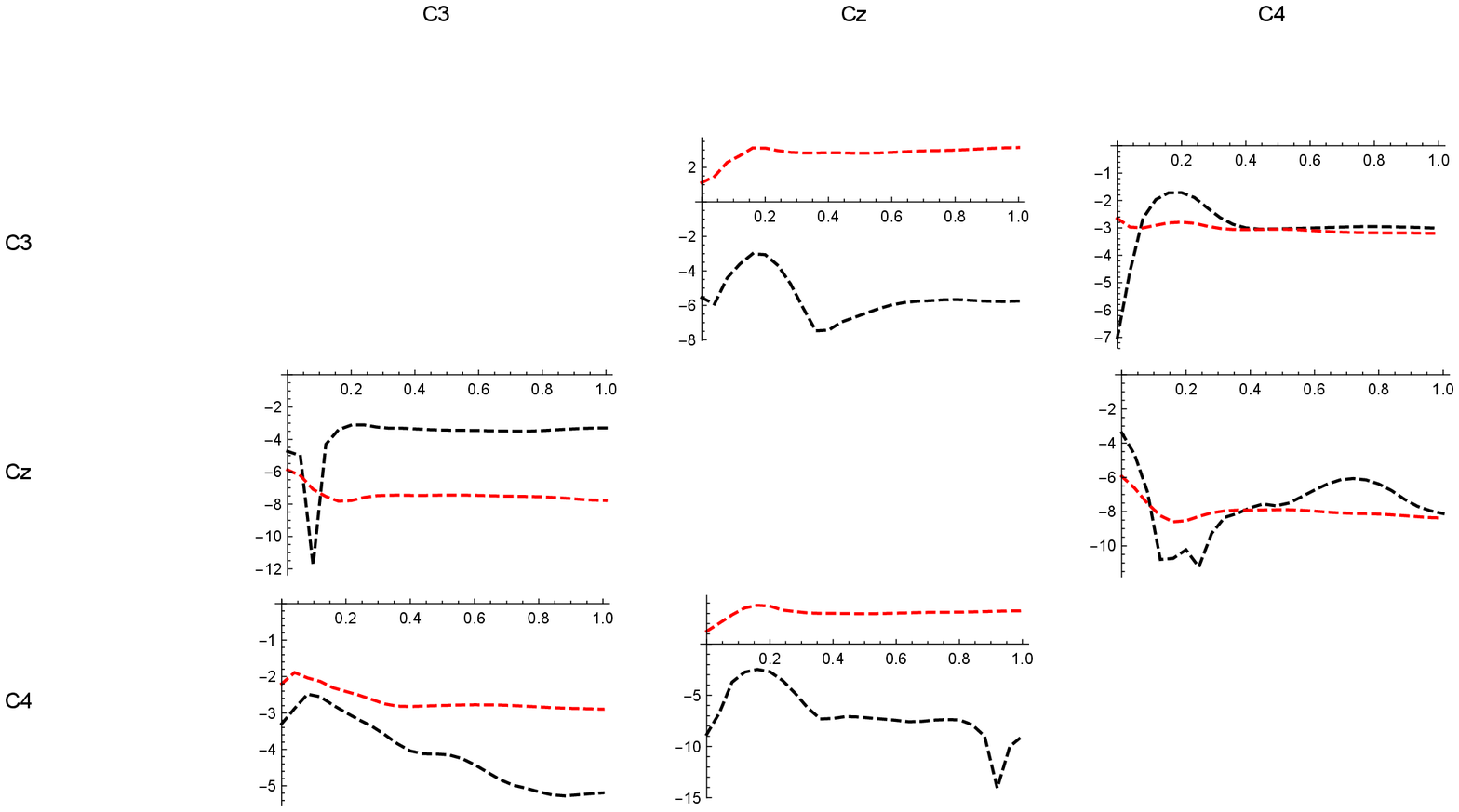}}  
\end{flushright}
 \caption{Plots of the logarithms of local Granger causality from the channels C3, Cz, C4 in the column
 to the channels  C3, Cz, C4  in the row.
The 95\% confidence intervals are below the dashed red lines;
The dashed black lines show the logarithms of local Granger causalities.} 
 \label{fig:5.7}
\end{figure}

Next, we further investigate the local causalities between the central channels C3, Cz and C4 at regular intervals of 
4.2 seconds. Figure \ref{fig:5.7} represents the numerical results of the causality
from the column to the row.
For example, the middle plot in the first row shows the causality from the channel C3 to Cz.
Still, the 95\% confidence intervals are below the dashed red lines
and the logarithms of local Granger causalities are shown by the dashed black lines.
From Figure  \ref{fig:5.7}, the conclusion is that channel the left central channel C3 does not cause 
central channel Cz. Moreover, the right central channel C4 does not cause channels C3 and Cz 
- in the local Granger sense.
In other cases, local Granger causality changes across the evolution of the epileptic seizure 
which confirms the dynamic activity of the brain.
This is a new finding since all previous analyses were limited to modeling
dependence using only coherence which accounts for contemporaneous dependence;
that is, there was no phase or lead-lag analysis. Moreover, this novel finding is quite interesting 
because a change in the causality structure was captured even before the onset of the epileptic seizure, 
which was approximately at $u=0.5$.

\begin{figure}[H]
\begin{flushright}
{\includegraphics[width=0.9\columnwidth]{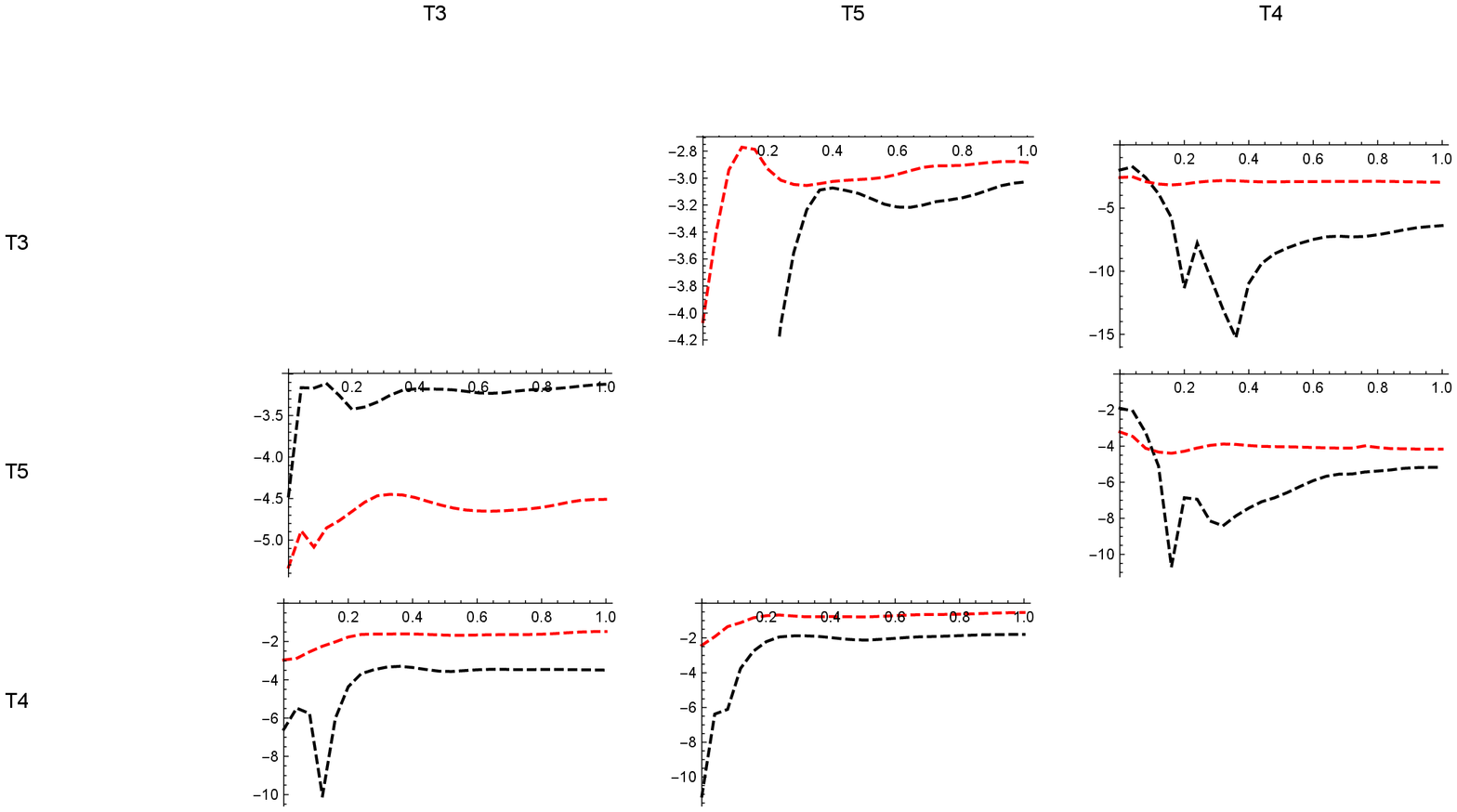}} 
\end{flushright}
 \caption{Plots of the logarithms of local Granger causality from the channels  T3, T5, T4 in the column
 to the channels  T3, T5, T4 in the row.
The 95\% confidence intervals are below the dashed red lines;
The dashed black lines show the logarithms of local Granger causalities.} 
 \label{fig:5.8}
\end{figure}

Similarly, we studied the causalities between the temporal channels T3, T5, T4 at regular intervals of 4.2 seconds.
The plots of the numerical results are shown in Figure \ref{fig:5.8}.
Remember that the channels T3 and T4 are symmetrically located at both the left and right temporal 
cortical regions, respectively. The plots suggest that T3 and T4 do not cause each other in the local 
Granger sense.
Furthermore, the channel T5, also on the left temporal cortical region, uniformly causes T3 in the data which is another interesting novel finding.
It is already known to the neurologist that the patient
has left temporal lobe epilepsy 
and that seizure events are generally initiated in the "left temporal"l region
(which is the area covered by the T3 and T5 channels).
Using the novel proposed concept of local Granger causality,
the analysis produced a highly specific result of brain functional connectivity,
that is, the direction goes from T5 to T3 and not the other way around.
As an additional result,
P3 and P4 do not cause each other uniformly in the local Granger sense at 95\% confidence level.

\subsection{Stock market data}
The dataset is the weekly log-returns 
of the closing stock prices of two financial groups (Mitsubishi and Mizuho) in the Nikkei index.
For brevity, two financial groups are denoted by A and B here.
The weekly data are from 2006 January 1st to 2010 December 26th,
so the length of the data is $T = 260$.

In our data analysis, we compute local Granger causality 
${\rm GC}^{{\rm A} \to {\rm B}}(u)$ and ${\rm GC}^{{\rm B} \to {\rm A}}(u)$
for rescaled time $u = 1/T, 2/T, \dots, 1$.
In general, Granger causality is not symmetric (e.g., the analysis of EEG data) and 
we regard
\begin{equation} \label{eq:5.1}
\biggl\{
\Bigl(
{\rm GC}^{{\rm A} \to {\rm B}}(u), {\rm GC}^{{\rm B} \to {\rm A}}(u)
\Bigr) \in \R^2
\biggr\}_{u = \frac{1}{T}, \frac{2}{T}\dots, 1} 
\end{equation}
as a point cloud in $\R^2$.
We separate local Granger causality in \eqref{eq:5.1} into two parts:
(1) from 2006 January 1st to 2008 June 29th;
(2) from 2008 July 6th to 2010 December 26th.
The part (1) and part (2) have the same length, i.e., the length of each part is 130.

We apply computational topology tools, {\it persistence diagram}, {\it persistence barcode},
{\it persistence landscape},
to capture the feature of these data points of local Granger causality 
(See, e.g.~\cite{flrwb2014}, for the details of the persistence diagram and persistence barcode).
The plots of the persistence diagram and the persistence barcode are shown in Figure \ref{fig:5.9}.
The permutation test is applied to the point clouds (1) and (2)
to test for equal topologies of the point clouds (1) and (2).
In other words, the null hypothesis is no statistical difference between 
the persistence landscapes of local Granger causality.
The result is statistically significant at the significance level of 0.01.
It is known that there is a financial crisis between 2007 and 2008.
Through the analysis of local Granger causality,
we detected the structural change of the causality between these two financial data.
A theoretical justification of this approach will be left as future work.

\begin{figure}[H]
\begin{center}
 \subfigure 
 {\includegraphics[clip, width=0.35\columnwidth]{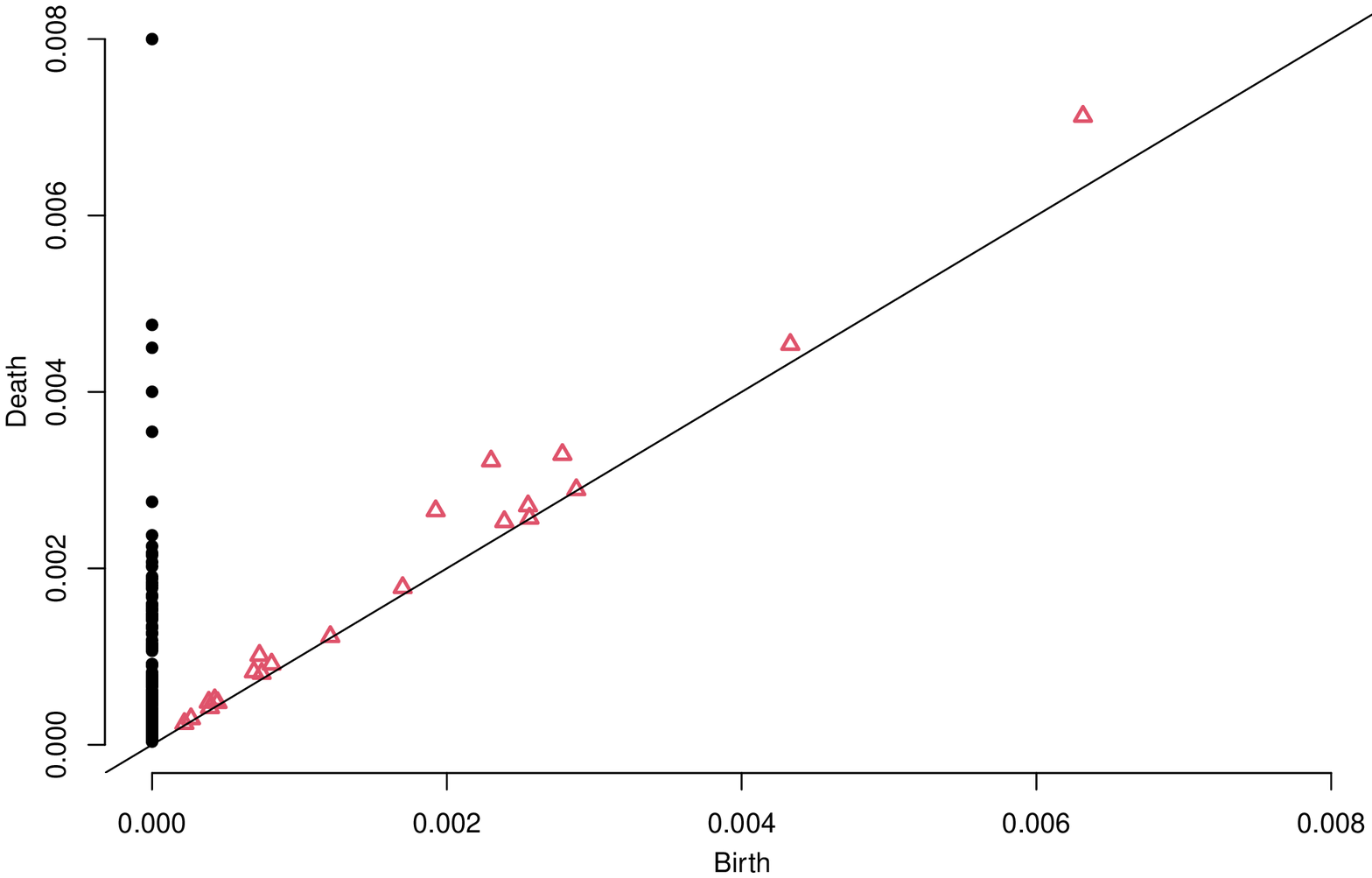}}  \qquad  \qquad
 \subfigure 
 {\includegraphics[clip, width=0.4\columnwidth]{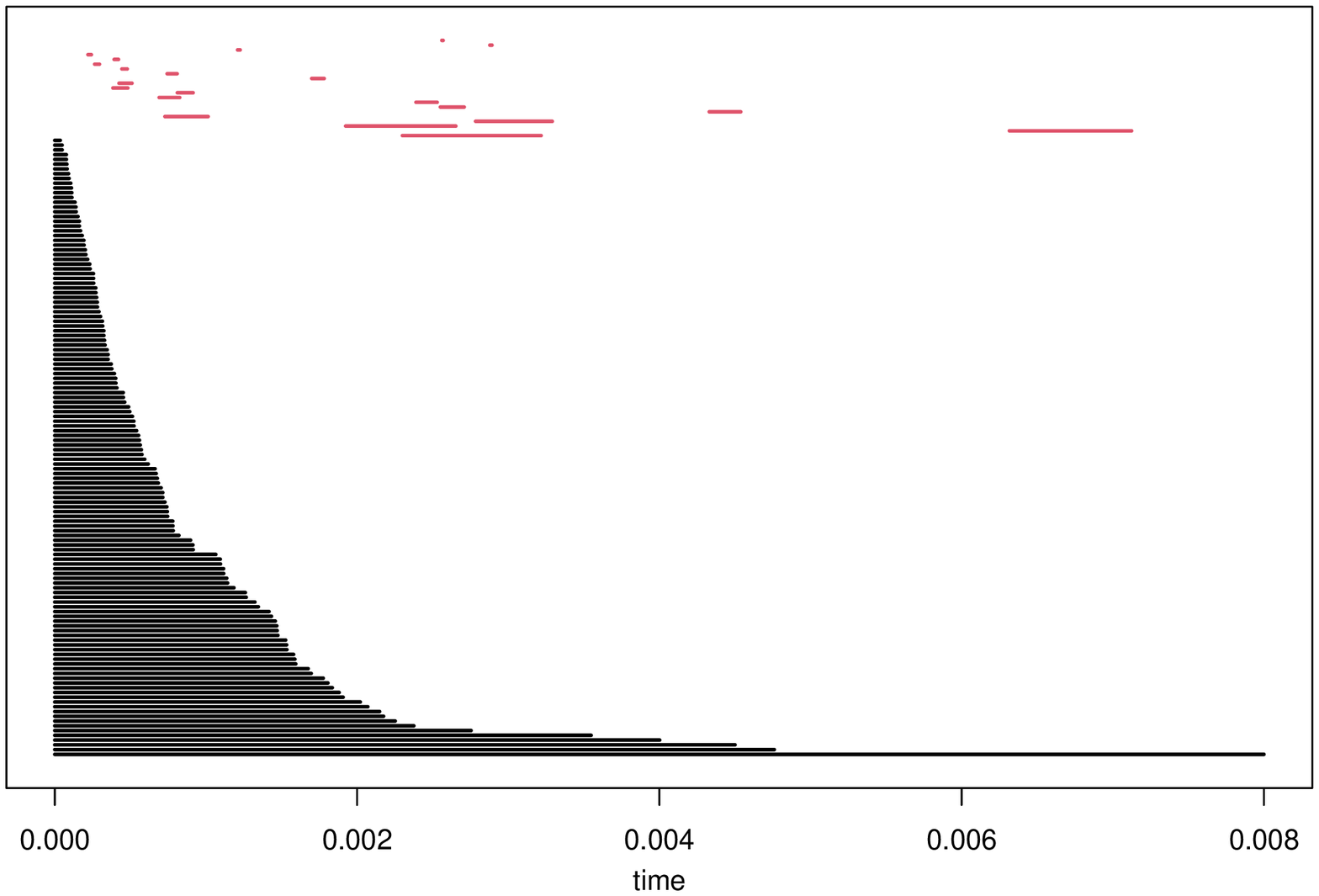}} \\
 \subfigure 
 {\includegraphics[clip, width=0.35\columnwidth]{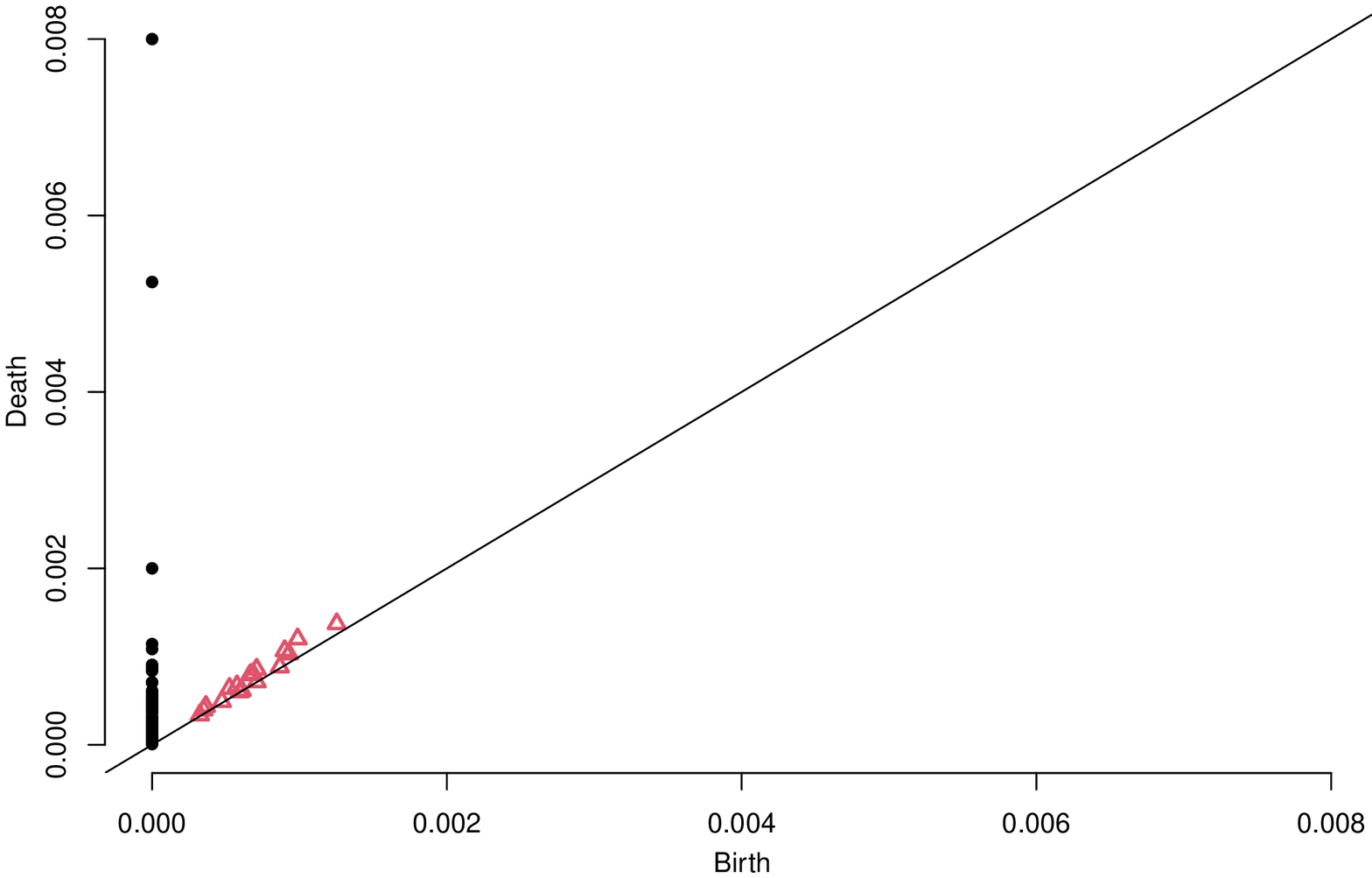}}  \qquad  \qquad
 \subfigure 
 {\includegraphics[clip, width=0.4\columnwidth]{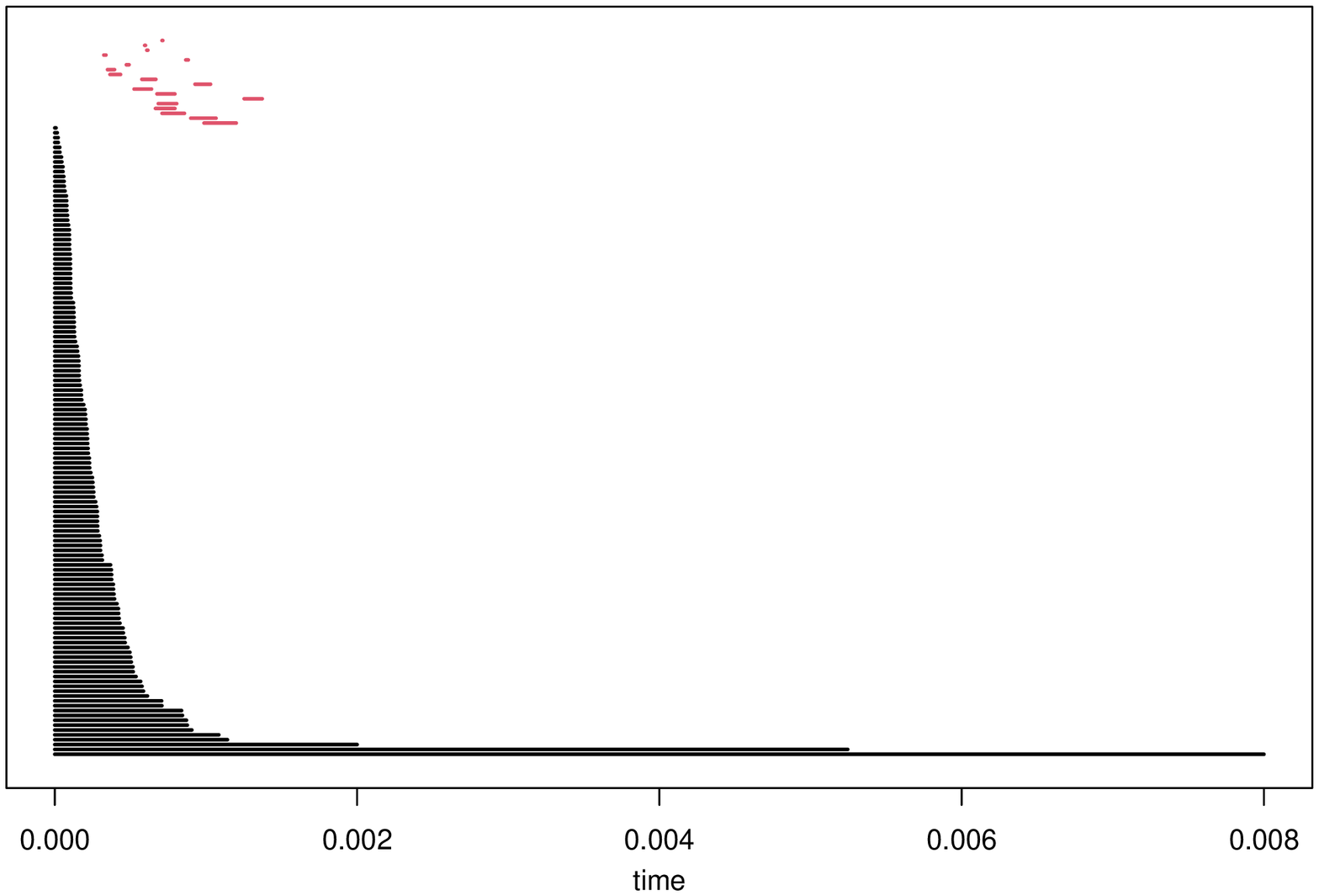}} 
 \end{center}
 \caption{(Above) the persistence diagram of local Granger causality (left);
 and the persistence barcode of local Granger causality (right) during the period (1).
(Below) the persistence diagram of local Granger causality (left);
 and the persistence barcode of local Granger causality (right) during the period (2). } 
 \label{fig:5.9}
\end{figure}

\section{Conclusion}
The primary contribution of this paper is statistical inference for local Granger causality for multivariate time series 
under the framework of multivariate locally stationary processes. 
Our proposed concept of local Granger causality is a generalization of Geweke's measure and Hosoya's measure 
- both of which were developed only for stationary processes. 
Our proposed generalization is well characterized in the frequency domain and has the advantage of being able to capture time-evolving causality relationships. 
We 
developed a procedure for hypothesis testing for the existence of the local Granger causality
from a parametric point of view.
We demonstrate, through the analysis of real data,  the efficiency and efficacy of this procedure to
find the time-evolving aspects of the local Granger causality, which could be overlooked by the existing method
for causality analysis.

In summary, we proposed a consistent method to detect the time change of the local Granger causality.
While our proposed method is nonparametric, 
we note that a procedure for stationary processes was developed in \cite{tpk1996}. 
This could serve as an inspiration for constructing a nonparametric method for locally stationary processes
to test for the local Granger causality, and compare the performance of both approaches.
For multiple time series, to investigate the time change of the local Granger causality
also suffers from the curse of dimensionality.
There are many remaining challenges including dimension reduction in terms of causality
between each component of multiple time series.
In addition to the Lasso method in \cite{tibshirani1996}, most penalized estimation procedures
could be added to our parametric approach to shrink some minor causality between components.
A frequency-specific local causality approach will also be elucidated in our future work.

\section*{Acknowledgements.}
The first author gratefully acknowledge JSPS Grant-in-Aid for Young Scientists (B) 17K12652
and JSPS Grant-in-Aid for Scientific Research (C) 20K11719.
The second author gratefully acknowledge JSPS Grant-in-Aid for Scientific Research (S) 18H05290
The third author gratefully acknowledge the KAUST Research Fund.
The first two authors also would like to express their thanks to the Institute for Mathematical Science (IMS),
Waseda University, for their support for this research.
 

\bibliographystyle{econas} 
\bibliography{clls}

\if0
\begin{enumerate}

\item Facts about the EEG signals: (a.) the sampling rate is 100 Hertz; (b.) the total 
number of observations is $T=50000$ which is equivalent to 500 seconds. This 
EEG data has been processed so that artifacts have been removed and that the signals 
have been bandpass filtered at 100 Hertz.

\item The focal point of this seizure is the  left temporal lobe and activity in this 
region is mostly reflected in the $T3$ channel. The onset of seizure can be observed 
from this time series starting at about $T=34000$. 

\item Some scientific questions: 
\begin{itemize}

\item Examine local Granger causality between $T3$ and each of the other channels. 

\item Consider the local Granger causality index for $T3 \to P3$ for example. This 
is computed across rescaled time $u \in (0,1)$. Find the change-point for this $LGC(u)$ index.  

\item It is known that seizure occured at rescaled time $u = 0.84$. One question to ask 
is the following: "is there a difference in the LGC index immediately prior to the onset of 
seizure vs immediately post seizure onset?". So we can define a time-aggregated LGC for 
pre and for post which we denote by, respectively, 
\begin{eqnarray*}
LGC_0 & = & \int_{u \in (0.80, 0.84)} LGC(u) du \\
LGC_1 & = & \int_{u \in (0.84, 0.888)} LGC(u) du. 
\end{eqnarray*}
We may test the hypothesis $H_0: LGC_0 = LGC_1$ or form a confidence interval for 
$\delta = LGC_1 - LGC_0$. 

\item Let $LGC^{1 \to 2}(u)$ and $LGC^{1 \to 3}(u)$ be the local Granger causality index 
for the reference channel ($T3$) to some channel (e.g., channel 2) and another channel 
(e.g., channel 3). The first question is to compare the local Granger causality index between 
these two pairs at seizure onset by testing the null hypothesis at a specific rescaled 
time $u=0.84$, i.e., 
\[
H_0: \  \ LGC^{1 \to 2}(0.84) = LGC^{1 \to 3}(0.84).
\]
The second question is to compare the local Granger causality curve across the entire 
rescaled  time, i.e., 
\[
H_0: \  \ LGC^{1 \to 2}(u) = LGC^{1 \to 3}(u) \ \ {\mbox{for all}} \ \ u in (0,1).
\]
The third question is to identify the rescaled time $u$ where the divergence between 
these two LGC's exceed some threshold $\epsilon$. To formalize, define the 
discrepancy function to be 
\[
\Delta(u) =  | LGC^{1 \to 2}(u) - LGC^{1 \to 3}(u) |.
\]
Here, we first find $u^*$ that has the maximal difference (or deviation) between 
these two indices, i.e., 
\[
u^* = \arg\max_{u \in (0,1)} \Delta(u).
\]
We also want to find all $u \in (0,1)$ such that $\Delta(u) > \epsilon$, i.e., find the 
set $S_{\epsilon}$ where 
\[
S_{\epsilon} = \{ u \in (0,1) \ | \ \Delta(u) > \epsilon\}.
\]
 
\item In the previous questions, we just considered the local Granger causality effect 
of the reference channel $T3$ on another single channel. Now we can examine the 
LGC of $T3$ on {\it all} channels which we collectively call ${\bf X}_2$. We can 
conduct the same tests above. For example, examine the difference between the 
LGC prior to seizure onset vs. LGC immediately post seizure onset.

\end{itemize}

\end{enumerate}
\fi

\if0
\begin{eg}[tvARMA($r, s$) process] \label{eg:1}
Suppose $A_i^{\sharp}$ $(i = 0, \dots, r)$, $B_j^{\sharp}$ $(j = 0, \dots, s)$, $\Sigma$
are matrix-valued curves, i.e., 
$A_i^{\sharp}: [0, 1] \to \R^{p \times p}$, $B_j^{\sharp}: [0, 1] \to \R^{p \times p}$
and $\Sigma: [0, 1] \to \mathbb{S}^{p}$, where $A_0^{\sharp}(\cdot)$, $B_0^{\sharp}(\cdot) = I_p$.
A tvARMA$(r, s)$ process is defined as
\begin{equation} \label{eq:2.2}
\sum_{i = 0}^r A_i^{\sharp} \Bigl(
\frac{t}{T}
\Bigr)
\bm{X}_{t - i, T}
=
\sum_{j = 0}^s B_j^{\sharp} \Bigl(
\frac{t}{T}
\Bigr) \,
\Sigma^{1/2}\Bigl(
\frac{t - j}{T}
\Bigr)
\bm{\ep}_{t - j},
\end{equation}
where $\{\bm{\ep}_t\}$ is an i.i.d.\ sequence with $E \bm{\ep}_t = \bzero$ 
and $E \bm{\ep}_t \bm{\ep}_t^{\top} = I_p$.
For the process \eqref{eq:2.2}, we introduce the backshift operator 
$L \bm{X}_t = \bm{X}_{t - 1}$.
Then the process has the expression
\begin{equation*}
A_{t, T}^{\sharp}(L) \bm{X}_{t, T} = B_{t, T}^{\sharp}(L)  \, \Sigma^{1/2}\Bigl(
\frac{t}{T}
\Bigr) \bm{\ep}_{t},
\end{equation*}
where
\begin{eqnarray*}
A_{t, T}^{\sharp}(L) &=& I_p + A_1^{\sharp}\Bigl(
\frac{t}{T}
\Bigr) L
+ \cdots + A_r^{\sharp}\Bigl(
\frac{t}{T}
\Bigr) L^r;
\\
B_{t, T}^{\sharp}(L) &=& I_p + B_1^{\sharp}\Bigl(
\frac{t}{T}
\Bigr) L
+ \cdots + B_s^{\sharp}\Bigl(
\frac{t}{T}
\Bigr) L^s.
\end{eqnarray*}
In a nutshell,
the process $\{\bm{X}_{t, T}\}$ can be approximated 
by a stationary process $\{\bm{X}^*(u)\}$
in the neighborhood of any fixed time point $u:= t/T$.
The stationary process is defined by
\begin{equation} \label{eq:2.3}
\sum_{j = 0}^r A_j^{\sharp} (u)
\bm{X}^*_{t - j}(u)
=
\sum_{k = 0}^s B_k^{\sharp} (u) \, \Sigma^{1/2}(u)
\bm{\ep}_{t - k}.
\end{equation}
For the stationary process in \eqref{eq:2.3}, there exists 
a spectral density matrix $\fu$ such that
\begin{equation*}
\fu = 
\frac{1}{2\pi}A^{\sharp}(u, \lambda)^{-1} B^{\sharp}(u, \lambda) 
\Sigma(u) B^{\sharp}(u, -\lambda)^{\T} {A^{\sharp}(u, -\lambda)^{-1}}^{\T},
\end{equation*}
\[
A^{\sharp}(u, \lambda) = \sum_{j = 0}^r A_j^{\sharp}(u) \exp(\I j \lambda),
\qquad
B^{\sharp}(u, \lambda) = \sum_{k = 0}^s B_k^{\sharp}(u) \exp(\I k \lambda),
\]
which is called the time-varying spectral density matrix.
\end{eg}
\fi

\vspace{-0.5cm}


\if0
\begin{enumerate}

\item Facts about the EEG signals: (a.) the sampling rate is 100 Hertz; (b.) the total 
number of observations is $T=50000$ which is equivalent to 500 seconds. This 
EEG data has been processed so that artifacts have been removed and that the signals 
have been bandpass filtered at 100 Hertz.

\item The focal point of this seizure is the  left temporal lobe and activity in this 
region is mostly reflected in the $T3$ channel. The onset of seizure can be observed 
from this time series starting at about $T=34000$. 

\item Some scientific questions: 
\begin{itemize}

\item Examine local Granger causality between $T3$ and each of the other channels. 

\item Consider the local Granger causality index for $T3 \to P3$ for example. This 
is computed across rescaled time $u \in (0,1)$. Find the change-point for this $LGC(u)$ index.  

\item It is known that seizure occured at rescaled time $u = 0.84$. One question to ask 
is the following: "is there a difference in the LGC index immediately prior to the onset of 
seizure vs immediately post seizure onset?". So we can define a time-aggregated LGC for 
pre and for post which we denote by, respectively, 
\begin{eqnarray*}
LGC_0 & = & \int_{u \in (0.80, 0.84)} LGC(u) du \\
LGC_1 & = & \int_{u \in (0.84, 0.888)} LGC(u) du. 
\end{eqnarray*}
We may test the hypothesis $H_0: LGC_0 = LGC_1$ or form a confidence interval for 
$\delta = LGC_1 - LGC_0$. 

\item Let $LGC^{1 \to 2}(u)$ and $LGC^{1 \to 3}(u)$ be the local Granger causality index 
for the reference channel ($T3$) to some channel (e.g., channel 2) and another channel 
(e.g., channel 3). The first question is to compare the local Granger causality index between 
these two pairs at seizure onset by testing the null hypothesis at a specific rescaled 
time $u=0.84$, i.e., 
\[
H_0: \  \ LGC^{1 \to 2}(0.84) = LGC^{1 \to 3}(0.84).
\]
The second question is to compare the local Granger causality curve across the entire 
rescaled  time, i.e., 
\[
H_0: \  \ LGC^{1 \to 2}(u) = LGC^{1 \to 3}(u) \ \ {\mbox{for all}} \ \ u in (0,1).
\]
The third question is to identify the rescaled time $u$ where the divergence between 
these two LGC's exceed some threshold $\epsilon$. To formalize, define the 
discrepancy function to be 
\[
\Delta(u) =  | LGC^{1 \to 2}(u) - LGC^{1 \to 3}(u) |.
\]
Here, we first find $u^*$ that has the maximal difference (or deviation) between 
these two indices, i.e., 
\[
u^* = \arg\max_{u \in (0,1)} \Delta(u).
\]
We also want to find all $u \in (0,1)$ such that $\Delta(u) > \epsilon$, i.e., find the 
set $S_{\epsilon}$ where 
\[
S_{\epsilon} = \{ u \in (0,1) \ | \ \Delta(u) > \epsilon\}.
\]
 
\item In the previous questions, we just considered the local Granger causality effect 
of the reference channel $T3$ on another single channel. Now we can examine the 
LGC of $T3$ on {\it all} channels which we collectively call ${\bf X}_2$. We can 
conduct the same tests above. For example, examine the difference between the 
LGC prior to seizure onset vs. LGC immediately post seizure onset.

\end{itemize}

\end{enumerate}
\fi


\newpage
\appendix

\section{Proofs} \label{sec:6}
In Section \ref{sec:6}, we provide proofs for results in Section \ref{sec:3}.
The fundamental properties of multivariate locally stationary processes are condensed in Section \ref{app:a}.
Some technical results to derive the asymptotic distributions are summarized in Section \ref{app:b}.

\subsection{Proof of Theorem \ref{thm:3.2}}
\begin{proof}
From equations \eqref{eq:r15} and \eqref{eq:r17}, we have
\begin{multline*}
\mathcal{L}_T(\btheta, u) - \mathcal{L}(\btheta, u)
=  \\
\biggl\{
\frac{1}{T} \sum_{k=1}^T
\frac{1}{b_T}
K\Bigl(
\frac{u - u_k}{b_T}
\biggr)
-
1
\Bigr\}
\freqint  \log \det \pf \dif\lambda
+
\tr \Bigl(\fu \spf{-1}
\Bigr) \dif \lambda
\\
+
\frac{1}{T} \sum_{k=1}^T
\frac{1}{b_T}
K\Bigl(
\frac{u - u_k}{b_T}
\Bigr) 
\freqint 
\tr \Bigl\{ \Bigl(\fu[u_k] - \fu \Bigr)
\spf{-1}
\Bigr\}
\dif \lambda
\\
+
\frac{1}{T} \sum_{k=1}^T
\frac{1}{b_T}
K\Bigl(
\frac{u - u_k}{b_T}
\Bigr) 
\freqint 
\tr \Bigl\{ \Bigl(\bI_{T}(u_k, \lambda) - \fu[u_k] \Bigr)
\spf{-1}
\Bigr\}
\dif \lambda
\\
=
L_1 + L_2 + L_3, \quad \text{(say).}
\end{multline*}
Since $K$ is a function of bounded variation, applying Lemma P5.1 in \cite{brillinger1981},
it holds that
\begin{align*}
\frac{1}{T} \sum_{k=1}^T
\frac{1}{b_T}
K\Bigl(
\frac{u - u_k}{b_T}
\biggr)
-
1 
&= 
\uint \frac{1}{b_T}
K\Bigl(
\frac{u - v}{b_T}
\biggr)
\dif v
-
1
+
O(T^{-1})\\
&=
\int_{\frac{u - 1}{b_T}}^{\frac{u}{b_T}} K(x) \dif x - 1 + O(T^{-1}),
\end{align*}
which implies that $L_1 = O(T^{-1})$, since the kernel $K$ has a compact support.

Under Assumption \ref{asp:3.1} (i), $\bm{f}$ is of bounded variation, and again, 
applying Lemma P5.1 in \cite{brillinger1981}, we have
\begin{align*}
&\frac{1}{T} \sum_{k=1}^T
\frac{1}{b_T}
K\Bigl(
\frac{u - u_k}{b_T}
\biggr) \Bigl( \fu[u_k] - \fu\Bigr)\\
= &
\uint \frac{1}{b_T}
K\Bigl(
\frac{u - v}{b_T}
\biggr)
\bigl(\fu[v] - \fu\bigr)
\dif v + O(T^{-1}) \\
= &
\int_{\frac{u - 1}{b_T}}^{\frac{u}{b_T}} K(x)
\bigl(\fu[u - b_T x] - \fu\bigr)
\dif x + O(T^{-1}), \\
= &
\int_{\frac{u - 1}{b_T}}^{\frac{u}{b_T}} K(x)
\biggl(
- b_T x \frac{\partial}{\partial u}\fu
+ (b_T x)^2\frac{\partial^2}{\partial u^2}\fu
+ O(b_T^3)
\biggr)
\dif x + O(T^{-1}).
\end{align*}
Since $K$ has a compact support and it is symmetric, we have
\[
\int_{-\infty}^{\infty} x K(x) \dif x = 0,
\]
which implies that 
\[
L_2 = O(b_T^2) + O(T^{-1}),
\]
as $T \to \infty$.
In summary, we have
\begin{align*}
\sqrt{T b_T} L_1 &\to 0,\\
\sqrt{T b_T} L_2 &\to 0,
\end{align*}
since $b_T = o(T^{-1/5})$.

Finally, we apply Corollary \ref{cor:b11} to $L_3$ to show
\begin{equation} \label{eq:rr8.1}
\sqrt{T b_T}
\bigl(
\mathcal{L}_T(\btheta, u) - 
\mathcal{L}(\btheta, u)
\bigr)
\dlim
\mathcal{N}(0, \mathbb{V}^{\mathcal{L}}(u)).
\end{equation}
In fact, we only have to check Assumptions \ref{asp:b1} and \ref{asp:b2} for
\begin{equation} \label{eq:rr8,2}
\bm{\phi}(u, \lambda) = K(u) \spf{-1},
\end{equation}
or equivalently, $\psi(u, \lambda) = K(u) f^{ij}_{\btheta}(\lambda)$ for $i, j = 1, \dots, p$, which is expressed in the Einstein notation.
From the definition \eqref{eq:2.11} of the time-varying spectral density matrix,
$\spf{-1}$ is obviously Hermitian.
Additionally, Assumption \ref{asp:b1} (ii) is satisfied 
if both $K$ and $\spf{-1}$ are bounded functions of bounded variation, 
which follows Assumptions \ref{asp:3.1} (ii) and \ref{asp:3.2} (iii). 
Applying Corollary \ref{cor:b11} to \eqref{eq:rr8,2}, we obtain \eqref{eq:rr8.1}. 
\end{proof}

\subsection{Proof of Theorem \ref{thm:3.3}}
\begin{proof}
Note that we have
\[
\sqrt{T b_T}
\bigl(
\mathcal{L}_T(\btheta, u) - 
\mathcal{L}(\btheta, u)
\bigr)
\dlim
\mathcal{N}(0, \mathbb{V}^{\mathcal{L}}(u)).
\]
The consequence \eqref{eq:3.3} follows, if the following conditions are guaranteed for the theorem, i.e.,
\begin{enumerate}[(i)]
\item both $\mathcal{L}_T(\btheta, u)$ and $\mathcal{L}(\btheta, u)$
are convex in $\btheta$ for each $u$
and continuous in $u$ for each $\btheta$;
\item $\btheta_0(u)$ is the unique minimizer of $\mathcal{L}(\btheta, u)$
for each $u \in [0, 1]$
\end{enumerate}

According to (i), the convexity of $\mathcal{L}_T(\btheta, u)$ and $\mathcal{L}(\btheta, u)$ in $\btheta$
follows from Assumption \ref{asp:3.2} (v-b).
Especially, note that $\mathcal{L}_T(\btheta, u)$
is a linear combination of 
$
\freqint 
\log \det \pf +
\tr \Bigl(\bI_{T}(u_k, \lambda) 
\spf{-1}
\Bigr)
d\lambda$ with nonnegative coefficients, which implies that $\mathcal{L}_T(\btheta, u)$ is convex.
The continuity of $\mathcal{L}_T(\btheta, u)$ and $\mathcal{L}(\btheta, u)$ in $u$ follows from
Assumption \ref{asp:3.1}, i.e., the continuity of $K$ and $\bm{f}(\cdot, \lambda)$.
According to (ii), it is assumed in Assumption \ref{asp:3.2} (v-a).
Since $\btheta_0(u)$ is the unique minimizer and $\pf$ is twice continuously differentiable
with respect to $\btheta$, again, \eqref{eq:3.4} follows from Corollary \ref{cor:b11}. 
\end{proof}

\subsection{Proof of Theorem \ref{thm:3.7}}
\begin{proof}
For simplicity, denote 
\[
\bm{f}^Z_{\btheta(u)}(\lambda)_{11}
:=
\bm{f}_{\btheta(u)}(\lambda)_{11}- 2 \pi 
\bm{\frak{f}}_{\btheta(u)}(\lambda)_{12}
\left(\tilde{\Sigma}_{\btheta(u), 22} \right)^{-1} 
\bm{\frak{f}}_{\btheta(u)}(\lambda)_{21}. 
\]
Accordingly, 
${\rm GC}^{(2 \to 1)}(u; \btheta)$ in \eqref{eq:3.11} is simply
\[
{\rm GC}^{2 \to 1}(u; \btheta) = 
\frac{1}{2 \pi} \int_{-\pi}^{\pi} 
\log \frac{\abs{\bm{f}_{\btheta(u)}(\lambda)_{11}}}
{\abs{\bm{f}^{Z}_{\btheta(u)}(\lambda)_{11}}}
\dif \lambda.
\]
Note that the domain of the integration is bounded, and under Assumption \ref{asp:3.3},
$\log \abs{\pfu_{11}}$ is integrable in $\lambda$ for $u \in [0, 1]$, which implies that 
$\log \abs{\bm{f}^{Z}_{\btheta(u)}(\lambda)_{11}}$ is also integrable.

Now if we show $\bm{f}_{\btheta(u)}(\lambda)_{11}$ and $\bm{f}^{Z}_{\btheta(u)}(\lambda)_{11}$
are continuously differentiable with respect to $\btheta$, then
applying the delta-method to \eqref{eq:3.3} leads to the conclusion.
We summarize the parametric expressions used in the causality measure.
Suppose that $\bm{\frak{f}}_{\bm{\theta}(u)}(\lambda)$ admits the decomposition
\begin{align*}
\bm{\frak{f}}_{\bm{\theta}(u)}(\lambda) = 
\begin{bmatrix}
\bm{\frak{f}}_{\bm{\theta}(u)}(\lambda)_{11} & \bm{\frak{f}}_{\bm{\theta}(u)}(\lambda)_{12}  \\
\bm{\frak{f}}_{\bm{\theta}(u)}(\lambda)_{21} & \bm{\frak{f}}_{\bm{\theta}(u)}(\lambda)_{22} 
\end{bmatrix}.
\end{align*}
With an abuse of notation, under Assumption \ref{asp:3.3}, $\bm{f}_{\btheta(u)}$, 
defined on the unit disk $\mathcal{D}$ in the complex plane, can be factorized as
\begin{equation} \label{eq:3.6}
\bm{f}_{\bm{\theta}(u)}(z) 
= 
\frac{1}{2 \pi} \bm{\Lambda}_{\bm{\theta}(u)}(z) \bm{\Lambda}_{\bm{\theta}(u)}(z)^*,
\quad
z \in \mathcal{D}.
\end{equation}
Especially, as shown in \cite{rozanov1967}, it holds that  
\begin{equation} \label{eq:3.7}
\Sigma_{\bm{\theta}(u)} 
=
\bm{\Lambda}_{\bm{\theta}(u)}(0) \bm{\Lambda}_{\bm{\theta}(u)}(0)^*.
\end{equation}
From Lemmas 2.2 and 2.3 in \cite{hosoya1991}, we have
\begin{align}
\bm{\frak{f}}_{\bm{\theta}(u)}(\lambda)_{11} &= \bm{f}_{\bm{\theta}(u)}(\lambda)_{11}, \quad \quad
\bm{\frak{f}}_{\bm{\theta}(u)}(\lambda)_{12} = \bm{\frak{f}}_{\bm{\theta}(u)}(\lambda)_{21}^*, 
\label{eq:3.8}\\
\bm{\frak{f}}_{\bm{\theta}(u)}(\lambda)_{21} &= 
\begin{bmatrix}
- \Sigma_{\bm{\theta}(u), 21} \Sigma_{\bm{\theta}(u), 11}^{-1} & I_{M}
\end{bmatrix}
\bm{\Lambda}_{\bm{\theta}(u)}(0) \bm{\Lambda}_{\bm{\theta}(u)}(e^{\I \lambda})^{-1} 
\begin{pmatrix}
\bm{f}_{\bm{\theta}(u)}(\lambda)_{11} \\
\bm{f}_{\bm{\theta}(u)}(\lambda)_{21} 
\end{pmatrix}, \label{eq:3.9}
\end{align}
and 
\begin{equation} \label{eq:3.10}
\bm{\frak{f}}_{\bm{\theta}(u)}(\lambda)_{22} 
= \frac{1}{2 \pi} \tilde{\Sigma}_{\bm{\theta}(u), 22}
:= \frac{1}{2 \pi}  \left\{ \Sigma_{\bm{\theta}(u), 22} - \Sigma_{\bm{\theta}(u), 21} 
\Sigma_{\bm{\theta}(u), 11}^{-1} \Sigma_{\bm{\theta}(u), 12} \right\}.
\end{equation}

The continuous differentiability of $\pfu_{11}$ with respect to $\btheta$
directly follows from that of $\pfu$ under Assumption \ref{asp:3.2} (iv).
Note that $\tilde{\Sigma}_{\btheta(u), 22}$ is a continuous function of $\Sigma_{\btheta(u)}$ from \eqref{eq:3.10}.
Using the expression for $\Sigma_{\btheta(u)}$ in \eqref{eq:3.7}
and the relation in \eqref{eq:3.6}, the continuous differentiability of $\Sigma_{\btheta(u)}$ with respect to $\btheta$
follows from that of $\pfu$.
In addition, this implies the continuous differentiability of $\bm{\frak{f}}_{\btheta(u)}(\lambda)_{21}$ from \eqref{eq:3.9},
which in turn implies the continuous differentiability of $\bm{f}^{Z}_{\btheta(u)}(\lambda)_{11}$.
This completes the proof of Theorem \ref{thm:3.7}.
\end{proof}

\subsection{Proof of Theorem \ref{thm:3.8}}
\begin{proof}
By Theorem 6.8 of \cite{mn2007}, we have
\begin{multline*}
{\rm GC}^{(2 \to 1)}(u;\,
{\hat{\bm{\theta}}_T}) - 
{\rm GC}^{(2 \to 1)}(u; {\bm{\theta}_0})
=
\nabla {\rm GC}^{(2 \to 1)}(u; {\bm{\theta}_0})^{\top} 
\bigl(\hat{\btheta}_T(u) - \btheta_0(u)\bigr)\\
+
\frac{1}{2}
\bigl(\hat{\btheta}_T(u) - \btheta_0(u)\bigr)^{\top} 
\mathcal{H}(u) 
\bigl(\hat{\btheta}_T(u) - \btheta_0(u)\bigr) + o_P\Bigl(\bigl(\hat{\btheta}_T(u) - \btheta_0(u)\bigr) ^2\Bigr).
\end{multline*}
Since $\nabla {\rm GC}^{(2 \to 1)}(u; {\bm{\theta}_0}) = \bzero$ from \eqref{eq:3.23},
we have
\begin{multline*}
Tb_T
\bigl(
{\rm GC}^{(2 \to 1)}(u;\,
{\hat{\bm{\theta}}_T}) - 
{\rm GC}^{(2 \to 1)}(u; {\bm{\theta}_0})
\bigr)
=\\
\frac{1}{2}
\sqrt{Tb_T}\bigl(\hat{\btheta}_T(u) - \btheta_0(u)\bigr)^{\top} 
\mathcal{H}(u) 
\sqrt{Tb_T}\bigl(\hat{\btheta}_T(u) - \btheta_0(u)\bigr) + o_P(1).
\end{multline*}
We arrived at the conclusion \eqref{eq:3.25} by the continuous mapping theorem.
\end{proof}


\section{Multivariate locally stationary processes} \label{app:a}
In Section 
\ref{app:a}, we review the basic properties of multivariate locally stationary processes.
Especially, we evaluate the absolute differences of the covariances and higher order cumulants 
between a multivariate locally stationary process
$\{\bx{t}\}$ and the approximate stationary process $\{\bm{X}^*(u)\}$ with a spectral density matrix $\bm{f}(u, \lambda)$.\\

Let $l(j)$ be
\[
l(j): = 
\begin{cases}
1, &\qquad \abs{j} \leq 1, \\
\abs{j} \log^{1 + \kappa}\abs{j}, &\qquad \abs{j} > 1,
\end{cases} 
\]
for some constant $\kappa > 0$.
Let $C$ be a generic constant in Appendix, and
the following inequality is repetitively used in the proof.
\begin{equation} \label{eq:a1}
\tsum[j]{\infty} \frac{1}{l(j) l(j + s)}
\leq
\frac{C}{l(s)}.
\end{equation}

The following assumption corresponds to Assumption \ref{asp:2.1}, which is imposed for the multivariate locally stationary process $\{\bx{t}\}$ in the main text.
\begin{asp} \label{asp:a1}
Suppose the multivariate locally stationary process 
$\bx{t} = (X_{t, T}^{(1)}, \dots, \\X_{t, T}^{(d)}, \dots, X_{t, T}^{(p)})^{\top}$ has a representation
\begin{equation} \label{eq:a2}
\bx{t} = \tsum[j]{\infty} A_{t, T}(j) \be_{t - j},
\end{equation}
where the sequences $\{A_{t, T}(j)\}_{j \in \Z}$ and $\{\be_t\}_{t \in \Z}$ satisfy the following conditions:
there exists a constant $C_A$ such that
\begin{equation} \label{eq:r56}
\sup_{t, T} \norm{A_{t, T}(j)}_{\infty} \leq \frac{C_A}{l(j)},
\end{equation}
and there exists a sequence of functions $A(\cdot, j): [0, 1] \to \R$ such that
\begin{enumerate}[(i)]
\item $\sup_{u} \norm{A(u, j)}_{\infty} \leq \frac{C_A}{l(j)}$;
\item $\sup_j \sumn[t] \Bnorm{A_{t, T}(j) - A\Bigl(\frac{t}{T}, j\Bigr)}_{\infty} \leq C_A$;
\item $V\Bigl(\norm{A(\cdot, j)}_{\infty}\Bigr) \leq \frac{C_A}{l(j)}$,
\end{enumerate}
where $V(f)$ is the total variation of the function $f$ on the interval $[0, 1]$, i.e., $V$  is defined as
\[
V(f) = \sup\Bigl\{
\sum_{k = 1}^m \abs{f(x_k) - f(x_{k - 1})}; \,
0 \leq x_0 < \cdots < x_m \leq 1, \, m \in \N
\Bigr\}.
\]
In addition, 
the $\be_t$ are assumed to be independent and identically distributed with
$E \be_t = \bzero$ and $E \be_t \be_t^{\top} = K$,
where the matrix $K$ exists and all elements are bounded by $C_K$.
Furthermore, all the moments of $\be_t$ exist.
All elements in $r$th moment of $\be_t$ are bounded by $C_{\bm{\ep}}^{(r)} < C$ for each $r \geq 3$ and some finite constant $C > 0$. 
\end{asp}

\begin{rem} \label{rem:a1}
Let us consider the time-varying spectral density matrix 
$\fu = \bigl(\fu_{ij}\bigr)_{i, j = 1, \dots, p}$ defined in \eqref{eq:2.11}, i.e.,
\begin{equation*} 
\fu = \frac{1}{2\pi} A(u, \lambda) K A(u, -\lambda)^{\top},
\end{equation*}
where $A(u, \lambda) = \sum_{j = -\infty}^{\infty} A(u, j) \exp(\I j \lambda)$.
The autovariance function $\bm{\gamma}(u, s)$ at $u$ is 
\begin{equation} \label{eq:a3}
\bm{\gamma}(u, s)
=
\freqint \fu \exp(\I \lambda s) \dif \lambda
=
\sum_{j=-\infty}^{\infty} A(u, j) K A(u, j + s)^{\top}.
\end{equation}
Especially, the (a, b)-element of the matrix $\bm{\gamma}$ is bounded by
\begin{eqnarray}
\abs{\bm{\gamma}(u, s)_{ab}}
&\leq&
\sum_{j=-\infty}^{\infty} 
\sup_u \norm{A(u, j)}_{\infty} \norm{K}_{\infty}  \sup_{u}\norm{A(u, j + s)^{\top}}_{\infty} \notag\\
&\leq&
C \sum_{j=-\infty}^{\infty} \frac{1}{l(j) l(j + s)} \notag\\
&\leq&
\frac{C}{l(s)}, \label{eq:a4}
\end{eqnarray}
where the second inequality follows from Assumption \ref{asp:a1} (i) and the third inequality follows from \eqref{eq:a1}.
\end{rem}

\begin{rem}
From \eqref{eq:a4}, we can see that 
for any fixed $u \in [0, 1]$ and any $(a, b)$-element of the autovariance matrix,
$\abs{\bm{\gamma}(u, 0)_{ab}}$ is bounded, i.e.,
\[
\abs{\bm{\gamma}(u, 0)_{ab}} \leq C.
\]
Thus, the time-varying spectral density $\fu_{jk}$ ($1 \leq j, k \leq p$)
are square-integrable for any fixed $u \in [0, 1]$.
\end{rem}

\begin{rem}
Under Assumption \ref{asp:a1}, the locally stationary process $\{\bx{t}\}$
has the following properties.
Let $\bx{t}^{(d)}$ be the $d$th element of the vector $\bx{t}$.
From \eqref{eq:a2}, $\bx{t}^{(d)}$ has the expression
\[
\bx{t}^{(d)}
=
\tsum[j]{\infty}\sum_{m = 1}^p A_{t, T}(j)_{dm} \be_{t - j}^{(m)}.
\]
Thus, we obtain
\begin{align}
\cov(\bx{t}^{(a)}, \bx{t + s}^{(b)})
&=
\sum_{j = -\infty}^{\infty} \sum_{l = -\infty}^{\infty}
\sum_{m, n = 1}^p
A_{t, T}(j)_{am}
A_{t + s, T}(l)_{bn}
\cov\Bigl(
\be_{t - j}^{(m)},
\be_{t + s - l}^{(n)}
\Bigr) \notag \\
&=
\sum_{j = -\infty}^{\infty}
\sum_{m, n = 1}^p
A_{t, T}(j)_{am}
K_{mn}
A_{t + s, T}(j + s)_{bn} \label{eq:a5} \\
&=
\tsum[j]{\infty} \Bigl(A_{t, T}(j) K A_{t + s, T}(j)^{\top}\Bigr)_{ab}. \label{eq:a6}
\end{align}
\end{rem}

We first clarify the difference between \eqref{eq:a3} and \eqref{eq:a6}
on discrete points $u_k = k/T$ in the following.
\begin{lem} \label{lem:a3}
Under Assumption \ref{asp:a1}, we have
\begin{equation} \label{eq:a7}
\sum_{k =1}^T
\Babs{
\cov\Bigl(
\bx{[k+ 1/2 - s/2]}^{(a)}, \bx{[k + 1/2 + s/2]}^{(b)}
\Bigr) 
- \bm{\gamma}(u_k, s)_{ab}
}
\leq
C\Bigl(
1 + \frac{1}{l(s)}
\Bigr),
\end{equation}
where $C$ is a generic constant.
\end{lem}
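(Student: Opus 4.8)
The plan is to reduce both quantities to moving-average sums and then telescope their difference. Set $t_{-} := [k + 1/2 - s/2]$ and $t_{+} := [k + 1/2 + s/2]$, so that $t_{+} - t_{-} = s$ and $\abs{t_{\pm} - k} \le (\abs{s}+1)/2$, whence $\abs{t_{\pm}/T - u_k} \le (\abs{s}+1)/(2T)$. (At the $O(\abs{s})$ boundary values of $k$ with $t_{-} < 1$ or $t_{+} > T$ one uses the conventions $A_{t,T} \equiv 0$ and $A(\cdot, j)$ extended by its endpoint value; these $k$ contribute $O(1)$ to the sum by the same estimates below, so I assume $t_{-}, t_{+} \in \{1, \dots, T\}$, and then $k \mapsto t_{-}$ and $k \mapsto t_{+}$ are injective into $\{1, \dots, T\}$.) By \eqref{eq:a5} (with first index $t_{-}$, second index $t_{+} = t_{-} + s$) and the definition \eqref{eq:a3},
\begin{align*}
\cov\bigl(\bx{t_{-}}^{(a)}, \bx{t_{+}}^{(b)}\bigr) &= \sum_{j = -\infty}^{\infty} \bigl(A_{t_{-}, T}(j)\, K\, A_{t_{+}, T}(j + s)^{\top}\bigr)_{ab}, \\
\bm{\gamma}(u_k, s)_{ab} &= \sum_{j = -\infty}^{\infty} \bigl(A(u_k, j)\, K\, A(u_k, j + s)^{\top}\bigr)_{ab}.
\end{align*}
Using $P_1 K Q_1 - P_2 K Q_2 = (P_1 - P_2) K Q_1 + P_2 K (Q_1 - Q_2)$ termwise with $P_1 = A_{t_{-}, T}(j)$, $Q_1 = A_{t_{+}, T}(j+s)^{\top}$, $P_2 = A(u_k, j)$, $Q_2 = A(u_k, j+s)^{\top}$, the difference $\cov(\cdots) - \bm{\gamma}(u_k, s)_{ab}$ equals $D_1(k) + D_2(k)$, where $D_1(k) = \sum_j \bigl((A_{t_{-}, T}(j) - A(u_k, j)) K A_{t_{+}, T}(j+s)^{\top}\bigr)_{ab}$ and $D_2(k) = \sum_j \bigl(A(u_k, j) K (A_{t_{+}, T}(j+s) - A(u_k, j+s))^{\top}\bigr)_{ab}$; by the triangle inequality it suffices to bound $\sum_k \abs{D_1(k)}$ and $\sum_k \abs{D_2(k)}$.

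Next I would split each rescaled-time increment as $A_{t_{-}, T}(j) - A(u_k, j) = [A_{t_{-}, T}(j) - A(t_{-}/T, j)] + [A(t_{-}/T, j) - A(u_k, j)]$, and likewise $A_{t_{+}, T}(j+s) - A(u_k, j+s) = [A_{t_{+}, T}(j+s) - A(t_{+}/T, j+s)] + [A(t_{+}/T, j+s) - A(u_k, j+s)]$, so that each of $D_1(k)$, $D_2(k)$ becomes a sum of a ``triangular array versus limit curve'' piece and a ``time-shift'' piece. For the first piece I bound the unchanged factors by $\norm{K}_{\infty} \le C_K$ and by \eqref{eq:r56} or Assumption \ref{asp:a1}(i) (that is, by $C_A / l(j+s)$, respectively $C_A/l(j)$), then sum over $k$: since $k \mapsto t_{\pm}$ is injective into $\{1, \dots, T\}$ and Assumption \ref{asp:a1}(ii) gives $\sum_{t=1}^{T} \norm{A_{t,T}(\cdot) - A(t/T, \cdot)}_{\infty} \le C_A$ at each lag, these pieces together contribute at most $C \sum_j 1/l(j) = O(1)$.

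For the ``time-shift'' piece I would use the bounded-variation Assumption \ref{asp:a1}(iii): $\norm{A(t_{\pm}/T, j) - A(u_k, j)}_{\infty}$ is at most the total variation of $\norm{A(\cdot, j)}_{\infty}$ over the interval $I_k$ joining $t_{\pm}/T$ and $u_k$, which has length at most $(\abs{s}+1)/(2T)$. The crucial point is that although there are $T$ such increments and the intervals $I_k$ overlap heavily, each point of $[0,1]$ lies in only $O(\abs{s})$ of them, so rearranging the telescoping sum yields $\sum_{k} V_{I_k}\!\bigl(\norm{A(\cdot, j)}_{\infty}\bigr) \le C \abs{s}\, V\!\bigl(\norm{A(\cdot, j)}_{\infty}\bigr) \le C \abs{s}/l(j)$; bounding the unchanged factor by $C/l(j+s)$ and invoking \eqref{eq:a1}, the time-shift pieces contribute at most $C \abs{s} \sum_j \frac{1}{l(j) l(j+s)} \le C \abs{s}/l(s)$, which is $O(1)$ since $\sup_s \abs{s}/l(s) < \infty$. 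Adding the two kinds of contribution gives $\sum_k \abs{D_1(k) + D_2(k)} \le C \le C(1 + 1/l(s))$, as claimed.

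The main obstacle is precisely the covering estimate above: one must see that the heavy overlap of the intervals $I_k$, each of length $O(\abs{s}/T)$, costs only a factor $O(\abs{s})$ (not $O(T)$), and that this factor is exactly neutralized by the decay $\sum_j (l(j) l(j+s))^{-1} \le C/l(s)$ of \eqref{eq:a1} together with the boundedness of $\abs{s}/l(s)$. Keeping the three rescaled times $t_{-}/T$, $t_{+}/T$, $u_k$ aligned and disposing of the boundary indices are fiddly but routine.
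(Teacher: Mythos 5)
Your proposal is correct and follows essentially the same route as the paper's proof: the same telescoping decomposition $P_1KQ_1-P_2KQ_2=(P_1-P_2)KQ_1+P_2K(Q_1-Q_2)$, the same splitting of each increment through the intermediate value $A(t_{\pm}/T,j)$ so that Assumption \ref{asp:a1}(ii) handles the triangular-array piece and the bounded-variation condition (iii) handles the time-shift piece, and the same use of \eqref{eq:a1} to sum over $j$. Your treatment of the time-shift piece is in fact more careful than the paper's: the paper bounds $\sum_k\norm{A(t_{\pm}/T,j)-A(u_k,j)}_\infty$ directly by $C_A/l(j)$ without addressing the overlap of the intervals, whereas you correctly extract the covering-multiplicity factor $O(\abs{s})$ and then absorb it via $\sup_s\abs{s}/l(s)<\infty$, which leaves the stated conclusion unchanged.
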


\begin{proof}
To evaluate \eqref{eq:a7}, we use the expressions \eqref{eq:a5} and \eqref{eq:a3}.
Note that
\begin{multline}
\sum_{k =1}^T
\Babs{
\cov\Bigl(
\bx{[k+ 1/2 - s/2]}^{(a)}, \bx{[k + 1/2 + s/2]}^{(b)}
\Bigr) 
- \bm{\gamma}(u_k, s)_{ab}
} \\
\leq
\sum_{k = 1}^T
\Babs{
\sum_{j = -\infty}^{\infty}
\sum_{m, n = 1}^p
\Bigl(
A_{[k+ 1/2 - s/2], T}(j)_{am}
K_{mn}
A_{[k+ 1/2 + s/2], T}(j + s)_{bn}\\
- 
A(u_k, j)_{am}
K_{mn}
A_{[k+ 1/2 + s/2], T}(j + s)_{bn}
\Bigr)
} \\
+
\sum_{k = 1}^T
\Babs{
\sum_{j = -\infty}^{\infty}
\sum_{m, n = 1}^p
\Bigl(
A(u_k, j)_{am}
K_{mn}
A_{[k+ 1/2 + s/2], T}(j + s)_{bn}\\
-
A(u_k, j)_{am}
K_{mn}
A(u_k, j + s)_{bn}
\Bigr)
}. \label{eq:r62}
\end{multline}
Considering the first term in the right hand side, we have
\begin{align}
&
\sum_{k = 1}^T
\Babs{
\sum_{j = -\infty}^{\infty}
\sum_{m, n = 1}^p
\Bigl(
A_{[k+ 1/2 - s/2], T}(j)_{am}
K_{mn}
A_{[k+ 1/2 + s/2], T}(j + s)_{bn} \notag \\
& \qquad   \qquad  \qquad  \qquad  \qquad- 
A(u_k, j)_{am}
K_{mn}
A_{[k+ 1/2 + s/2], T}(j + s)_{bn}
\Bigr)
} \notag \\
= &
\sum_{j = -\infty}^{\infty}
\sum_{k = 1}^T
\Babs{
\sum_{m, n = 1}^p
\Bigl(
A_{[k+ 1/2 - s/2], T}(j) -
A(u_k, j)
\Bigr)_{am}
K_{mn}
A_{[k+ 1/2 + s/2], T}(j + s)_{bn}
} \notag \\
\leq &
\sum_{j = -\infty}^{\infty}
\sum_{k = 1}^T
\sum_{m, n = 1}^p
\Babs{
\Bigl(
A_{[k+ 1/2 - s/2], T}(j) -
A(u_k, j)
\Bigr)_{am}}
\abs{
K_{mn}
}\,\,
\Babs{
A_{[k+ 1/2 + s/2], T}(j + s)_{bn}
} \notag \\
\leq &
\sum_{j = -\infty}^{\infty}
\frac{C_K C_A^2}{l(j + s)}
+
\frac{C_K C_A^2}{l(j) l(j + s)}
\notag\\
\leq &
C C_K C_A^2, \label{eq:a8}
\end{align}
where the first inequality follows from $\abs{K_{mn}} \leq C_K$,
$\Babs{
A_{k+ 1/2 + s/2, T}(j + s)_{bn}
} \leq C_A/l(j + s)$ 
from \eqref{eq:r56},
and
\begin{align*}
&\sum_{k = 1}^T
\sum_{m, n = 1}^p
\Babs{
\Bigl(
A_{[k+ 1/2 - s/2], T}(j) -
A(u_k, j)
\Bigr)_{am}} \\
\leq &
\sum_{k = 1}^T
\sum_{m, n = 1}^p
\Babs{
\Bigl(
A_{[k+ 1/2 - s/2], T}(j) -
A([k+ 1/2 - s/2]/T, j)
\Bigr)_{am}} \\
& \qquad \qquad \qquad \qquad \qquad +
\Babs{
\Bigl(
A([k+ 1/2 - s/2]/T, j) -
A(u_k, j)
\Bigr)_{am}} \\
\leq &
C_A + \frac{C_A}{l(j)},
\end{align*}
where the second inequality follows from
(ii) and (iii) in Assumption \ref{asp:a1}.

Also, it holds that
\begin{align}
&
\sum_{k = 1}^T
\Babs{
\sum_{j = -\infty}^{\infty}
\sum_{m, n = 1}^p
\Bigl(
A(u_k, j)_{am}
K_{mn}
A_{[k+ 1/2 + s/2], T}(j + s)_{bn}
-
A(u_k, j)_{am}
K_{mn}
A(u_k, j + s)_{bn}
\Bigr)
} \notag \\
&=
\sum_{j = -\infty}^{\infty}
\sum_{k = 1}^T
\Babs{
\sum_{m, n = 1}^p
A(u_k, j)_{am}
K_{mn}
\Bigl(
A_{[k+ 1/2 + s/2], T}(j + s)
-
A(u_k, j + s)
\Bigr)_{bn}
} \notag \\
&\leq
\sum_{j = -\infty}^{\infty}
\sum_{k = 1}^T
\sum_{m, n = 1}^p
\Babs{
A(u_k, j)_{am}
}
\abs{
K_{mn}
}
\Babs{
\Bigl(
A_{[k+ 1/2 + s/2], T}(j + s)
-
A([k+ 1/2 + s/2]/T, j + s)
\Bigr)_{bn}
} \notag \\
&\quad
+
\sum_{j = -\infty}^{\infty}
\sum_{k = 1}^T
\sum_{m, n = 1}^p
\Babs{
A(u_k, j)_{am}
}
\abs{
K_{mn}
}
\Babs{
\Bigl(
A([k+ 1/2 + s/2]/T, j + s)
-
A(u_k, j + s)
\Bigr)_{bn}
} \notag \\
&\leq
\sum_{j = -\infty}^{\infty}
\frac{C_K C_A^2}{l(j)} + 
\sum_{j = -\infty}^{\infty}
\frac{C_K C_A}{l(j)} \sum_{k = 1}^T \norm{A([k+ 1/2 + s/2]/T, j + s) - A(u_k, j + s)}_{\infty} \notag \\
&\leq
C C_K C_A^2 + \sum_{j = -\infty}^{\infty}\frac{C_K C_A^2}{l(j) l(j + s)} \notag\\
&\leq
C C_K C_A^2\Bigl(
1 + \frac{1}{l(s)}
\Bigr), \label{eq:a9}
\end{align}
where the last inequality follows from \eqref{eq:a1}.
Combining \eqref{eq:r62}, \eqref{eq:a8} and \eqref{eq:a9}, we obtain the desired result.
\end{proof}

Generally, higher-order cumulants of the locally stationary process $\{\bx{t}\}$ 
can be approximated by those of the stationary process $\{\bm{X}^*(u)\}$ for $u = t/T$ under Assumption \ref{asp:a1}
in a similar manner as the autocovariance. To discuss higher-order cumulants,
we introduce the notation $\bm{X}(u; s)$, 
which means the observation $\bm{X}(s)$, $s \in \Z$, of the stationary process $\bm{X}^*(u)$.

Let $\gamma_{a_1, \dots, a_q}(u;\, t_1, \dots, t_{q-1})$ be the joint cumulant function of order $q$, i.e.,
\begin{multline*}
\gamma_{a_1, \dots, a_q}(u;\, t_1, \dots, t_{q-1})
:=
\cum\{
\bm{X}^{(a_1)}(u; t + t_1), 
\bm{X}^{(a_2)}(u; t + t_2),\\
\cdots,
\bm{X}^{(a_2)}(u; t + t_{q-1}),
\bm{X}^{(a_q)}(u; t)
\}.
\end{multline*}
To discriminate this notation from the autocovariance function,
we do not use $\gamma$ in boldface, although it is an extension of the autocovariance function to higher-orders.

\begin{lem} \label{lem:a4}
Under Assumption \ref{asp:a1}, we have
\begin{multline*}
\sum_{k =1}^T
\Babs{
\cum(
\bx{k + t_1}^{(a_1)}, 
\bx{k + t_2}^{(a_2)},
\cdots
\bx{k + t_{q-1}}^{(a_{q-1})},
\bx{k}^{(a_{q})}
)
- 
\gamma_{a_1, \dots, a_q}(u_k;\, t_1, \dots, t_{q-1})
}\\
\leq
C
\sum_{m = -\infty}^{\infty}
\Biggl(
\sum_{i = 1}^{q} \frac{l(m + t_i)}{\prod_{j = 1}^{q}l(m + t_j)}
+
\frac{1}{
\prod_{j = 1}^{q}l(m + t_j)}
\Biggr),
\end{multline*}
where $C$ is a generic constant and $t_q = 0$.
\end{lem}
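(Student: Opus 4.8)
The plan is to carry the second-order argument of Lemma~\ref{lem:a3} up to order $q$, with $q$ linear-filter factors in place of two. First I would write the $q$th-order cumulant of the locally stationary process explicitly. Since $\bx{t}^{(a)}=\tsum[j]{\infty}\sum_{m=1}^{p}A_{t,T}(j)_{am}\be_{t-j}^{(m)}$, multilinearity of cumulants and the independence of $\{\be_t\}$ give, after noting that the innovation cumulant $\cum\bigl(\be_{k+t_1-j_1}^{(m_1)},\dots,\be_{k-j_q}^{(m_q)}\bigr)$ vanishes unless $k+t_1-j_1=\dots=k-j_q$ (which, with $t_q:=0$, forces $j_i=t_i+m$ for a single free index $m$),
\[
\cum\bigl(\bx{k+t_1}^{(a_1)},\dots,\bx{k}^{(a_q)}\bigr)
=
\sum_{m=-\infty}^{\infty}\ \sum_{m_1,\dots,m_q=1}^{p}
\Bigl(\prod_{i=1}^{q}A_{k+t_i,T}(t_i+m)_{a_i m_i}\Bigr)\,
\kappa_{m_1,\dots,m_q},
\]
where $\kappa_{m_1,\dots,m_q}:=\cum\bigl(\be_0^{(m_1)},\dots,\be_0^{(m_q)}\bigr)$ is bounded because all moments of $\be_t$ are (Assumption~\ref{asp:a1}). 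The same identity with every $A_{k+t_i,T}(t_i+m)$ replaced by $A(u_k,t_i+m)$ represents $\gamma_{a_1,\dots,a_q}(u_k;\,t_1,\dots,t_{q-1})$, by the analogous computation for the stationary filter $A(u_k,\cdot)$ (cf.\ \eqref{eq:a3}).

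Next I would telescope the difference of the two products of $q$ filter entries:
\begin{multline*}
\prod_{i=1}^{q}A_{k+t_i,T}(t_i+m)_{a_i m_i}-\prod_{i=1}^{q}A(u_k,t_i+m)_{a_i m_i}\\
=\sum_{i=1}^{q}\Bigl(\prod_{j<i}A(u_k,t_j+m)_{a_j m_j}\Bigr)\bigl(A_{k+t_i,T}(t_i+m)-A(u_k,t_i+m)\bigr)_{a_i m_i}\Bigl(\prod_{j>i}A_{k+t_j,T}(t_j+m)_{a_j m_j}\Bigr).
\end{multline*}
In the $i$th summand every factor other than the difference factor has modulus at most $C_A/l(t_j+m)$ by \eqref{eq:r56} and Assumption~\ref{asp:a1}(i); together with $\abs{\kappa_{m_1,\dots,m_q}}\le C$ and the finiteness of the sum over $m_1,\dots,m_q\in\{1,\dots,p\}$, this bounds the difference of the two cumulants at fixed $k$ by
\[
C\sum_{m=-\infty}^{\infty}\sum_{i=1}^{q}\Bigl(\prod_{j\neq i}\frac{1}{l(t_j+m)}\Bigr)\norm{A_{k+t_i,T}(t_i+m)-A(u_k,t_i+m)}_{\infty}.
\]

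Summing over $k=1,\dots,T$, I would control $\sum_{k=1}^{T}\norm{A_{k+t_i,T}(t_i+m)-A(u_k,t_i+m)}_{\infty}$ exactly as in the proof of Lemma~\ref{lem:a3}: split it as $\sum_{k}\norm{A_{k+t_i,T}(t_i+m)-A((k+t_i)/T,t_i+m)}_{\infty}+\sum_{k}\norm{A((k+t_i)/T,t_i+m)-A(u_k,t_i+m)}_{\infty}$, bound the first piece by $C_A$ using Assumption~\ref{asp:a1}(ii), and the second by $C\,C_A/l(t_i+m)$ using the bounded-variation bound Assumption~\ref{asp:a1}(iii) (the two arguments differ by $t_i/T$, so the $k$-sum collapses to a fixed multiple of $V(\norm{A(\cdot,t_i+m)}_{\infty})$). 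Substituting and using $\prod_{j\neq i}l(t_j+m)^{-1}=l(t_i+m)\prod_{j}l(t_j+m)^{-1}$, the $i$th term contributes $C\sum_{m}\bigl(l(t_i+m)\prod_{j}l(t_j+m)^{-1}+\prod_{j}l(t_j+m)^{-1}\bigr)$; since $q$ is fixed, summing over $i=1,\dots,q$ gives precisely the asserted bound. I expect the main obstacle to be organisational rather than analytic: keeping the telescoping bookkeeping straight (which factors are $A_{t,T}$ and which are $A(u_k,\cdot)$) and verifying that the index shifts $k\mapsto k+t_i$ inside the $k$-sum and inside Assumptions~\ref{asp:a1}(ii)--(iii) cost only fixed constants. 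The higher-order innovation cumulants themselves pose no difficulty, being uniformly bounded under Assumption~\ref{asp:a1}.
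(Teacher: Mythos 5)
Your proposal is correct and takes essentially the same route as the paper's proof: multilinearity together with the i.i.d.\ innovations collapses the $q$th-order cumulant to a single free lag index, the difference of the two products of filter coefficients is telescoped, and each telescoped term is bounded exactly as in the second-order case via parts (i)--(iii) of Assumption \ref{asp:a1}. The only cosmetic difference is that you write the free index as $m$ and the telescoping identity explicitly, whereas the paper keeps the index $j_q$ and unrolls the telescoping term by term.
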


\begin{rem}
Lemma \ref{lem:a3} is a special case of Lemma \ref{lem:a4} when $q = 2$.
\end{rem}

\begin{proof}
Under Assumption \ref{asp:a1}, there exists a constant $\tilde{C}_{\ep}^{(r)}$
such that all cumulants of order $r$ are all bounded by $\tilde{C}_{\ep}^{(r)}$,
since all cumulants can be written in the form of polynomials of moments.
Now, it holds that
\begin{eqnarray*}
&&\Babs{\cum(
\bx{k + t_1}^{(a_1)}, 
\bx{k + t_2}^{(a_2)},
\cdots,
\bx{k + t_{q-1}}^{(a_{q-1})},
\bx{k}^{(a_{q})})
- 
\gamma_{a_1, \dots, a_q}(u_k;\, t_1, \dots, t_{q-1})
}\\
&\leq&
\Babs{
\sum_{j_1, \dots, j_q = -\infty}^{\infty}
\sum_{m_1, \dots, m_q = 1}^p
\Bigl(
A_{k + t_1, T}(j_1)_{a_1 m_1} \cdots A_{k, T}(j_q)_{a_q m_q}\\
&&-
A(u_k, j_1)_{a_1 m_1} \cdots A(u_k, j_q)_{a_q m_q}
\Bigr)
\cum\Bigl(
\be_{k + t_1 - j_1}^{(m_1)},
\be_{k + t_2 - j_2}^{(m_2)},
\cdots,
\be_{k - j_q}^{(m_q)}
\Bigr)}\\
&\leq&
\tilde{C}_{\ep}^{(q)}
\sum_{j_q = -\infty}^{\infty}
\sum_{m_1, \dots, m_q = 1}^p
\Babs{
A_{k + t_1, T}(j_q + t_1)_{a_1 m_1} \cdots A_{k, T}(j_q)_{a_q m_q}\\
&&-
A(u_k, j_q + t_1)_{a_1 m_1} \cdots A(u_k, j_q)_{a_q m_q}
}\\
&\leq&
\tilde{C}_{\ep}^{(q)}  \sum_{j_q = -\infty}^{\infty}
\norm{A_{k + t_1, T}(j_q + t_1)_{a_1 m_1} \cdots A_{k, T}(j_q)_{a_q m_q} \\
&&
-
A(u_k, j_q + t_1)_{a_1 m_1} A_{k + t_2, T}(j_q + t_2)_{a_q m_q} \cdots A_{k, T}(j_q)_{a_q m_q}
}_{\infty} \\
&&+
\norm{
A(u_k, j_q + t_1)_{a_1 m_1} A_{k + t_2, T}(j_q + t_2)_{a_2 m_2} \cdots A_{k, T}(j_q)_{a_q m_q} \\
&&
-
A(u_k, j_q + t_1)_{a_1 m_1} A(u_k, j_q + t_2)_{a_2 m_2} \cdots A_{k, T}(j_q)_{a_q m_q}
}_{\infty} \\
&&+
\norm{A(u_k, j_q + t_1)_{a_1 m_1} \cdots A_{k, T}(j_q)_{a_q m_q} 
-
A(u_k, j_q + t_1)_{a_1 m_1} \cdots A(u_k, j_q)_{a_q m_q}
}_{\infty} 
\end{eqnarray*}
Note that, for $1 \leq i \leq q - 1$, we have
\begin{eqnarray*}
&& \sum_{k = 1}^T
\norm{
A_{k + t_i, T}(j_q + t_i)
-
A(u_k, j_q + t_i)
}_{\infty}\\
&\leq&
\sum_{k = 1}^T
\Bigl(
\norm{
A_{k + t_i, T}(j_q + t_i)
-
A(u_{k + t_i}, j_q + t_i)
}_{\infty}
+
\norm{
A(u_{k + t_i}, j_q + t_i)
-
A(u_k, j_q + t_i)
}_{\infty}
\Bigr)
\\
&\leq&
C_A + \frac{C_A}{l(j_q + t_i)}.
\end{eqnarray*}
Thus, it holds 
\begin{eqnarray*}
&&\sum_{k =1}^T
\Babs{
\cum(
\bx{k + t_1}^{(a_1)}, 
\bx{k + t_2}^{(a_2)},
\cdots
\bx{k + t_{K-1}}^{(a_{q-1})},
\bx{k}^{(a_{q})}
)
- 
\gamma_{a_1, \dots, a_1}(u_k;\, t_1, \dots, t_{q-1})
}\\
&\leq&
\tilde{C}_{\ep}^{(q)} C_A^{q}
\sum_{j_q = -\infty}^{\infty}
\Biggl(
\sum_{i = 1}^{q - 1} \frac{l(j_q + t_i)}{l(j_q) \prod_{j = 1}^{q - 1}l(j_q + t_j)}
+
\frac{1}{
l(j_q) \prod_{j = 1}^{q - 1}l(j_q + t_j)}
+
\frac{1}
{\prod_{j = 1}^{q - 1}l(j_q + t_j)}
\Biggr)
\end{eqnarray*}
We obtain the conclusion if we replace $j_q$ with $m$ and set $t_q = 0$.
\end{proof}

\section{Empirical spectral process} \label{app:b}
In Section 
\ref{app:b}, we consider the asymptotic distribution of the empirical spectral process for multivariate locally stationary processes.

We first impose the following assumptions on the matrix-valued functions $\bp$, which is to be considered later.
Let $V_2(\cdot)$ be the total variation of bivariate functions, i.e.,
\begin{multline*}
V_2(f) =
\sup\Bigl\{
\sum_{k, l = 1}^{m, n} \abs{f(u_k, \lambda_l) - f(u_{k - 1}, \lambda_l)
- f(u_k, \lambda_{l - 1}) + f(u_{k - 1}, \lambda_{l - 1})
};\, \\
0 \leq u_0 < \cdots < u_m \leq 1, \,
0 \leq \lambda_0 < \cdots < \lambda_n \leq \pi; \,
\, m, n \in \N
\Bigr\}.
\end{multline*}

Let $\Psi$ be a class of square-integrable functions, where 
the $\mathcal{L}_2$-norm on $\psi \in \Psi$ is defined as
\[
\norm{\psi}_{\mathcal{L}_2}^2 = \uint \freqint \psi(u, \lambda)^2 \dif \lambda \dif u < \infty.
\]
The class $\Psi$ is considered for the elements of the matrix $\bp$.

For any class $\Phi: = \{\bp \in \R^{p \times p}; \phi_{ij} \in \Psi \text{ for $i, j = 1, \dots, p$}\}$, 
let $\tau_{\infty, \rm{TV}}$, $\tau_{\rm{TV}, \infty}$, $\tau_{\rm{TV}, \rm{TV}}$ and 
$\tau_{\infty, \infty}$ be
\begin{align*}
\tau_{\infty, \rm{TV}} &:= \tau_{\infty, \rm{TV}}(\Phi) 
= \sup_{\phi \in \Phi}\max_{1 \leq i, j \leq p} \sup_{u \in [0, 1]} V\bigl(\phi_{ij}(u, \cdot)\bigr), \\
\tau_{\rm{TV}, \infty} &:= \tau_{\rm{TV}, \infty}(\Phi) 
= \sup_{\phi \in \Phi}\max_{1 \leq i, j \leq p} \sup_{\lambda \in [0, \pi]} V\bigl(\phi_{ij}(\cdot, \lambda)\bigr), \\
\tau_{\rm{TV}, \rm{TV}} &:= \tau_{\rm{TV}, \rm{TV}}(\Phi) 
=  \sup_{\phi \in \Phi}\max_{1 \leq i, j \leq p} V_2\bigl(\phi_{ij}\bigr), \\
\tau_{\infty, \infty} &:= \tau_{\infty, \infty}(\Phi) =  \sup_{\phi \in \Phi}\max_{1 \leq i, j \leq p} \sup_{\substack{u \in [0, 1]\\\lambda \in [0, \pi]}} \abs{\phi_{ij}}.
\end{align*}

\begin{asp} \label{asp:b1}
Let $\Phi$ be a class of $p \times p$ matrix-valued continuous functions $\bp(u, \cdot)$ on $[-\pi, \pi]$
such that for any $\bp \in \Phi$, it holds that
\begin{enumerate}[(i)]
\item $\bp(u, \cdot) = \bp^*(u, \cdot)$ for any fixed $u \in [0, 1]$;
\item $\tau_{\infty, \rm{TV}}$, $\tau_{\rm{TV}, \infty}$, $\tau_{\rm{TV}, \rm{TV}}$ and 
$\tau_{\infty, \infty}$ are all finite.
\end{enumerate}
\end{asp}

For any function $\psi \in \Psi$, let $\psi_T$ be
\begin{equation} \label{eq:b1}
\psi_T(u, \lambda) = \frac{1}{b_T} \psi \Bigl(
\frac{u}{b_T}, \lambda
\Bigr),
\end{equation}
where $b: = b_T \to 0$ as $T \to \infty$.
Let $\Psi_T$ denotes the function class constituted by $\psi_T$, i.e.,
\begin{equation} \label{eq:function_class}
\Psi_T = \{
\psi_T(u, \lambda) = \frac{1}{b_T} \psi \Bigl(
\frac{u}{b_T}, \lambda
\Bigr);
\,
\psi \in \mathcal{L}_2
\}.
\end{equation}

\begin{asp} \label{asp:b2}
For any $\psi \in \Psi$,
let $\psi(\cdot, \lambda)$ be a positive, symmetric function of bounded variation such that
$\psi(\cdot, \lambda)$ has a compact support on $[-1, 1]$.
\end{asp}

\if0
Now, we impose assumptions on $\Psi$ for the entry $\phi_{ij}$ of the matrices $\Phi$.
The covering number $N_{\Psi}(\varepsilon)$ of $\Psi$ with respect to the norm $\mathcal{L}_2$ is
\begin{multline*}
N_{\Psi}(\varepsilon) = \inf\big\{
n \geq 1; \, \exists \psi_1, \dots, \psi_n \in \Psi \, \text{such that}\\ 
\text{$\forall \psi \in \Psi$, 
$\exists \psi_i$ ($1 \leq i \leq n$)
such that $\norm{\psi - \psi_i}_{\mathcal{L}_2} \leq \varepsilon$}
\bigr\}.
\end{multline*}

\begin{rem} \label{rem:b1}
Assumption \ref{asp:b2} is considered in \cite{ks2000} and \cite{gkz2004}.
It is shown in \cite{np1987} that, the class $\mathcal{F}: =\{f_{\sigma}; \, \sigma > 0\}$,
\[
f_{\sigma}(x, y, z) = K\Bigl(
\frac{x - y}{\sigma} + z
\Bigr)
\]
satisfies Assumption \ref{asp:b2}, and thus, 
there exists some finite and positive constants $A$ and $V$ such that
\[
N_{\mathcal{F}}(\norm{K}_{\mathcal{L}_2(P)} \varepsilon) \leq A \varepsilon^{-V},
\]
for $0 < \varepsilon < 1$.
Especially, the class $\mathcal{F}$ is called Euclidean, since 
the covering number behaves like that in the finite-dimensional space $\R^V$,
which is usually considered for a parametric model.
\end{rem}

\begin{rem} \label{rem:b2}
Note that the class $\Psi_T$ of functions is defined in \eqref{eq:function_class}.
We here clarify the order of 
$\tau_{\infty, \rm{TV}}(\Psi_T)$, 
$\tau_{\rm{TV}, \infty}(\Psi_T)$, 
$\tau_{\rm{TV}, \rm{TV}}(\Psi_T)$, 
$\tau_{\infty, \infty}(\Psi_T)$ 
and
$N_{\Psi_T}(\varepsilon)$.
Noting that $\psi$ is a function of bounded variation for each argument or both,
we can take the factor $b_T^{-1}$ of \eqref{eq:b1} outside the operator $V$ or $V_2$, so we have
\begin{align*}
\tau_{\infty, \rm{TV}} &= O(b_T^{-1}),& 
\tau_{\rm{TV}, \infty} &= O(b_T^{-1}),& \\
\tau_{\rm{TV}, \rm{TV}} &= O(b_T^{-1}),& 
\tau_{\infty, \infty} &= O(b_T^{-1}).&
\end{align*}
Next, we evaluate $N_{\Psi_T}(\varepsilon)$. 
Under Assumption \ref{asp:b2}, we have
\[
N_{\Psi}(\varepsilon) \leq A \varepsilon^{-V}.
\]
Accordingly, the covering number $N_{\Psi_T}(\varepsilon)$ is
\begin{equation} \label{eq:b2}
N_{\Psi_T}(\varepsilon) = 
N_{\Psi}(b_T \varepsilon) \leq A (b_T \varepsilon)^{-V}.
\end{equation}
\end{rem}
\fi

Let $\mathscr{A}_T(u)$ and $\bar{\mathscr{A}}_T(u)$ be
\begin{align}
\mathscr{A}_T(u)_{ab}:= 
\mathscr{A}_T(u; \psi)_{ab} 
&= 
\frac{1}{T} \sumn[k]  \freqint \psi_T(u - u_k, \lambda) \bI_T(u_k, \lambda)_{ab} \dif\lambda, \label{eq:b3} \\
\bar{\mathscr{A}}_T(u)_{ab}:= 
\bar{\mathscr{A}}_T(u; \psi)_{ab} &= 
\frac{1}{T}  \sumn[k] \freqint \psi_T(u - u_k, \lambda) \fu[u_k]_{ab} \dif \lambda. \label{eq:b4} 
\end{align}
The empirical spectral process $\xi_T(u)_{ab}$ is
\begin{equation} \label{eq:b5} 
\xi_T(u)_{ab}
:=
\xi_T(u; \psi)_{ab}
=
\sqrt{Tb_T} 
\bigl(\mathscr{A}_T(u; \psi) - \bar{\mathscr{A}}_T(u; \psi)\bigr)_{ab}.
\end{equation}
We use the first expression in \eqref{eq:b3}, \eqref{eq:b4} and \eqref{eq:b5}
when there is no confusion with $\psi$.

\subsection{Preliminary Computations}
Let $\hat{\psi}$ be
\[
\hat{\psi}(u, k) = 
\freqint
\psi(u, \lambda) \exp(-i k \lambda) \dif \lambda.
\]

\begin{lem}
Let $\beta_T$ be a sequence of positive numbers such that $\beta_T \to 0$ as $T \to \infty$.
Suppose
\begin{equation} \label{eq:b6}
\limsup_{T \to \infty} \,\,  \beta_T
\tsum[s]{T} \sup_u \abs{\hat{\psi}(u, -s)} < \infty.
\end{equation}
Then, it holds that
\[
\abs{E \mathscr{A}_T(u)_{ab} - \bar{\mathscr{A}}_T(u)_{ab}} = O(T^{-1} b_T^{-1} \beta_T^{-1}).
\]
\end{lem}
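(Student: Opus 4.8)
The plan is to write $E\mathscr{A}_T(u)_{ab}-\bar{\mathscr{A}}_T(u)_{ab}$ as a double sum over a time index $k$ and a lag index $s$, and then bound it termwise using Lemma~\ref{lem:a3}. First I would substitute the definitions \eqref{eq:b3}--\eqref{eq:b4} and take expectations. By \eqref{eq:r3.17}, $E\bI_T(u_k,\lambda)_{ab}=\frac{1}{2\pi}\sum_{s\in\Z}\widetilde{c}_k(s)\exp(-\I s\lambda)$, where $\widetilde{c}_k(s):=\cov\bigl(\bx{[k+1/2+s/2]}^{(a)},\bx{[k+1/2-s/2]}^{(b)}\bigr)$ for $s$ in the admissible lag set $\mathcal{S}_k$ of \eqref{eq:r3.17} (recall $u_k T=k$) and $\widetilde{c}_k(s):=0$ otherwise, while $\fu[u_k]_{ab}=\frac{1}{2\pi}\sum_{s\in\Z}\bm{\gamma}(u_k,s)_{ab}\exp(-\I s\lambda)$ by \eqref{eq:a3}. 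Since $\freqint\psi_T(v,\lambda)\exp(-\I s\lambda)\dif\lambda=b_T^{-1}\hat{\psi}(v/b_T,s)$, one obtains
\[
E\mathscr{A}_T(u)_{ab}-\bar{\mathscr{A}}_T(u)_{ab}
=\frac{1}{2\pi Tb_T}\sumn[k]\sum_{s\in\Z}\bigl(\widetilde{c}_k(s)-\bm{\gamma}(u_k,s)_{ab}\bigr)\,\hat{\psi}\Bigl(\frac{u-u_k}{b_T},s\Bigr).
\]

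Next I would take absolute values, bound $\bigl|\hat{\psi}((u-u_k)/b_T,s)\bigr|\le\sup_{v}|\hat{\psi}(v,s)|$ (which is free of $u$ and $k$), and interchange the two absolutely convergent summations, so the problem reduces to estimating $\Delta_s:=\sumn[k]\bigl|\widetilde{c}_k(s)-\bm{\gamma}(u_k,s)_{ab}\bigr|$ uniformly in $s$. I would split $\Delta_s$ over $\{k:s\in\mathcal{S}_k\}$ and $\{k:s\notin\mathcal{S}_k\}$: on the first set Lemma~\ref{lem:a3} gives the bound $C(1+1/l(s))$, and this set is empty once $|s|\ge T$; on the second set there are at most $O(|s|)$ values of $k$ (those within $O(|s|)$ of the endpoints $1$ and $T$), each contributing $|\bm{\gamma}(u_k,s)_{ab}|\le C/l(s)$ by \eqref{eq:a4}, and $|s|/l(s)$ is bounded. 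Hence $\Delta_s\le C$ for $|s|\le T$, whereas for $|s|>T$ only the $\bm{\gamma}$-part remains and $\Delta_s=\sumn[k]|\bm{\gamma}(u_k,s)_{ab}|\le CT/l(s)$. Consequently
\[
\bigl|E\mathscr{A}_T(u)_{ab}-\bar{\mathscr{A}}_T(u)_{ab}\bigr|
\le\frac{C}{Tb_T}\Biggl(\sum_{|s|\le T}\sup_{v}|\hat{\psi}(v,s)|+T\sum_{|s|>T}\frac{\sup_{v}|\hat{\psi}(v,s)|}{l(s)}\Biggr).
\]

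To finish, reindexing $s\mapsto -s$ shows the first sum equals $\tsum[s]{T}\sup_{u}|\hat{\psi}(u,-s)|=O(\beta_T^{-1})$ by the hypothesis, so the first term is $O(T^{-1}b_T^{-1}\beta_T^{-1})$, the claimed rate. For the tail I would use that $\psi(u,\cdot)$ has uniformly bounded variation (in force under Assumption~\ref{asp:b1}), so $\sup_{v}|\hat{\psi}(v,s)|\le C|s|^{-1}$, together with $l(s)=|s|\log^{1+\kappa}|s|$ for $|s|>1$; then $T\sum_{|s|>T}\sup_{v}|\hat{\psi}(v,s)|/l(s)\le CT\sum_{|s|>T}|s|^{-2}(\log|s|)^{-(1+\kappa)}\le C(\log T)^{-(1+\kappa)}$, which is $O(\beta_T^{-1})$ because $\beta_T\to0$. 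Adding the two contributions gives $\bigl|E\mathscr{A}_T(u)_{ab}-\bar{\mathscr{A}}_T(u)_{ab}\bigr|=O(T^{-1}b_T^{-1}\beta_T^{-1})$.

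The step I expect to be the main obstacle is the first one: writing the double-sum identity correctly --- in particular keeping precise track of the truncated lag set $\mathcal{S}_k$ of the pre-periodogram and justifying the interchange of the infinite $k$- and $s$-summations, which rests on absolute summability guaranteed by the decay of $\hat{\psi}$ in $s$ together with \eqref{eq:a4}. Once this bookkeeping is done, the estimation of $\Delta_s$ via Lemma~\ref{lem:a3} and the splitting of the $s$-sum at $|s|=T$ are routine.
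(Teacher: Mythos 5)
Your proposal is correct and follows essentially the same route as the paper's proof: expand both $E\mathscr{A}_T(u)_{ab}$ and $\bar{\mathscr{A}}_T(u)_{ab}$ via Parseval into a double sum over $k$ and the lag $s$, control the range $\abs{s}\leq T$ with Lemma \ref{lem:a3} and hypothesis \eqref{eq:b6}, and control the tail $\abs{s}>T$ using the decay of $\bm{\gamma}(u,s)$ and of $\hat{\psi}(u,s)$. The only (immaterial) difference is bookkeeping: the paper zero-extends $\bm{X}_{t,T}$ outside $\{1,\dots,T\}$ so the lag sum is uniform in $k$, whereas you track the truncated lag set $\mathcal{S}_k$ and the $O(\abs{s})$ boundary indices explicitly.
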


\begin{rem}
The condition
\begin{equation} \label{eq:b7}
\tsum[s]{\infty} \sup_u \abs{\hat{\psi}(u, -s)} < \infty
\end{equation}
satisfies \eqref{eq:b6}.
However, if $\psi(u, \cdot)$ is only a function of bounded variation,
then $\psi$ may not satisfy the condition \eqref{eq:b7}.
Under \eqref{eq:b6}, we see that
\[
\tsum[s]{T} \sup_u \abs{\hat{\psi}(u, -s)} = O(\beta_T^{-1}),
\]
which we use in the following evaluations.
\end{rem}

\begin{proof}
From \eqref{eq:r3.17}, we have
\begin{equation} \label{eq:rb8}
\bI_T(u, \lambda)_{ab}
=
\frac{1}{2\pi} \sum_{l: 1 \leq [u T + 1/2 \pm l/2] \leq T}
\bx{[u T + 1/2 + l/2]}^{(a)}
{\bx{[u T + 1/2 - l/2]}^{(b)}}
\exp(- \I \lambda l).
\end{equation}
In expression \eqref{eq:rb8}, $l$ depends on $u$,
but it can be naturally extended to 
\begin{equation} \label{eq:rb9}
\bI_T(u, \lambda)_{ab}
=
\frac{1}{2\pi} \sum_{l = 1- T}^{T - 1}
\bx{[u T + 1/2 + l/2]}^{(a)}
{\bx{[u T + 1/2 - l/2]}^{(b)}}
\exp(- \I \lambda l),
\end{equation}
if we let $\bx{m} \equiv 0$ for any $m \leq 0$ or $m \geq T + 1$.
We shall use this expression \eqref{eq:rb9} in the following proof.
By Parseval's identity, it holds that
\begin{eqnarray*}
&&
\abs{E\mathscr{A}_T(u)_{ab} - \bar{\mathscr{A}}_T(u)_{ab}} \\
&\leq&
\Babs{\frac{1}{T} \sum_{k = 1}^{T}
\freqint \psi_T(u - u_k, \lambda) 
\bigl(
E\bI_T(u_k, \lambda)_{ab} - \fu[u_k]_{ab}
\bigr)
\dif\lambda}
\\
&=&
\Babs{
\frac{1}{2\pi T}
 \sum_{k = 1}^{T}
 \Bigl(
\sum_{s = 1 - T}^{T - 1}
\hat{\psi}_T(u - u_k, -s)
\Bigl(
E\bx{[k + 1/2 + s/2]}^{(a)} \bx{[k + 1/2 - s/2]}^{(b)}
-
\bm{\gamma}(u_k, -s)_{ab}
\Bigr)
\\
&&\qquad\qquad
+
\sum_{\abs{s} \geq T}
\hat{\psi}_T(u - u_k, -s)
\bm{\gamma}(u_k, -s)_{ab}\Bigr)
}\\
&\leq&
\frac{1}{2\pi T}
\Babs{
\sum_{k = 1}^{T}
\sum_{s = 1 - T}^{T - 1}
\hat{\psi}_T(u - u_k, -s)
\Bigl(
\cov(\bx{[k + 1/2 + s/2]}^{(a)}, \bx{[k + 1/2 - s/2]}^{(b)})
-
\bm{\gamma}(u_k, -s)_{ab}
\Bigr)
}
\\
&&\qquad\qquad
+ 
\frac{1}{2\pi T}
\sum_{k = 1}^{T}
\Babs{
\sum_{\abs{s} \geq T}
\hat{\psi}_T(u - u_k, -s)
\bm{\gamma}(u_k, -s)_{ab}
}\\
&:=&
B_1 + B_2, \quad \text{(say).}
\end{eqnarray*}
By Lemma \ref{lem:a3}, it holds that
\begin{eqnarray*}
B_1 
&\leq&
\frac{1}{2\pi b_T}
\sum_{s = 1 - T}^{T - 1}
\sup_u
\abs{\hat{\psi}(u, -s)}
\Babs{
\frac{1}{T}
\sum_{k = 1}^{T}
\cov(\bx{[k + 1/2 + s/2]}^{(a)}, \bx{[k + 1/2 - s/2]}^{(b)})
-
\bm{\gamma}(u_k, -s)_{ab}
}\\
&\leq&
\frac{C}{2\pi b_T T}
\sum_{s = 1 - T}^{T - 1}
\sup_u
\abs{\hat{\psi}(u, -s)}
\Bigl(
1 + \frac{1}{l(s)}
\Bigr)\\
&=&
O(T^{-1} b_T^{-1} \beta_T^{-1}).
\end{eqnarray*}
Further, noting \eqref{eq:a4}, we have
\begin{eqnarray*}
B_2 
&\leq&
\frac{1}{2\pi b_T} \sup_u
\sum_{\abs{s} \geq T}
\abs{
\hat{\psi}(u, -s)
}
\abs{
\bm{\gamma}(u, -s)_{ab}
}
\\
&\leq&
\frac{C}{b_T \beta_T}\sum_{\abs{s} \geq T}\frac{C}{l(s)} \\
&=&
O(T^{-1} b_T^{-1} \beta_T^{-1}),
\end{eqnarray*}
since $\{l(j)^{-1}\}_{j \in \N}$ is a convergent series, 
and $\abs{l(s)} \geq T$ for $\abs{s} \geq T$.
Therefore, we obtain the assertion.
\end{proof}

Next, we evaluate the higher-order cumulants of $\xi_T(u)$.
We first clarify the bias between those of the time-varying process $\{\bx{t}\}$ 
and those of the approximate stationary process $\{\bm{X}(u, t)\}$,
and then evaluate the higher order cumulants of the stationary process.

\begin{lem} \label{lem:b5}
Let $\beta_T$ be a sequence of positive numbers such that $\beta_T \to 0$ as $T \to \infty$.
Suppose $\psi^{(1)}(\cdot, \lambda)$, \dots, $\psi^{(q)}(\cdot, \lambda)$ 
are all functions of bounded variation and satisfy Assumption \ref{asp:b2} and 
\begin{equation} \label{eq:b9}
\limsup_{T\to \infty} \,\, \beta_T \tsum[s]{T} \sup_u \abs{\hat{\psi}^{(i)}(u, -s)} < \infty,
\qquad
\text{for $i = 1, \dots, q$}.
\end{equation}
If $b_T \to 0$, $T b_T \to \infty$ and $T^{-q/2} \beta_T^{-1} \to 0$ as $T \to \infty$, then it holds that
\[
\cum\bigl(
\xi_T(u^{(1)}; \psi^{(1)})_{a_1b_1},
\cdots,
\xi_T(u^{(q)}; \psi^{(q)})_{a_qb_q}
\bigr)= O(T^{1 - q/2} b_T^{1 - q/2}).
\]
Especially, when $q = 2$, we have
\begin{multline} \label{eq:b10}
\lim_{T \to \infty}
\cov\bigl(
\xi_T(u^{(1)}; \psi^{(1)})_{a_1b_1},
\xi_T(u^{(2)}; \psi^{(2)})_{a_2b_2}
\bigr)
=\\
2\pi \delta(u^{(1)}, u^{(2)}) \Biggl(
\freqint \Bigl(\int_{-\infty}^{\infty} 
\psi^{(1)}(v, \lambda) 
\overline{\psi^{(2)}(v, \lambda)} \dif v\Bigr)
\fu[u^{(1)}]_{a_1 a_2} \overline{\fu[u^{(1)}]}_{b_1 b_2}
\dif \lambda\\
+
\freqint \Bigl(\int_{-\infty}^{\infty} 
\psi^{(1)}(v, \lambda) 
\overline{\psi^{(2)}(v, -\lambda)} \dif v\Bigr)
\fu[u^{(1)}]_{a_1 b_2} \overline{\fu[u^{(1)}]}_{b_1 a_2}
\dif \lambda\\
+
\freqint \freqint 
\Bigl(\int_{-\infty}^{\infty} 
\psi^{(1)}(v, \lambda_1) 
\overline{\psi^{(2)}(v, -\lambda_2)} \dif v\Bigr)
\tilde{\gamma}_{a_1 a_2 b_1 b_2} (u^{(1)};\lambda_1, \lambda_2, -\lambda_2)
\dif \lambda_1 \dif \lambda_2
\Biggr),
\end{multline}
where $\tilde{\gamma}$ is the fourth-order spectral density of the process.
\end{lem}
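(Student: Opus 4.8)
The plan is to reduce each $\xi_T(u^{(i)})_{a_ib_i}$ to a weighted quadratic form in the observations, apply the product theorem for cumulants, and then pass to the stationary companion processes via Lemma~\ref{lem:a4}. Since cumulants of order $q\ge 2$ are unaffected by deterministic shifts, the deterministic term $\bar{\mathscr{A}}_T$ entering \eqref{eq:b5} may be discarded, and each $Y^{(i)}_{k,s}$ below may be centered. Writing $Y^{(i)}_{k,s}:=\bx{[k+1/2+s/2]}^{(a_i)}\bx{[k+1/2-s/2]}^{(b_i)}$ and combining \eqref{eq:b3} with the extended pre-periodogram \eqref{eq:rb9} and Parseval's identity, one gets for each $i$
\[
\xi_T(u^{(i)})_{a_ib_i}
=
\frac{1}{2\pi\sqrt{Tb_T}}
\sum_{k=1}^{T}\sum_{s=1-T}^{T-1}
\hat{\psi}^{(i)}\!\Bigl(\tfrac{u^{(i)}-u_k}{b_T},-s\Bigr)
\bigl(Y^{(i)}_{k,s}-E\,Y^{(i)}_{k,s}\bigr),
\]
so that $\cum\bigl(\xi_T(u^{(1)})_{a_1b_1},\dots,\xi_T(u^{(q)})_{a_qb_q}\bigr)$ is, up to the prefactor $(2\pi)^{-q}(Tb_T)^{-q/2}$, a $q$-fold sum over $(k_1,s_1),\dots,(k_q,s_q)$ of the weights $\prod_i\hat\psi^{(i)}\bigl((u^{(i)}-u_{k_i})/b_T,-s_i\bigr)$ times the joint cumulant of the $q$ products $Y^{(i)}_{k_i,s_i}$.

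By the product theorem for cumulants (\cite{brillinger1981}), this joint cumulant of products equals the sum over indecomposable partitions $\nu$ of the $2q$ coordinate indices of the product over blocks $B\in\nu$ of the joint cumulant of the $\bx$'s whose coordinates lie in $B$; any $\nu$ containing a singleton vanishes because $\be_t$, hence $\bx{t}$, is centered. Next I would use Lemma~\ref{lem:a4} to replace each block cumulant by the corresponding stationary cumulant $\gamma_{c_1\cdots c_r}(u_{k};\,\cdot)$ of the approximating process; the accumulated replacement error is negligible because the bounds of Lemma~\ref{lem:a4}, together with \eqref{eq:a1} and the growth condition \eqref{eq:b9} (which gives $\sum_{|s|\le T}\sup_u|\hat\psi^{(i)}(u,-s)|=O(\beta_T^{-1})$), are dominated by the main term precisely under $T^{-q/2}\beta_T^{-1}\to 0$. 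For the stationary cumulants I would insert the spectral representation
\[
\gamma_{c_1\cdots c_r}(u;t_1,\dots,t_{r-1})
=
\int\!\cdots\!\int
\tilde\gamma_{c_1\cdots c_r}(u;\lambda_1,\dots,\lambda_{r-1})
\exp\Bigl(\I\textstyle\sum_{j}\lambda_j t_j\Bigr)\,\dif\lambda
\]
(up to normalizing constants) and use Fourier inversion to collapse the lag sums $\sum_{s_i}\hat\psi^{(i)}\bigl((u^{(i)}-u_{k_i})/b_T,-s_i\bigr)e^{\I s_i(\cdot)}$ back to $\psi^{(i)}\bigl((u^{(i)}-u_{k_i})/b_T,\cdot\bigr)$ evaluated at a linear combination of the integration frequencies. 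The remaining $k$-sums are Riemann sums: by Lemma~P5.1 of \cite{brillinger1981}, $T^{-1}\sum_{k}b_T^{-1}\psi^{(i)}\bigl((u^{(i)}-u_k)/b_T,\cdot\bigr)\to\int_{-\infty}^{\infty}\psi^{(i)}(v,\cdot)\,\dif v$, and indecomposability of $\nu$ ties all the $k_i$ together through the decay of the stationary cumulants, so that exactly one $k$-sum is free, contributing a factor $O(Tb_T)$ after the kernel scaling, while the rest are absolutely summable.

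Counting powers, the prefactor gives $(Tb_T)^{-q/2}$, one free time sum gives $O(Tb_T)$, and all other sums are $O(1)$, whence $\cum(\cdots)=O\bigl(T^{1-q/2}b_T^{1-q/2}\bigr)$. For $q=2$ I would carry out the evaluation explicitly: the surviving indecomposable partitions of the four indices $\{a_1,b_1\}\mid\{a_2,b_2\}$ are the single block of size four, giving the fourth-order term with $\tilde\gamma_{a_1a_2b_1b_2}$, and the two size-two partitions straddling the two rows, $\{\{a_1,a_2\},\{b_1,b_2\}\}$ and $\{\{a_1,b_2\},\{b_1,a_2\}\}$, giving the two products of time-varying spectral densities; the non-straddling partition $\{\{a_1,b_1\},\{a_2,b_2\}\}$ is decomposable and is exactly what the centering removes. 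Carrying out the frequency integrals and the Riemann-sum limit, a Dirichlet-kernel/diagonal argument produces the Kronecker factor $\delta(u^{(1)},u^{(2)})$ — the two kernel supports separate once $b_T$ is small unless $u^{(1)}=u^{(2)}$ — together with the $\int_{-\infty}^{\infty}\psi^{(1)}(v,\cdot)\,\overline{\psi^{(2)}(v,\pm\cdot)}\,\dif v$ factors and the overall $2\pi$, which assembles into \eqref{eq:b10}.

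The main obstacle will be the uniform order-counting for general $q$: one must verify carefully that, after substituting the stationary cumulants and applying Fourier inversion, the number of genuinely free summation indices (the common time origin and the lags) is governed by the block structure of the indecomposable partition, so that the decay supplied by Lemma~\ref{lem:a4}, by \eqref{eq:a1} and by \eqref{eq:b9} suffices to bound all but one $k$-sum. Tracking the interplay of the kernel scaling $b_T^{-1}$, the Riemann-sum approximation and the cumulant decay — and checking that the replacement errors are uniformly of smaller order than $T^{1-q/2}b_T^{1-q/2}$ — is the delicate part; the edge effects of $\bI_T$ near $u\in\{0,1\}$ are harmless since $u$ is interior and $b_T\to 0$.
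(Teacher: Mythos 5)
Your proposal follows essentially the same route as the paper's proof: expand $\xi_T$ via the extended pre-periodogram \eqref{eq:rb9} and Parseval's identity, apply the product theorem for cumulants over indecomposable partitions (Theorem 2.3.2 of \cite{brillinger1981}), replace block cumulants by their stationary counterparts via Lemma \ref{lem:a4} with the replacement error controlled through \eqref{eq:b9}, and for $q=2$ evaluate the three surviving partitions with Lemma P5.1 of \cite{brillinger1981} and the compact support of $\psi$ yielding the $\delta(u^{(1)},u^{(2)})$ factor. The order counting and the identification of the leading and error terms match the paper's argument, so the proposal is correct and not materially different.
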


\begin{rem}
The sequence $\beta_T$ is used to alleviate the divergence of the harmonic series.
There exists a sequence $\beta_T$ such that $\beta_T^{-1} = O(\log T)$
(See Remark \ref{rem:b7} below for details).
For this sequence, the condition $T^{-q/2} \beta_T^{-1} \to 0$ always holds true for $q \geq 2$.
\end{rem}

\begin{proof}
Using the expression \eqref{eq:rb9}, we have $\mathscr{A}_T(u)_{ab}$ as
\begin{equation} \label{eq:b11}
\mathscr{A}_T(u)_{ab}
=
\frac{1}{2\pi T}
 \sum_{k = 1}^{T}
\sum_{s = 1 - T}^{T - 1}
\hat{\psi}_T(u - u_k, -s)
\bx{[k + 1/2 + s/2]}^{(a)} \bx{[k + 1/2 - s/2]}^{(b)},
\end{equation}
which is a linear combination of $\bx{[k + 1/2 + s/2]}^{(a)} \bx{[k + 1/2 - s/2]}^{(b)}$.
We apply Lemma \ref{lem:a4} to compute the higher order cumulants.
Actually, it holds that
\begin{multline*}
\cum\bigl(
\xi_T(u^{(1)}; \psi^{(1)})_{a_1b_1},
\cdots,
\xi_T(u^{(q)}; \psi^{(q)})_{a_qb_q}
\bigr)\\
=
\cum\Bigl(
\frac{1}{T}
\sum_{\kappa_1}
\bx{\kappa_1}^{(a_1)} \bx{\kappa_1 - s_1}^{(b_1)},
\frac{1}{T}
\sum_{\kappa_2}
\bx{\kappa_2}^{(a_2)} \bx{\kappa_2 - s_2}^{(b_2)},
\dots,
\frac{1}{T}
\sum_{\kappa_q}
\bx{\kappa_q}^{(a_q)} \bx{\kappa_q - s_q}^{(b_q)}
\Bigr)\\
=
\frac{1}{T^q} \sum_{\kappa_1, \dots, \kappa_q}
\cum\Bigl(
\bx{\kappa_1}^{(a_1)} \bx{\kappa_1 - s_1}^{(b_1)},
\dots,
\bx{\kappa_q}^{(a_q)} \bx{\kappa_q - s_q}^{(b_q)}
\Bigr),
\end{multline*}
where for brevity, we let 
\[
\kappa_1 := [k_1 + 1/2 + s_1/2],
\quad
\kappa_2 := [k_2 + 1/2 + s_2/2],
\quad
\dots,
\quad
\kappa_q := [k_q + 1/2 + s_q/2].
\]
To compute higher order cumulants, 
we have to consider all indecomposable partitions of the following table 
(See \cite{brillinger1981}, Theorem 2.3.2):
\[
\begin{array}{cc}
\bx{\kappa_1}^{(a_1)} &  \bx{\kappa_1 - s_1}^{(b_1)} \\
\bx{\kappa_2}^{(a_2)} &  \bx{\kappa_2 - s_2}^{(b_2)} \\
\vdots & \vdots \\
\bx{\kappa_q}^{(a_q)} &  \bx{\kappa_q - s_q}^{(b_q)} \\
\end{array}.
\]
In view of Lemma \ref{lem:a4} with some tedious computation, 
all indecomposable partitions can be approximated by those cumulants of the stationary process
with a bias of lower order for a fixed $q \geq 2$.

We give a representative example of a partition below.
The other partitions can be evaluated in the same manner.
Without loss of generality, let $q$ be odd.
Suppose we evaluate the following cumulant:
\begin{equation*}
\frac{1}{T^q} \sum_{\kappa_1, \dots, \kappa_q}
\cum\Bigl( \bx{\kappa_1}^{(a_1)}, \bx{\kappa_2}^{(a_2)}\Bigr)
\cum\Bigl( \bx{\kappa_2 - s_2}^{(b_2)}, \bx{\kappa_3 - s_3}^{(b_3)}\Bigr)
\cdots
\cum\Bigl( \bx{\kappa_q}^{(a_q)}, \bx{\kappa_1 - s_1}^{(b_1)}\Bigr).
\end{equation*}
If we replace variables $\kappa_2, \dots, \kappa_q$ with
$\tau_2: = \kappa_2 - \kappa_1, \dots, \tau_q := \kappa_q - \kappa_1$, then we have
\begin{multline} \label{eq:b12}
\frac{1}{T^q} \sum_{\kappa_1, \tau_2, \dots, \tau_q}
\cum\Bigl( \bx{\kappa_1}^{(a_1)}, \bx{\kappa_1 + \tau_2}^{(a_2)}\Bigr)\\
\cum\Bigl( \bx{\kappa_1 + \tau_2 - s_2}^{(b_2)}, \bx{\kappa_1 + \tau_3 - s_3}^{(b_3)}\Bigr)
\cdots
\cum\Bigl( \bx{\kappa_1 + \tau_q}^{(a_q)}, \bx{\kappa_1 - s_1}^{(b_1)}\Bigr).
\end{multline}
Applying Lemma \ref{lem:a4}, \eqref{eq:b12} can be approximated by
\begin{equation} \label{eq:b13}
\frac{1}{T^q} \sum_{\kappa_1 = 1}^T
\sum_{\tau_2, \cdots, \tau_q}
\bm{\gamma}(u_{\kappa_1}, \tau_2)_{a_1 a_2}
\bm{\gamma}(u_{\kappa_1}, \tau_3 - s_3 - \tau_2 + s_2)_{b_2 b_3}
\cdots
\bm{\gamma}(u_{\kappa_1}, -s_1 - \tau_q)_{a_q b_1}.
\end{equation}
More precisely, the absolute bias between \eqref{eq:b12} and \eqref{eq:b13} is bounded by
\[
T^{-q} \sum_{i = 1}^q C_i\Bigl(
1 + \frac{1}{l(s_i)}
\Bigr).
\]
Returning back to the expression \eqref{eq:b11},
we see that the full expression of the absolute bias is bounded by
\begin{align*}
&
\frac{1}{2\pi b_T^q}
\sum_{s_1, \cdots, s_q = 1 - T}^{T - 1}
\prod_{i = 1}^q
\sup_u \abs{\hat{\psi}^{(i)}(u, -s_i)}
\Babs{
\frac{1}{T^q} \sum_{\kappa_1, \tau_2, \dots, \tau_q}
\cum\Bigl( \bx{\kappa_1}^{(a_1)}, \bx{\kappa_1 + \tau_2}^{(a_2)}\Bigr)\\
&
\cum\Bigl( \bx{\kappa_1 + \tau_2 - s_2}^{(b_2)}, \bx{\kappa_1 + \tau_3 - s_3}^{(b_3)}\Bigr)
\cdots
\cum\Bigl( \bx{\kappa_1 + \tau_q}^{(a_q)}, \bx{\kappa_1 - s_1}^{(b_1)}\Bigr)\\
&
-\frac{1}{T^q} \sum_{\kappa_1 = 1}^T
\sum_{\tau_2, \cdots, \tau_q}
\bm{\gamma}(u_{\kappa_1}, \tau_2)_{a_1 a_2}
\bm{\gamma}(u_{\kappa_1}, \tau_3 - s_3 - \tau_2 + s_2)_{b_2 b_3}
\cdots
\bm{\gamma}(u_{\kappa_1}, - s_1 - \tau_q)_{a_q b_1}
}\\
=&
O(T^{-q} b_T^{-q} \beta_T^{-q}).
\end{align*}
In summary, all cumulants of order $q$ for $\mathscr{A}_T$ can be approximated by those of 
the stationary process with a bias of order $O(T^{-q} b_T^{-q} \beta_T^{-q})$.
Thus, the bias in those cumulants for $\xi_T$ is $O(T^{-q/2} b_T^{-q/2} \beta_T^{-q})$.
Furthermore, it holds that
\begin{equation} \label{eq:b14}
\cum\bigl(
\xi_T(u^{(1)}; \psi^{(1)})_{a_1b_1},
\cdots,
\xi_T(u^{(q)}; \psi^{(q)})_{a_qb_q}
\bigr)
=
O(T^{1-q/2} b_T^{1-q/2}),
\end{equation}
since $\psi^{(1)}(\cdot, \lambda)$, \dots, $\psi^{(q)}(\cdot, \lambda)$ 
are all functions of bounded variation.
Therefore, the bias is asymptotically negligible.
A representative example of \eqref{eq:b14} is shown below.

Let us consider the case $q = 2$ for $\xi_T$. Note that $q$ is even now. 
We have three terms of the type \eqref{eq:b13}
, i.e.,
\begin{enumerate}[(i)]
\item the approximation for $\cum(\bx{\kappa_1}^{(a_1)},  \bx{\kappa_1 - s_1}^{(b_1)})
\cum(\bx{\kappa_2}^{(a_2)}, \bx{\kappa_2 - s_2}^{(b_2)})$:
\begin{multline} \label{eq:b15}
\frac{b_T}{T} \sum_{\kappa_1= 1}^T \sum_{s_1, s_2, \tau_2} 
\hat{\psi}_T^{(1)}(u^{(1)} - u_{\kappa_1}, s_1)
\hat{\psi}_T^{(2)}(u^{(2)} - u_{\kappa_1}, s_2) \\
\bm{\gamma}(u_{\kappa_1}, \tau_2)_{a_1 a_2}
\bm{\gamma}(u_{\kappa_1}, \tau_2 - s_2 + s_1)_{b_1 b_2};
\end{multline}
\item the approximation for $\cum(\bx{\kappa_1}^{(a_1)},  \bx{\kappa_2 - s_2}^{(b_2)})
\cum(\bx{\kappa_1 - s_1}^{(b_1)}, \bx{\kappa_2}^{(a_2)})$:
\begin{multline} \label{eq:b16}
\frac{b_T}{T} \sum_{\kappa_1= 1}^T \sum_{s_1, s_2, \tau_2} 
\hat{\psi}_T^{(1)}(u^{(1)} - u_{\kappa_1}, s_1)
\hat{\psi}_T^{(2)}(u^{(2)} - u_{\kappa_1}, s_2) \\
\bm{\gamma}(u_{\kappa_1}, \tau_2 - s_2)_{a_1 b_2}
\bm{\gamma}(u_{\kappa_1}, \tau_2 + s_1)_{b_1 a_2};
\end{multline}
\item the approximation for $\cum(\bx{\kappa_1}^{(a_1)},  \bx{\kappa_2 - s_2}^{(b_2)},
\bx{\kappa_1 - s_1}^{(a_2)}, \bx{\kappa_2}^{(b_1)})$:
\begin{multline} \label{eq:b17}
\frac{b_T}{T} \sum_{\kappa_1= 1}^T \sum_{s_1, s_2, \tau_2} 
\hat{\psi}_T^{(1)}(u^{(1)} - u_{\kappa_1}, s_1)
\hat{\psi}_T^{(2)}(u^{(2)} - u_{\kappa_1}, s_2) \\
\gamma_{a_1 a_2 b_1 b_2}(u_{\kappa_1}; -s_1, \tau_2, \tau_2 - s_2).
\end{multline}
\end{enumerate}

We first explain the term \eqref{eq:b15}.
By repeated application of  the Parseval equality 
(see, e.g., the proof of Lemma 2.2 in \cite{ht1982} for details)
and by Lemma P5.1 in \cite{brillinger1981},
the term \eqref{eq:b15} is equivalent to
\begin{multline*}
2\pi b_T
\int_{0}^1 \freqint 
\psi^{(1)}_T(u^{(1)} - u, \lambda) 
\overline{\psi^{(2)}_T(u^{(2)} - u, \lambda)}\\ 
\fu_{a_1 a_2} \overline{\fu}_{b_1 b_2}
\dif \lambda \dif u
+
O(T^{-1}b_T^{-1}).
\end{multline*}
Under Assumption \ref{asp:b2}, if $u^{(1)} \not = u^{(2)}$, we have
\[
\int_{0}^1 \freqint 
\psi^{(1)}_T(u^{(1)} - u, \lambda) 
\overline{\psi^{(2)}_T(u^{(2)} - u, \lambda)}\\ 
\fu_{a_1 b_2} \overline{\fu}_{b_1 a_2}
\dif \lambda \dif u
=
o(b_T^{-1}),
\]
since the supports of $\psi^{(1)}$ and $\psi^{(2)}$ are compact.
Thus, the term \eqref{eq:b15} converges to
\begin{equation} \label{eq:b18}
2\pi \delta(u^{(1)}, u^{(2)})
\freqint \Bigl(\int_{-\infty}^{\infty} 
\psi^{(1)}(v, \lambda) 
\overline{\psi^{(2)}(v, \lambda)} \dif v\Bigr)
\fu[u^{(1)}]_{a_1 a_2} \overline{\fu[u^{(1)}]}_{b_1 b_2}
\dif \lambda,
\end{equation}
where $\delta$ is a delta function such that $\delta (a, b) = 1$ if $a = b$, and 0 otherwise.
Similarly, the term \eqref{eq:b16} converges to
\begin{equation} \label{eq:b19}
2\pi \delta(u^{(1)}, u^{(2)})
\freqint \Bigl(\int_{-\infty}^{\infty} 
\psi^{(1)}(v, \lambda) 
\overline{\psi^{(2)}(v, -\lambda)} \dif v\Bigr)
\fu[u^{(1)}]_{a_1 b_2} \overline{\fu[u^{(1)}]}_{b_1 a_2}
\dif \lambda.
\end{equation}
The term \eqref{eq:b17} converges to 
\begin{multline} \label{eq:b20}
2\pi \delta(u^{(1)}, u^{(2)})
\freqint \freqint 
\Bigl(\int_{-\infty}^{\infty} 
\psi^{(1)}(v, \lambda_1) 
\overline{\psi^{(2)}(v, -\lambda_2)} \dif v\Bigr)
\\
\tilde{\gamma}_{a_1 a_2 b_1 b_2} (u^{(1)};\lambda_1, \lambda_2, -\lambda_2)
\dif \lambda_1 \dif \lambda_2,
\end{multline}
by repeated application of the Parseval equality. 
Combining all terms \eqref{eq:b18}, \eqref{eq:b19} and \eqref{eq:b20}, 
we obtain the results of Lemma \ref{lem:b5}.
\end{proof}

\subsection{Asymptotic Normality} \label{subsec:b2}
Here, we show the asymptotic normality of the empirical spectral process $\xi_T(\psi)$ in \eqref{eq:b5}.
To this goal, we adopt the idea in \cite{dp2009}
to use the Gaussian kernel as the mollifier with the property of being rapidly decreasing.
Let $G$ be the Gaussian kernel, that is,
\[
G(x): = \frac{1}{\sqrt{2\pi}} \exp\Bigl(
-\frac{1}{2} x^2
\Bigr),
\]
and $G_b$ the mollifier 
\[
G_{\beta}(x) = \frac{1}{\beta} G\Bigl(
\frac{x}{\beta}
\Bigr),
\]
with $\beta: = \beta_T \to 0$ as $T \to \infty$.
From the convolution theorem, 
the Fourier coefficients $\hat{\psi}^{*T}$ of $\psi^{*T} := \psi * G_{\beta}$ are
\begin{equation} \label{eq:b21}
\hat{\psi}^{*T}(u, k) = 
\hat{\psi}(u, k) \hat{G}_{\beta}(k), \quad k \in \Z.
\end{equation}

\begin{rem} \label{rem:b7}
The remarkable feature of this manipulation is that
\begin{equation*}
\sum_{k \in \Z}
\sup_{u \in [0, 1]} \abs{\hat{\psi}^{*T}(u, k)}
\leq
\sum_{k \in \Z}
\sup_{u \in [0, 1]} \abs{\hat{\psi}(u, k)},
\end{equation*}
since for any fixed $k \in \Z$, 
\[
\abs{\hat{G}_{\beta}(k)} = \Babs{\exp\Bigl(
\frac{-\beta^2 k^2}{2}
\Bigr)}
\leq 
1.
\]
In addition, the following result holds.
\begin{equation} \label{eq:b22}
\sum_{k \in \Z}
\sup_{u \in [0, 1]} \abs{\hat{\psi}^{*T}(u, k)}
=
O\Bigl(\log\bigl( \beta_T^{-1} \bigr)\Bigr).
\end{equation}
If we take $\beta_T$ as $\beta_T = T^{-k}$ for any $k \geq 1$, then we have
\[
\sum_{k \in \Z}
\sup_{u \in [0, 1]} \abs{\hat{\psi}^{*T}(u, k)}
=
O(\log T).
\]  
\end{rem}

\begin{proof}[Proof of Remark \ref{rem:b7}]
For any $1 \leq i, j \leq p$, let $\psi: = \phi_{ij} \in \Psi$ as in Assumption \ref{asp:b1}.
Note that $\psi(u, \cdot)$ is a continuous function of bounded variation.
\begin{enumerate}[(i)]
\item Let $k \not = 0$.
From Jordan decomposition theorem, there exists a signed measure $g_{\psi}$ such that
\[
\hat{\psi}(u, k)
=
\freqint 
\frac{\exp(-\I k \lambda) - 1}{-\I k} 
g_{\psi}(u, \dif \lambda),
\]
which leads to
\begin{equation} \label{eq:b23}
\sup_{u \in [0, 1]} \abs{\hat{\psi}(u, k)} \leq
\frac{C}{\abs{k}} \sup_{u \in [0, 1]} V\bigl(
\psi(u, \cdot)
\bigr)
\leq
\frac{C \tau_{\infty, \rm{TV}}}{\abs{k}}.
\end{equation} 
\item Let $k = 0$. 
\begin{equation} \label{eq:b24}
\sup_{u \in [0, 1]} \abs{\hat{\psi}(u, 0)}
\leq
2\pi  \sup_{u \in [0, 1]}\sup_{\lambda \in [-\pi, \pi]} \psi(u, \lambda)
\leq
2 \pi \tau_{\infty, \infty}.
\end{equation}
\end{enumerate}
Combing \eqref{eq:b23} and \eqref{eq:b24} with the relation \eqref{eq:b21}, we obtain
\[
\sup_{u \in [0, 1]} \abs{\hat{\psi}^{*T}(u, k)} \leq C\Bigl(
1 + 
\sum_{k = 1}^{\infty} \frac{1}{\abs{k}} \exp\Bigl(
\frac{-\beta^2 k^2}{2}
\Bigr)
\Bigr)
=
O\Bigl(\log\bigl( \beta^{-1} \bigr)\Bigr).
\]
Thus, the equation \eqref{eq:b22} is shown.
\end{proof}

Next result shows that the $\xi_T(u_k)_{ab}$ converges in finite dimensional distributions for $k \geq 1$.
\begin{thm} \label{thm:b8}
Suppose Assumptions \ref{asp:a1} and \ref{asp:b1} hold.
Let $b_T \to 0$ and $T b_T \to \infty$, as $T \to \infty$.
For any $q$, and $u^{(1)}, \dots, u^{(q)} \in [0, 1]$, it holds that
\[
\big(\xi_T(u^{(1)}; \psi^{(1)})_{a_1b_1}, \cdots, \xi_T(u^{(q)}; \psi^{(q)})_{a_qb_q} \bigr)^{\top}
\dlim
\mathcal{N}\bigl(\bzero, (V_{jk})_{j, k = 1, \dots q}\bigr),
\qquad
\text{as $T \to \infty$,}
\]
where $V_{jk}$ is
\begin{multline*}
V_{jk} =
2\pi \delta(u^{(j)}, u^{(k)})\\ 
\Biggl(
\freqint \Bigl(\int_{-\infty}^{\infty} 
\psi^{(j)}(v, \lambda) 
\overline{\psi^{(k)}(v, \lambda)} \dif v\Bigr)
\fu[u^{(j)}]_{a_j a_k} \overline{\fu[u^{(j)}]}_{b_j b_k}
\dif \lambda\\
+
\freqint \Bigl(\int_{-\infty}^{\infty} 
\psi^{(j)}(v, \lambda) 
\overline{\psi^{(k)}(v, -\lambda)} \dif v\Bigr)
\fu[u^{(j)}]_{a_j b_k} \overline{\fu[u^{(j)}]}_{b_j a_k}
\dif \lambda\\
+
\freqint \freqint 
\Bigl(\int_{-\infty}^{\infty} 
\psi^{(j)}(v, \lambda_1) 
\overline{\psi^{(k)}(v, -\lambda_2)} \dif v\Bigr)
\tilde{\gamma}_{a_j a_k b_j b_k} (u^{(j)};\lambda_1, \lambda_2, -\lambda_2)
\dif \lambda_1 \dif \lambda_2
\Biggr),
\end{multline*}
where $\tilde{\gamma}$ is the fourth-order spectral density of the process.
\end{thm}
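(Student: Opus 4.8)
The plan is to prove convergence in finite-dimensional distributions by the method of cumulants, since the substantive analytic work has already been carried out in Lemma~\ref{lem:b5}. Treating the (in general complex-valued) coordinates $\xi_T(u^{(j)};\psi^{(j)})_{a_jb_j}$ through their real and imaginary parts and invoking the Cram\'er--Wold device, it suffices to show that every joint cumulant of the vector $\bigl(\xi_T(u^{(1)};\psi^{(1)})_{a_1b_1},\dots,\xi_T(u^{(q)};\psi^{(q)})_{a_qb_q}\bigr)$ converges to the corresponding cumulant of $\mathcal{N}(\bzero,(V_{jk}))$; because a (possibly degenerate) multivariate Gaussian is determined by its moments, convergence of all cumulants upgrades to convergence in distribution (see, e.g., \cite{brillinger1981}). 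The proof therefore reduces to three items: (a) the first cumulants $E\,\xi_T(u^{(j)};\psi^{(j)})_{a_jb_j}$ tend to $0$; (b) the second joint cumulants converge to $V_{jk}$; and (c) every joint cumulant of order $q\ge 3$ tends to $0$.

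First I would check that the summability hypothesis \eqref{eq:b9} required by Lemma~\ref{lem:b5}, and by the preliminary lemma bounding $|E\mathscr{A}_T(u)_{ab}-\bar{\mathscr{A}}_T(u)_{ab}|$, is available. Since by Assumption~\ref{asp:b1}(ii) each $\psi^{(i)}(u,\cdot)$ is of bounded variation with $\tau_{\infty,\mathrm{TV}}<\infty$, the estimate \eqref{eq:b23} gives $\sup_u|\hat\psi^{(i)}(u,-s)|\le C\tau_{\infty,\mathrm{TV}}/|s|$ for $s\neq 0$, hence $\sum_{|s|\le T}\sup_u|\hat\psi^{(i)}(u,-s)|=O(\log T)$; thus \eqref{eq:b9} holds with $\beta_T\asymp 1/\log T$, for which $T^{-q/2}\beta_T^{-1}\to 0$ for every $q\ge 2$ (equivalently, one may first replace each $\psi^{(i)}$ by its Gaussian-mollified version $\psi^{(i)}\ast G_{\beta_T}$ as in \cite{dp2009}, exploiting Remark~\ref{rem:b7}, the mollification error being asymptotically negligible since $\beta_T$ may decay polynomially in $T$ while the Fourier coefficients of $\psi^{(i)}\ast G_{\beta_T}$ remain summable with only $O(\log T)$ growth). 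With \eqref{eq:b9} in force: item (a) follows from the preliminary lemma, since $E\,\xi_T(u^{(j)};\psi^{(j)})_{a_jb_j}=\sqrt{Tb_T}\,O(T^{-1}b_T^{-1}\beta_T^{-1})=O\bigl((Tb_T)^{-1/2}\log T\bigr)\to 0$ under the rate conditions in force; item (c) is precisely the first conclusion of Lemma~\ref{lem:b5}, the joint cumulant of order $q\ge 3$ being $O(T^{1-q/2}b_T^{1-q/2})=O\bigl((Tb_T)^{1-q/2}\bigr)\to 0$ because $Tb_T\to\infty$ and $1-q/2<0$; and item (b) is the $q=2$ limit \eqref{eq:b10} of Lemma~\ref{lem:b5}, which applied to the pair $(\psi^{(j)},\psi^{(k)})$ at $(u^{(j)},u^{(k)})$ reproduces exactly $V_{jk}$, the factor $\delta(u^{(j)},u^{(k)})$ reflecting decorrelation across distinct rescaled times and the conjugations in the three integrals handling the complex-valued case. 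Combining (a)--(c) with the cumulant criterion completes the proof.

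The main obstacle is essentially only bookkeeping: the deep estimate --- the $O(T^{1-q/2}b_T^{1-q/2})$ decay of the higher-order joint cumulants, resting on the enumeration of indecomposable partitions of the cumulant table and the locally-stationary-to-stationary cumulant approximation of Lemma~\ref{lem:a4} --- is inherited from Lemma~\ref{lem:b5}. What remains to be handled with some care is (i) ensuring the Fourier-coefficient summability so that Lemma~\ref{lem:b5} is literally applicable, resolved by the bounded-variation bound \eqref{eq:b23} together, if needed, with the Gaussian mollifier of \cite{dp2009}; and (ii) the routine but not entirely vacuous step that convergence of all cumulants to those of a Gaussian implies weak convergence, which uses the determinacy of the Gaussian moment problem.
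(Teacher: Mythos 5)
Your argument is correct and rests on the same engine as the paper's proof --- the cumulant bounds of Lemma \ref{lem:b5} --- but it reaches that lemma by a different route. The paper's proof of Theorem \ref{thm:b8} is organized around the Gaussian mollification of \cite{dp2009}: it replaces $\psi$ by $\psi^{*T}=\psi*G_{\beta_T}$ with $\beta_T=T^{-k}$, shows $\var\bigl(\xi_T(u;\psi)_{ab}-\xi_T(u;\psi^{*T})_{ab}\bigr)=O(\beta_T/b_T)=o(1)$ so the substitution is negligible in probability, and then invokes Remark \ref{rem:b7} to get the $O(\log T)$ Fourier-coefficient summability needed for Lemma \ref{lem:b5}. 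You instead verify condition \eqref{eq:b9} directly for the raw $\psi^{(i)}$ via the bounded-variation estimate $\sup_u|\hat\psi^{(i)}(u,-s)|\le C/|s|$ (the paper's \eqref{eq:b23}), taking $\beta_T\asymp 1/\log T$, and relegate the mollifier to a fallback. Since \eqref{eq:b9} and the bias terms in Lemma \ref{lem:b5} only involve the partial sums $\sum_{|s|\le T}$, your direct verification is legitimate and slightly more economical for the finite-dimensional statement; the mollifier earns its keep mainly for the (uniform) tightness arguments elsewhere, not here. You are also more explicit than the paper about the scaffolding --- Cram\'er--Wold, vanishing of cumulants of order $\ge 3$ at rate $O\bigl((Tb_T)^{1-q/2}\bigr)$, and moment-determinacy of the Gaussian --- which the paper leaves implicit. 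One small caveat applies equally to both proofs: the centering term $E\,\xi_T=\sqrt{Tb_T}\,O(T^{-1}b_T^{-1}\beta_T^{-1})=O\bigl((Tb_T)^{-1/2}\log T\bigr)$ vanishes only if $Tb_T/(\log T)^2\to\infty$, which is not literally implied by the stated hypotheses $b_T\to 0$, $Tb_T\to\infty$ (it does follow from the bandwidth condition $b_T^{-1}=o\bigl(T(\log T)^{-6}\bigr)$ used in the main theorems); you at least flag the step, while the paper's proof does not address the mean at all.
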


\begin{proof}
First we show that
\[
\var\Bigl(
\xi_T(u ; \psi)_{ab} - \xi_T(u ; \psi^{*T})_{ab}
\Bigr)
\to
0,
\]
which, in turn, shows that 
\begin{equation} \label{eq:b25}
\xi_T(u ; \psi)_{ab}  - \xi_T(u ; \psi^{*T})_{ab} \plim 0.
\end{equation} 
As in Remark \ref{rem:b7}, let $\beta_T = T^{-k}$ for any $k \geq 1$.
Following this choice, we have $O(\beta_T/b_T) = o(1)$.

Note that
\begin{align*}
&\var\Bigl(
\xi_T(u ; \psi)_{ab} - \xi_T(u ; \psi^{*T})_{ab}
\Bigr) \\
&=
Tb_T 
\var\Bigl(
\frac{1}{2\pi T}\\
& \qquad
\sum_{k = 1}^{T}
\sum_{s = 1 - T}^{T - 1}
\bigl\{
\hat{\psi}_T(u - u_k, -s)
-
\hat{\psi}^{*T}_T(u - u_k, -s)
\bigr\}
\sum_{t \in \mathcal{T}_s}
\bx{k + t}^{(a)} \bx{k+ t + s}^{(b)}
\Bigr) \\
&\leq
b_T^{-1} 
\Bigl(\sup_{u} \tsum[s]{\infty} \abs{\hat{\psi}(u, -s)
-
\hat{\psi}^{*T}(u, -s)}
\Bigr)^2
\\
&\leq
C b_T^{-1} \tsum[s]{\infty} \frac{\abs{\exp(- s^2 \beta_T^2/2) - 1}^2}{s^2},
\end{align*}
where the last inequality follows from \eqref{eq:b21}.
Since $\abs{\exp(- s^2 \beta_T^2/2) - 1} \leq \min (1,  s^2 \beta_T^2/2)$,
the order of the last term is $O(\beta_T/b_T) = o(1)$.
Thus, \eqref{eq:b25} is shown.

Now, we only have to consider the finite distributions of $\xi_T(u ; \psi^{*T})$.
However, from Remark \ref{rem:b7}, we find that the condition \eqref{eq:b9} is satisfied
and thus the covariance matrix of $\xi_T(u ; \psi^{*T})$ can be expressed in the form of 
\eqref{eq:b10}.
Therefore, the proof is completed.
\end{proof}

\if0
\subsection{Weak Convergence of Time-varying Spectral Processes}
We elucidate the weak convergence of time-varying spectral processes $\xi_T(u) \bigl(= \xi_T(u; \psi)\bigr)$.
For the proof, we adopt the maximal inequality obtained in \cite{dp2009}.
To accommodate it with our proof for the local estimation, we adjust the scale in their result
to the following one.

\begin{lem}[\cite{dp2009}, Theorem 2.9] \label{lem:b9}
Under Assumption \ref{asp:a1}, there exist a set $B_T$ (independent of $\Psi_T$)
with $\lim_{T \to \infty} P(B_T) = 1$, a constant $L$ (independent of $\Psi_T$ and $T$),
and some positive constants $c_1$ and $c_2$ (independent of $T$),
such that for all $\eta$ satisfying
\begin{equation} \label{eq:b26}
\eta \geq 26 L \max\bigl(
\tau_{\infty, \rm{TV}}(\Psi_T), 
\tau_{\rm{TV}, \infty}(\Psi_T), 
\tau_{\rm{TV}, \rm{TV}}(\Psi_T), 
\tau_{\infty, \infty}(\Psi_T)
 \bigr)
\sqrt{\frac{b_T}{T}}(\log T)^3,
\end{equation}
and
\begin{equation} \label{eq:b27}
 \eta \geq \sqrt{b_T}
 \frac{72}{c_2^2} \int_0^{\alpha} \bigl(\log N_{\Psi_T}(s) \bigr)^2 \dif s,
 \quad
 \log N_{\Psi_T}(\alpha) := 
 \frac{c_2}{4}
 \sqrt{
 \frac{\eta}{ \sup_{\psi \in \Psi_T} \norm{\psi_T}_{\mathcal{L}_2}}
 },
\end{equation}
it holds that
\begin{equation*}
P\Bigl(
\sup_{\psi \in \Psi} \abs{\xi_T(u)} > \eta, B_T
\Bigr)
\leq
3 c_1 \exp\Bigl(
-\frac{c_2}{4} \sqrt{\frac{\eta}{\sup_{\psi \in \Psi} \norm{\psi}_{\mathcal{L}_2}}}
\Bigr).
\end{equation*}
\end{lem}

Let $l^{\infty}([0, 1] \times \Psi)$ be the set of all uniformly bounded, real-valued functions $z: [0, 1] \times \Psi \to \R$ such that
\[
\norm{z}_{[0, 1] \times \Psi} = \sup_{u \in [0, 1], \psi \in \Psi} \abs{z(u; \psi)} < \infty.
\]
\begin{thm} \label{thm:b10}
Suppose Assumptions \ref{asp:a1}, \ref{asp:b1} and \ref{asp:b2} hold.
If $b_T = o(1)$ and $b_T^{-1} = o\bigl(
T(\log T)^{-6}
\bigr)$,
then it holds that
\[
\xi_T(\, \cdot \,; \, \cdot \,)_{ab}
\rightsquigarrow
\xi(\, \cdot \, ;\, \cdot \,)_{ab}
\quad
\text{in $l^{\infty}([0, 1] \times \Psi)$},
\]
as $T \to \infty$.
Here $\xi(\, \cdot \, ;\, \cdot \,)_{ab}$ is a zero-mean Gaussian process with covariance matrices
\begin{multline*}
\cov\bigl(\xi(u^{(j)}; \psi^{(j)})_{a_1 b_1}, \xi(u^{(k)}; \psi^{(k)})_{a_2 b_2}\bigr)
=
2\pi \delta(u^{(j)}, u^{(k)})\\ 
\Biggl(
\freqint \Bigl(\int_{-\infty}^{\infty} 
\psi^{(j)}(v, \lambda) 
\overline{\psi^{(k)}(v, \lambda)} \dif v\Bigr)
\fu[u^{(j)}]_{a_1 a_2} \overline{\fu[u^{(j)}]}_{b_1 b_2}
\dif \lambda\\
+
\freqint \Bigl(\int_{-\infty}^{\infty} 
\psi^{(j)}(v, \lambda) 
\overline{\psi^{(k)}(v, -\lambda)} \dif v\Bigr)
\fu[u^{(j)}]_{a_1 b_2} \overline{\fu[u^{(j)}]}_{b_1 a_2}
\dif \lambda\\
+
\freqint \freqint 
\Bigl(\int_{-\infty}^{\infty} 
\psi^{(j)}(v, \lambda_1) 
\overline{\psi^{(k)}(v, -\lambda_2)} \dif v\Bigr)
\tilde{\gamma}_{a_1 a_2 b_1 b_2} (u^{(j)};\lambda_1, \lambda_2, -\lambda_2)
\dif \lambda_1 \dif \lambda_2
\Biggr),
\end{multline*}
where $\tilde{\gamma}$ is the fourth-order spectral density of the process.
\end{thm}

\begin{proof}
We only have to show the asymptotic tightness of 
$\xi_T(u; \psi)_{ab}$,
since the converges of 
$\bigl(\xi_T(u^{(j)}; \psi^{(j)})_{ab}\bigr)_{j = 1, \dots, q}$ in finite dimensional distributions
has been shown in Theorem \ref{thm:b8}.

To prove the asymptotic tightness, 
let us introduce a natural norm
\[
\norm{(u, \psi)}_s
=
\abs{u} + \norm{\psi}_{\mathcal{L}_2},
\]
for $(u, \psi) \in [0, 1] \times \Psi$. 
The metric $d_s$ on this space $[0, 1] \times \Psi $ can be induced by the norm $\norm{(u, \psi)}_s$.
We show the asymptotic continuity of the process $\xi_T(\,\cdot\,; \,\cdot\,)$,
i.e., for any positive $\eta$ and $\varepsilon$, there exists a positive constant $\delta$ such that
\[
\liminf_{T \to \infty} P
\Biggl(
\sup_{d_s\bigl((u^{(j)}, \psi^{(j)}), (u^{(k)}, \psi^{(k)})\bigr) < \delta}
\Babs{
\xi_T(u^{(j)} - u^{(k)}; \psi^{(j)} - \psi^{(k)})_{ab}
}
>
\eta
\Biggr)
<
\varepsilon,
\]
which is a necessary and sufficient condition.
(See Theorem 1.5.6 in \cite{vw1996}).

To this end, we apply Lemma \ref{lem:b9}, where with $b_T^{-1} = o\bigl(
T(\log T)^{-6}
\bigr)$,
$\eta$ is supposed to be any positive constant independent of $T$.
First, we evaluate the right hand side of \eqref{eq:b26} and \eqref{eq:b27}.
From Remark \ref{rem:b2}, 
the right hand side of \eqref{eq:b26} is of order $O\bigl(T^{-1/2} b_T^{-1/2} (\log T)^3 \bigr)$.
From Lemma 14 in \cite{np1987}, 
\begin{equation} \label{eq:b28}
N_{[0, 1] \times \Psi_T}(4 \ep) \leq N_{[0, 1]}(\ep) N_{\Psi_T}(\ep)
\leq
A b_T^{-V} \varepsilon^{-V + 1},
\end{equation}
where the last inequality follows from \eqref{eq:b2}.
Note that $\alpha_T:= \alpha$ depends on $T$, and thus,
\begin{align*}
\int_0^{\alpha_T} \bigl(\log N_{[0, 1] \times \Psi_T}(s)\bigr)^2 \dif s 
&=
\int_0^{\alpha_T} \bigl(-(V + 1) \log s -V \log b_T + \log A\bigr)^2 \dif s \\
&=
O\Bigl( \alpha_T (\log b_T)^2 \Bigr).
\end{align*}
Again, from \eqref{eq:b27} and the relation \eqref{eq:b28}, we find $\alpha_T$ satisfies
\begin{equation} \label{eq:b29}
-(V + 1) \log (\alpha_T) -V \log b_T + \log A = 
\frac{c_2}{4}
\sqrt{
\frac{\eta}{ \sup_{\psi \in \Psi_T} \norm{\psi_T}_{\mathcal{L}_2}}
}.
\end{equation}
Note that under Assumption \ref{asp:b2}, it holds that
\begin{align*}
\norm{\psi_T}_{\mathcal{L}_2}^2
&=
\uint \freqint \frac{1}{b_T^2} \psi\Bigl(
\frac{u}{b_T}, \lambda
\Bigr)^2
\dif \lambda
\dif u \\
&\leq
\int_0^{1} \freqint \frac{1}{b_T} \psi(v, \lambda)^2 
\dif \lambda
\dif v \\
&= O(b_T^{-1}).
\end{align*}
After arranging the terms in \eqref{eq:b29}, we see that
\[
\log (\alpha_T) = \log \Bigl(b_T^{-\frac{V}{V+1}} \Bigr) + C_1 b_T^{1/4} + C_2,
\]
which in turn shows that $\alpha_T = O(b_T^{-V/(V+1)})$.
Therefore, we have
\[
\int_0^{\alpha_T} \bigl(\log N_{[0, 1] \times \Psi_T}(s)\bigr)^2 \dif s = o(1).
\]
In summary, if $b_T^{-1} = o\bigl(T(\log T)^{-6}\bigr)$,
then the equations \eqref{eq:b26} and \eqref{eq:b27} are always satisfied for $T$ large enough.

We notice that, if $d_s\bigl((u^{(j)}, \psi^{(j)}), (u^{(k)}, \psi^{(k)})\bigr) < \delta$, then
\begin{multline*}
\norm{\psi^{(j)}(u^{(j)} - \cdot \,, \,\cdot\,) 
- 
\psi^{(k)}(u^{(k)} - \cdot\,, \,\cdot\,)}_{\mathcal{L}_2} \\
=
\uint \freqint \bigl(
\psi^{(j)}(u^{(j)} - u, \lambda) - 
\psi^{(k)}(u^{(k)} - u, \lambda)
\bigr)^2 \dif \lambda \dif u
<
C_0 \delta,
\end{multline*}
where $C_0$ is a positive constant.
Actually,
\[
\uint \freqint \psi^{(j)}(u^{(j)} - u, \lambda) \dif \lambda \dif u
=
\int_{\max(u^{(j)} - 1, 0)}^{\min(u^{(j)}, 1)} \freqint
\psi^{(j)}(v, \lambda ) \dif \lambda \dif v,
\]
and thus, without loss of generality, suppose $1< u^{(j)} < u^{(k)}$, then it holds that
\begin{align*}
&\norm{\psi^{(j)}(u^{(j)} - \cdot \,, \,\cdot\,) 
- 
\psi^{(k)}(u^{(k)} - \cdot\,, \,\cdot\,)}_{\mathcal{L}_2}\\
\leq 
&\norm{\psi^{(j)} - \psi^{(k)}}_{\mathcal{L}_2}
+ 
\int_{u^{(j)} - 1}^{u^{(k)} - 1} \freqint \psi^{(j)}(v, \lambda )^2 \dif \lambda \dif v \\
&+
2 \sup_{u, \lambda} \psi^{(k)}(u, \lambda) 
\uint \freqint \abs{\psi^{(j)} (u^{(j)} - u, \lambda) -  \psi^{(j)} (u^{(k)} - u, \lambda)}
 \dif \lambda \dif u \\
<
&C_0 \delta,
\end{align*}
where the second inequality follows from $\abs{u^{(j)} - u^{(k)}} < \delta$, and
$\psi^{(j)}$ is a function of bounded variation on a compact support, and thus a Lipschitz function.

Applying Lemma \ref{lem:b9}, we see that,
for any positive $\eta$ and $\varepsilon$,
if we take $\delta$ as small as enough, then it holds that
\begin{align*}
&\liminf_{T \to \infty} P
\Biggl(
\sup_{d_s\bigl((u^{(j)}, \psi^{(j)}), (u^{(k)}, \psi^{(k)})\bigr) < \delta}
\Babs{
\xi_T(u^{(j)} - u^{(k)}; \psi^{(j)} - \psi^{(k)})_{ab}
}
>
\eta
\Biggr) \\
\leq
&\liminf_{T \to \infty} P
\Biggl(
\sup_{\norm{\psi}_{\mathcal{L}_2} < C_0 \delta}
\Babs{
\xi_T(u^{(j)} - u^{(k)}; \psi)_{ab}
}
>
\eta,
B_n
\Biggr)
+
o(1)\\
\leq &
3 c_1 \exp \biggl\{
-\frac{c_2}{4} \sqrt{\frac{\eta}{C_0 \delta}}
\biggr\} 
<
\varepsilon,
\end{align*}
which concludes this theorem.
\end{proof}
\fi

Finally, remembering the matrix $\bp$ satisfies Assumption \ref{asp:b1},
we define $\mathscr{A}_T^{\circ}(u)$ and $\bar{\mathscr{A}}_T^{\circ}(u)$ as
\begin{align*}
\mathscr{A}_T^{\circ}(u)
&:= 
\frac{1}{T} \sumn[k]  \freqint \bp_T(u - u_k, \lambda) \bI_T(u_k, \lambda) \dif\lambda, \\
\bar{\mathscr{A}}_T^{\circ}(u)&:=
\frac{1}{T}  \sumn[k] \freqint \bp_T(u - u_k, \lambda) \fu[u_k] \dif \lambda,
\end{align*}
and let $\zeta_T(u)$ be 
\begin{equation} \label{eq:b30}
\zeta_T(u) = \sqrt{Tb_T} \,\, \tr
\bigl(\mathscr{A}_T^{\circ}(u) - \bar{\mathscr{A}}_T^{\circ}(u)\bigr).
\end{equation}

\begin{cor} \label{cor:b11}
Suppose Assumptions \ref{asp:a1}, \ref{asp:b1} and \ref{asp:b2} hold.
If $b_T = o(1)$ and $b_T^{-1} = o\bigl(
T(\log T)^{-6}
\bigr)$,
then it holds that
\[
\big(\zeta_T(u^{(1)}), \cdots, \zeta_T(u^{(q)}) \bigr)^{\top}
\dlim
\mathcal{N}\bigl(\bzero, (\tilde{V}_{jk})_{j, k = 1, \dots q}\bigr),
\qquad
\text{as $T \to \infty$,}
\]
where $\tilde{V}_{jk}$ is given by
\begin{multline} \label{eq:b31}
\tilde{V}_{jk}
=
4\pi \delta(u^{(j)}, u^{(k)})
\Biggl(
\freqint 
\tr \Bigl(\int_{-\infty}^{\infty} 
\fu[u^{(j)}] \bp(v, \lambda) \fu[u^{(j)}] \bp(v, \lambda)
\dif v\Bigr)\dif \lambda\\
+
\frac{1}{2}
\sum_{r, t, u ,v = 1}^p
\freqint \freqint 
\Bigl(\int_{-\infty}^{\infty} 
\bp_{rt}(v, \lambda_1) 
\bp_{uv}(v, \lambda_2)\\
\tilde{\gamma}_{rtuv} (u^{(j)}; -\lambda_1, \lambda_2, -\lambda_2)
\dif v\Bigr) \dif \lambda_1 \dif \lambda_2
\Biggr),
\end{multline}
where $\tilde{\gamma}$ is the fourth-order spectral density of the process.
\end{cor}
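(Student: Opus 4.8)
The plan is to reduce the trace statement to the scalar empirical-spectral-process central limit theorem already proved in Theorem~\ref{thm:b8}, and then to repackage the resulting index sum into the form \eqref{eq:b31}. First I would write $\zeta_T(u)$ in coordinates. Since
\[
\tr\bigl(\bp_T(u-u_k,\lambda)\,(\bI_T(u_k,\lambda)-\bm{f}(u_k,\lambda))\bigr)
=\sum_{a,b=1}^{p}(\bp_T)_{ab}(u-u_k,\lambda)\,(\bI_T(u_k,\lambda)-\bm{f}(u_k,\lambda))_{ba},
\]
a comparison with \eqref{eq:b3}--\eqref{eq:b5} yields the exact identity
\[
\zeta_T(u)=\sum_{a,b=1}^{p}\xi_T\bigl(u;\phi_{ab}\bigr)_{ba},
\]
a \emph{fixed finite} linear combination of scalar empirical spectral processes whose weight functions are the entries of $\bp$. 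Since $\bp\in\Phi$, each $\phi_{ab}$ lies in $\Psi$ and satisfies Assumption~\ref{asp:b1} (Assumption~\ref{asp:b2} is part of the corollary's hypotheses), so Theorem~\ref{thm:b8} is applicable to each of them.

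Next I would apply Theorem~\ref{thm:b8} to the finite collection $\{\xi_T(u^{(j)};\phi_{ab})_{ba}:1\le j\le q,\ 1\le a,b\le p\}$: under the stated bandwidth conditions (which imply $b_T\to0$ and $Tb_T\to\infty$) this vector --- equivalently the real vector of its real and imaginary parts --- converges jointly in distribution to a centered Gaussian, with covariances given by the $V_{jk}$-formula of Theorem~\ref{thm:b8}. Because $\bp$ and $\bI_T$ are Hermitian, each $\zeta_T(u^{(j)})$ is real-valued and is a fixed $\mathbb{R}$-linear functional of that converging Gaussian vector; hence by the continuous mapping theorem (Cram\'{e}r--Wold device) $(\zeta_T(u^{(1)}),\dots,\zeta_T(u^{(q)}))^{\top}$ converges to a centered Gaussian whose covariance is the corresponding linear combination of the limits supplied by Theorem~\ref{thm:b8}. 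All the genuine probabilistic content --- negligibility of the locally-stationary bias, the mollifier argument, and the finite-dimensional CLT without tightness --- is already contained in Section~\ref{app:b}, in particular in Lemma~\ref{lem:b5} and Theorem~\ref{thm:b8}.

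It then remains only to verify that
\[
\lim_{T\to\infty}\cov\bigl(\zeta_T(u^{(j)}),\zeta_T(u^{(k)})\bigr)
=\sum_{a,b=1}^{p}\sum_{c,d=1}^{p}V_{(ab),(cd)}
\]
equals $\tilde V_{jk}$ in \eqref{eq:b31}, where $V_{(ab),(cd)}$ is the expression $V_{jk}$ of Theorem~\ref{thm:b8} with $(a_j,b_j,\psi^{(j)})$ replaced by $(b,a,\phi_{ab})$ and $(a_k,b_k,\psi^{(k)})$ by $(d,c,\phi_{cd})$. Here one uses $\overline{\phi_{cd}}=\phi_{dc}$ and $\overline{\bm{f}(u,\lambda)_{ac}}=\bm{f}(u,\lambda)_{ca}$ together with the real-process symmetry $\bm{f}(u,-\lambda)=\bm{f}(u,\lambda)^{\top}$ and the change of variables $\lambda\mapsto-\lambda$ on $[-\pi,\pi]$: the two quadratic terms of $V_{jk}$, summed over the matrix indices and re-associated via identities such as $\sum_{a}\bm{f}(u,\lambda)_{ca}\,\phi_{ab}(v,\lambda)=(\bm{f}(u,\lambda)\bp(v,\lambda))_{cb}$, assemble into the single trace term $\freqint\bigl(\int_{-\infty}^{\infty}\tr(\bm{f}(u^{(j)},\lambda)\bp(v,\lambda)\bm{f}(u^{(j)},\lambda)\bp(v,\lambda))\dif v\bigr)\dif\lambda$; they coincide after the $\lambda\mapsto-\lambda$ symmetrisation because $\zeta_T$ is real, which is precisely why the coefficient $2\pi$ of Theorem~\ref{thm:b8} becomes $4\pi$ in \eqref{eq:b31}. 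The fourth-order-cumulant term, after relabelling $(b,d,a,c)\mapsto(r,t,u,v)$, the substitution $\lambda_1\mapsto-\lambda_1$, and the permutation symmetry of $\tilde\gamma$, becomes the $\tfrac12\sum_{r,t,u,v}$ double integral of \eqref{eq:b31}.

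I expect the main obstacle to be nothing deep but purely this last, somewhat tedious, combinatorial bookkeeping: correctly tracking which index of $\bm{f}$ and of $\tilde\gamma$ carries which matrix label, and checking that the two Wick-pairing terms genuinely produce the same trace expression after the $\lambda\mapsto-\lambda$ symmetrisation so that the factor is exactly $4\pi$ and the cumulant contribution exactly $2\pi\cdot\tfrac12$. Everything else is an immediate consequence of Theorem~\ref{thm:b8} and the continuous mapping theorem.
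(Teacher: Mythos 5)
Your proposal is correct and follows essentially the same route as the paper: the paper's proof likewise observes that $\zeta_T(u)$ is a finite linear combination of the scalar empirical spectral processes $\xi_T(u;\phi_{ab})$, invokes the finite-dimensional CLT of Theorem \ref{thm:b8}, and defers the covariance bookkeeping to the analogous computation in Lemma A.3.3 of Hosoya--Taniguchi (1982), which is exactly the index-summation and $\lambda\mapsto-\lambda$ symmetrisation you sketch. Your write-up simply makes explicit the steps the paper leaves to that reference.
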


\begin{proof}
From the definition of $\zeta_T(u)$ in \eqref{eq:b30},
we see that $\zeta(u)$ is a linear combination of the processes $\xi_T(u)$ in \eqref{eq:b5}.
With a similar computation to the latter part in Lemma A.3.3. in \cite{ht1982},
we obtain \eqref{eq:b31}.
\end{proof}


\if0
\subsection{the frequency-wise measures of linear dependence, causality, instantaneous feedback}
(note the process $\eta_{t, T}$, matrix-valued function $\Gamma_{t, T}$, ...)

As an estimator of $f(u, \lambda)$, we use the nonparametric estimator of kernel type defined by
\[
\hat{f}(u, \lambda) =
\int_{-\pi}^{\pi} W_T(\lambda - \mu) I_N(u, \mu) \ d \mu
\]

weight function depend on $T$: 
\[
W_T(\omega) = M \sum_{\nu = -\infty}^{\infty} W(M (\omega + 2 \pi \nu))
\]

localized periodogram matrix over the segment $\{ [uT] - N/2 + 1, [uT] + N/2 \}$: 
\[
I_N(u, \mu) = \frac{1}{2 \pi H_{2, N}} 
\left\{ \sum_{s = 1}^{N} h(\frac{s}{N}) X_{[uT] - N/2 + s, T} \exp(i \lambda s) \right\}
\left\{ \sum_{r = 1}^{N} h(\frac{s}{N}) X_{[uT] - N/2 + r, T} \exp(i \lambda r)\right\}^*
\]
$h: [0, 1] \to \mathbb{R}$: data taper, 
$H_{2, N} = \sum_{s=1}^{N} h(s/N)^2$
\fi

\if0
\subsection{Example for \eqref{eq:3.21}} \label{subsec:6.2}
We illustrate \eqref{eq:3.21} in the next example for local linear dependence ${\rm GC}^{(1, 2)}$.

\begin{eg}[Local linear dependence] \label{eg:3}
Using the equation (7.6) in \cite{hosoya1991}, page 444, we have
\[
\frac{1}{2\pi} \freqint \log \abs{\bm{f}^Z(u, \lambda)} \dif \lambda
=
\frac{1}{2\pi} \freqint \log \abs{\bm{f}(u, \lambda)} \dif \lambda,
\]
from the point of view of the one-step-ahead prediction errors of both processes.
Now, local linear dependence is 
\begin{align*}
{\rm GC}^{(1, 2)}(u) 
&= 
\frac{1}{2 \pi} \int_{-\pi}^{\pi} 
\log \frac{\abs{\bm{f}(u, \lambda)_{11}} \abs{\bm{f}(u, \lambda)_{22}}}{\abs{\bm{f}^{Z}(u, \lambda)}} 
\dif \lambda \\
&=
\frac{1}{2 \pi} \int_{-\pi}^{\pi} 
\log \frac{\abs{\bm{f}(u, \lambda)_{11}} \abs{\bm{f}(u, \lambda)_{22}}}{\abs{\bm{f}(u, \lambda)}} 
\dif \lambda \\
&=
\frac{1}{2 \pi} \int_{-\pi}^{\pi} 
\log \frac{ \abs{\bm{f}(u, \lambda)_{22}}}
{\abs{\bm{f}(u, \lambda)_{22} - \bm{f}(u, \lambda)_{21} \bm{f}(u, \lambda)_{11}^{-1} \bm{f}(u, \lambda)_{12}}} 
\dif \lambda \\
&=
\frac{1}{2 \pi} \int_{-\pi}^{\pi} 
-\log \abs{I_M -  \bm{f}(u, \lambda)_{21} \bm{f}(u, \lambda)_{11}^{-1} \bm{f}(u, \lambda)_{12} \bm{f}(u, \lambda)_{22}^{-1}}
\dif \lambda.
\end{align*}
For any fixed $u \in [0, 1]$, if we replace $\bm{f}(u, \lambda)$ with $\bm{f}_{\bm{\theta}}(\lambda)$, then it holds that
\begin{align}
&\frac{\partial}{\partial \theta_i} {\rm FGC}^{(1, 2)}_{\btheta}(\lambda) \notag\\
=& - \tr \left\{ \left( I_{M} 
- \bm{f}_{\bm{\theta}}(\lambda)_{21}
\bm{f}_{\bm{\theta}}(\lambda)_{11}^{-1} 
\bm{f}_{\bm{\theta}}(\lambda)_{12}
\bm{f}_{\bm{\theta}}(\lambda)_{22}^{-1} \right)^{-1} \right. \notag\\
&\qquad \qquad \left.
\frac{\partial}{\partial \theta_i} \left( I_{M} 
- \bm{f}_{\bm{\theta}}(\lambda)_{21}
\bm{f}_{\bm{\theta}}(\lambda)_{11}^{-1} 
\bm{f}_{\bm{\theta}}(\lambda)_{12}
\bm{f}_{\bm{\theta}}(\lambda)_{22}^{-1} \right)
\right\} \notag\\
=& - \tr \left\{ \left( I_{M} 
- \bm{f}_{\bm{\theta}}(\lambda)_{21}
\bm{f}_{\bm{\theta}}(\lambda)_{11}^{-1} 
\bm{f}_{\bm{\theta}}(\lambda)_{12}
\bm{f}_{\bm{\theta}}(\lambda)_{22}^{-1} \right)^{-1} \right. \label{eq:3.22}\\
&\qquad \qquad 
\left[ - \left( \frac{\partial}{\partial \theta_i} \bm{f}_{\bm{\theta}}(\lambda)_{11} \right) 
\bm{f}_{\bm{\theta}}(\lambda)_{11}^{-1} 
\bm{f}_{\bm{\theta}}(\lambda)_{12}
\bm{f}_{\bm{\theta}}(\lambda)_{22}^{-1} \right. \notag\\
&\qquad \qquad 
+ \bm{f}_{\bm{\theta}}(\lambda)_{21}
\bm{f}_{\bm{\theta}}(\lambda)_{11}^{-1} 
\left( \frac{\partial}{\partial \theta_i} \bm{f}_{\bm{\theta}}(\lambda)_{11}  \right)
\bm{f}_{\bm{\theta}}(\lambda)_{11}^{-1} 
\bm{f}_{\bm{\theta}}(\lambda)_{12}
\bm{f}_{\bm{\theta}}(\lambda)_{22}^{-1}  \notag\\
&\qquad \qquad 
- \bm{f}_{\bm{\theta}}(\lambda)_{21}
\bm{f}_{\bm{\theta}}(\lambda)_{11}^{-1} 
\left( \frac{\partial}{\partial \theta_i} \bm{f}_{\bm{\theta}}(\lambda)_{12} \right) 
\bm{f}_{\bm{\theta}}(\lambda)_{22}^{-1} \notag\\
&\qquad \qquad \left. \left. 
+ \bm{f}_{\bm{\theta}}(\lambda)_{21}
\bm{f}_{\bm{\theta}}(\lambda)_{11}^{-1} 
\bm{f}_{\bm{\theta}}(\lambda)_{12}
\bm{f}_{\bm{\theta}}(\lambda)_{22}^{-1} 
\left( \frac{\partial}{\partial \theta_i} \bm{f}_{\bm{\theta}}(\lambda)_{22} \right)
\bm{f}_{\bm{\theta}}(\lambda)_{22}^{-1}
\right]
\right\}, \notag
\end{align} 
where we used the following differentiation formulae for parametrized matrices for the derivation:
\begin{align*}
\frac{\partial}{\partial \btheta} 
\log \abs{A_{\btheta}} &= \tr\left(A_{\btheta}^{-1} \frac{\partial}{\partial\btheta} A_{\btheta}\right), \\
\frac{\partial}{\partial\btheta} A_{\btheta}^{-1} &= 
- A_{\btheta}^{-1} \left(\frac{\partial}{\partial\theta} A_{\btheta} \right) A_{\btheta}^{-1}. 
\end{align*}
\end{eg}
\fi
\end{document}